\def\submission{0}
\ifnum\submission=0
\documentclass[11pt,letterpaper]{article}
\usepackage[margin=1in]{geometry}
\else
\documentclass{llncs}
\fi
\pagestyle{plain}
\usepackage[utf8]{inputenc}
\usepackage[colorlinks=true,allcolors=blue]{hyperref}
\usepackage{amsmath}
\usepackage{amsfonts}
\ifnum\submission=0
\usepackage{amsthm}
\fi
\usepackage{braket}
\usepackage{authblk}
\usepackage{comment}
\usepackage[colorlinks=true,allcolors=blue]{hyperref}
\usepackage{breakcites}
\usepackage{cleveref}
\usepackage[sanserif,basic]{complexity}
\usepackage{graphicx}

\ifnum\submission=0
\newtheorem{theorem}{Theorem}[section]
\newtheorem{conjecture}[theorem]{Conjecture}
\newtheorem{lemma}[theorem]{Lemma}
\newtheorem{claim}[theorem]{Claim}

\newtheorem{corollary}[theorem]{Corollary}
\newtheorem{definition}[theorem]{Definition}

\newtheorem{remark}{Remark}

\fi
\DeclareMathOperator*{\Ex}{\mathbb{E}}

\crefname{claim}{Claim}{Claims}

\ifnum\submission=0
\title{Verifiable Quantum Advantage without Structure}
\author[1]{Takashi Yamakawa\thanks{This work was done in part while the author was visiting Princeton University.}}
\author[2]{Mark Zhandry}
\affil[1]{NTT Social Informatics Laboratories}
\affil[2]{NTT Research}

\else
\title{Toward Classically Verifiable Quantum Advantage from Random Oracles}
\author{Takashi Yamakawa\inst{1}\thanks{This work was done in part while the author was visiting Princeton University.} \and
Mark Zhandry\inst{2,3}}
\authorrunning{T. Yamakawa et al.}
\institute{
NTT Social Informatics Laboratories, Tokyo, Japan 
\email{takashi.yamakawa.ga@hco.ntt.co.jp}
\and
Princeton University, Princeton, USA
\email{mzhandry@princeton.edu}
\and
NTT Research, Palo Alto, USA
}
\fi

\begin{document}
\maketitle
\ifnum\submission=1
\vspace{-7mm}
\fi

\begin{abstract}
 We show the following hold, unconditionally unless otherwise stated, relative to a random oracle: 
\begin{itemize}
    \item There are $\NP$ \emph{search} problems solvable by 
    quantum polynomial-time machines but not classical probabilistic polynomial-time machines.
    \item There exist functions that are one-way, and even collision resistant, against classical adversaries but are easily inverted quantumly. Similar counterexamples exist for digital signatures and CPA-secure public key encryption (the latter requiring the assumption of a classically CPA-secure encryption scheme). Interestingly, the counterexample does not necessarily extend to the case of other cryptographic objects such as PRGs.
    \item There are unconditional publicly verifiable proofs of quantumness with the minimal rounds of interaction: for uniform adversaries, the proofs are non-interactive, whereas for non-uniform adversaries the proofs are two message public coin.
    \item Our results do not appear to contradict the Aaronson-Ambanis conjecture. Assuming this conjecture, there exist publicly verifiable certifiable randomness, again with the minimal rounds of interaction.
\end{itemize}
By replacing the random oracle with a concrete cryptographic hash function such as SHA2, we obtain plausible Minicrypt instantiations of the above results. Previous analogous results all required substantial structure, either in terms of highly structured oracles and/or algebraic assumptions in Cryptomania and beyond. 
\end{abstract}
\thispagestyle{empty}
\clearpage 

\newcommand{\markz}[1]{{\color{blue}(\textbf{Mark}: #1)}}
\newcommand{\takashi}[1]{{\color{red}(\textbf{Takashi}: #1)}}
\newcommand{\revise}[1]{{\color{red}#1}}

\def \sample { \overset{\hspace{0.1em}\mathsf{\scriptscriptstyle\$}}{\leftarrow} }
\newcommand{\ra}{\rightarrow}
\newcommand{\la}{\leftarrow}
\newcommand{\negl}{\mathsf{negl}}
\newcommand{\secpar}{\lambda}
\newcommand{\A}{\mathcal{A}}
\newcommand{\B}{\mathcal{B}}
\newcommand{\siml}{\mathcal{S}}
\newcommand{\ora}{\mathcal{O}}
\newcommand{\bit}{\{0,1\}}
\newcommand{\td}{\mathsf{td}}
\newcommand{\defeq}{:=}
\newcommand{\oracle}{\mathcal{O}}
\newcommand{\enc}{\mathsf{Enc}}
\newcommand{\dec}{\mathsf{Dec}}
\newcommand{\calI}{\mathcal{I}}
\newcommand{\calS}{\mathcal{S}}
\newcommand{\calX}{\mathcal{X}}
\newcommand{\calY}{\mathcal{Y}}
\newcommand{\calZ}{\mathcal{Z}}
\newcommand{\calR}{\mathcal{R}}
\newcommand{\vecr}{{\mathbf{r}}}
\newcommand{\vecx}{\mathbf{x}}
\newcommand{\vecy}{\mathbf{y}}
\newcommand{\vecz}{\mathbf{z}}
\newcommand{\matA}{{\mathbf{A}}}
\newcommand{\matB}{{\mathbf{B}}}
\newcommand{\matD}{{\mathbf{D}}}
\newcommand{\reprogram}{\mathsf{Reprogram}}
\newcommand{\Func}{\mathsf{Func}}
\newcommand{\TD}{\mathsf{TD}}
\newcommand{\Samp}{\mathsf{Samp}}
\newcommand{\experiment}{\mathsf{Exp}}
\newcommand{\semiconst}{\mathsf{SC}}
\newcommand{\perm}{\mathsf{Perm}}
\newcommand{\FF}{\mathbb{F}}
\newcommand{\vecv}{\mathbf{v}}
\newcommand{\vu}{\mathbf{u}}
\newcommand{\vx}{\mathbf{x}}
\newcommand{\vy}{\mathbf{y}}
\newcommand{\vz}{\mathbf{z}}
\newcommand{\ve}{\mathbf{e}}
\newcommand{\vc}{\mathbf{c}}
\newcommand{\vm}{\mathbf{m}}
\newcommand{\mG}{\mathbf{G}}
\newcommand{\mH}{\mathbf{H}}
\newcommand{\valpha}{\boldsymbol \alpha}
\newcommand{\GRS}{\mathrm{GRS}}
\newcommand{\GRSDecode}{\mathrm{GRSDecode}}
\newcommand{\GRSDecodeError}{\mathrm{GRSDecodeError}}
\newcommand{\GRSListDecode}{\mathrm{GRSListDecode}}
\newcommand{\on}{\mathsf{on}}
\newcommand{\off}{\mathsf{off}}
\newcommand{\hw}{\mathsf{hw}}
\newcommand{\HW}{\mathcal{H}\mathcal{W}}
\newcommand{\HWl}{\HW_{\leq \alpha n }}
\newcommand{\HWg}{\HW_{> \alpha n }}
\newcommand{\etag}{\eta_{1}}
\newcommand{\etale}{\eta_{0}}
\newcommand{\bad}{\mathsf{BAD}}
\newcommand{\good}{\mathsf{GOOD}}
\newcommand{\Tr}{\mathrm{Tr}}
\newcommand{\mfol}{(m)}
\newcommand{\decode}{\mathsf{Decode}}
\newcommand{\encode}{\mathsf{Encode}}
\newcommand{\RS}{\mathrm{RS}}
\newcommand{\vzero}{\mathbf{0}}
\newcommand{\keylength}{{\ell_\mathsf{key}}}
\newcommand{\inlength}{{\ell_\mathsf{in}}}
\newcommand{\outlength}{{\ell_\mathsf{out}}}
\newcommand{\pilength}{{\ell_{\pi}}}
\newcommand{\accept}{\mathsf{acc}}
\newcommand{\poq}{\mathsf{poq}}
\newcommand{\concat}{||}
\newcommand{\Approx}{\mathsf{Approx}}
\newcommand{\gooderrors}{\mathcal{G}}
\newcommand{\baderrors}{\mathcal{B}}
\newcommand{\dist}{\mathcal{D}}
\newcommand{\bardist}{\widetilde{\mathcal{D}}}
\newcommand{\bardistprime}{\widetilde{\mathcal{D}'}}
\newcommand{\hashset}{\widetilde{\mathcal{H}}}
\newcommand{\Col}{\mathrm{Col}}

\newcommand{\poqsetup}{\mathsf{PoQRO}.\mathsf{Setup}}
\newcommand{\poqprove}{\mathsf{PoQRO}.\mathsf{Prove}}
\newcommand{\poqverify}{\mathsf{PoQRO}.\mathsf{Verify}}
\newcommand{\pk}{\mathsf{pk}}
\newcommand{\sk}{\mathsf{sk}}
\newcommand{\gen}{\mathsf{Gen}}
\newcommand{\prove}{\mathsf{Prove}}
\newcommand{\verify}{\mathsf{Verify}}
\newcommand{\prob}{\mathsf{prob}}
\newcommand{\sol}{\mathsf{sol}}
\newcommand{\QFT}{\mathsf{QFT}}
\newcommand{\learner}{\mathcal{L}}
\newcommand{\efficientlearner}{\widetilde{\mathcal{L}}}
\newcommand{\Lout}{L_{\mathsf{out}}}

\newcommand{\ecrhsetup}{\mathsf{ECRH}.\mathsf{Setup}}
\newcommand{\ecrhgen}{\mathsf{ECRH}.\mathsf{Gen}}
\newcommand{\ecrheval}{\mathsf{ECRH}.\mathsf{Eval}}
\newcommand{\ecrhequiv}{\mathsf{ECRH}.\mathsf{Equiv}}
\newcommand{\crs}{\mathsf{crs}}

\newcommand{\sigkeygen}{\mathsf{Sig}.\mathsf{KeyGen}}
\newcommand{\sigsign}{\mathsf{Sig}.\mathsf{Sign}}
\newcommand{\sigverify}{\mathsf{Sig}.\mathsf{Verify}}
\newcommand{\vk}{\mathsf{vk}}
\newcommand{\sigk}{\mathsf{sigk}}

\newcommand{\pkekeygen}{\mathsf{PKE}.\mathsf{KeyGen}}
\newcommand{\pkeenc}{\mathsf{PKE}.\mathsf{Enc}}
\newcommand{\pkedec}{\mathsf{PKE}.\mathsf{
Dec}}
\newcommand{\ek}{\mathsf{ek}}
\newcommand{\dk}{\mathsf{dk}}
\newcommand{\ct}{\mathsf{ct}}
\newcommand{\st}{\mathsf{st}}

\newcommand{\win}{\mathsf{win}}
\newcommand{\chal}{\mathcal{C}}
\newcommand{\inp}{\mathsf{inp}}

\tableofcontents
\thispagestyle{empty}
\clearpage
\pagenumbering{arabic}

\section{Introduction}

{\it Can $\NP$ search problems have a super-polynomial speed-up on quantum computers?} This is one of the oldest and most important questions in quantum complexity.

The first proposals for such quantum advantage were relative to highly structured oracles. Examples include Simon's oracle~\cite{Simon97}, or more generally periodic oracles, as well as the Bernstein–Vazirani oracle~\cite{STOC:BerVaz93} and welded trees~\cite{STOC:CCDFGS03}.

The first non-oracular quantum advantage for $\NP$ problems is due to Shor's famous algorithm for factoring integers and computing discrete logarithms~\cite{FOCS:Shor94}. Since Shor's algorithm, other non-oracular $\NP$ problems with quantum advantage include solving Pell's equation~\cite{STOC:Hallgren02} and matrix group membership~\cite{STOC:BabBeaSer09}. While the technical details of all these examples are very different, these problems can all be seen as non-oracular instantiations of \emph{periodic} oracles.

While the above non-oracular problems are certainly easy on a quantum computer, the classical hardness can only be conjectured since, in particular, the classical hardness would imply $\P\neq\NP$, or an analogous statement if one considers probabilistic algorithms. The problem is that, when instantiating an oracle with real-world computational tasks, non-black-box algorithms may be available that render the problem classically easy, despite the oracle problem being hard. For example, index calculus methods~\cite{Adleman79} yield sub-exponential time classical attacks for factoring and discrete logarithms, despite black box period-finding being classically exponentially hard.

To make matters worse, for the known $\NP$ search problems with plausible quantum advantage, the classical hardness is widely believed to be a much stronger assumption than $\P\neq\NP$, since the problems have significant algebraic structure and are not believed to be $\NP$-complete. In particular, all $\NP$ search problems we are aware of yielding a super-polynomial quantum advantage rely on \emph{Cryptomania} tools~\cite{Impagliazzo95}, in the sense that their classical hardness can be used to build public key encryption.\footnote{Matrix group membership includes discrete logarithms as a special case. For a public key system based on Pell's equations, see~\cite{EPRINT:Padhye06}.} This puts the assumptions needed for an $\NP$ quantum advantage quite high in the assumption hierarchy.

\paragraph{Quantum speed-ups and structure.} The above tasks demonstrating speed-ups, both oracular and non-oracular, all have one thing in common: significant ``structure.'' It is natural to wonder whether such structure is necessary. In the non-oracular setting, a natural interpretation of this question could be if Minicrypt assumptions---those that give symmetric key but not public key cryptography---can be used to give a quantum advantage. Minicrypt assumptions, such as the one-wayness of SHA2, lack the algebraic structure needed in typical super-polynomial quantum speed-ups. In the oracle setting, this could mean, for example, proving unconditional quantum advantage relative to a uniformly \emph{random} oracle, which is generally seen as beeing structure-less.

Prior work on this topic could be interpreted as negative. As observed above, all non-oracular $\NP$ problems demonstrating quantum advantage imply, or are closely related to problems that imply, public key cryptography. In the random oracle setting, the evidence is even stronger. The most natural problems to reason about---one-wayness and collision resistance of the random oracle, and generalizations---provably only have a polynomial quantum advantage~\cite{BBBV97,AS04,Yuen14,Zhandry15}. Additional evidence is given by Aaronson and Ambanis~\cite{AA14}, who build on work of Beals et al.~\cite{FOCS:BBCMW98}. They consider the following conjecture, dating back to at least 1999:

\begin{quote}\begin{conjecture}[Paraphrased from \cite{AA14}]\label{conj:aa14}Let $Q$ be a quantum algorithm with \emph{Boolean} output that makes $T$ queries to a random oracle $\oracle$, and let $\epsilon,\delta>0$. Then there exists a deterministic classical algorithm $C$ that makes $\poly(T,1/\epsilon,1/\delta)$ queries, such that 
\[\Pr_\oracle\left[\;\left|\;C^\oracle()-\Pr[Q^\oracle()=1]\;\right|\leq \epsilon\;\right]\geq 1-\delta\enspace ,\]
where the inner probability is over the randomness of $Q$.
\end{conjecture}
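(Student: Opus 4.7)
The plan is to reduce the conjecture to an influence-type statement about bounded low-degree polynomials, following the framework of Beals, Buhrman, Cleve, Mosca, and de Wolf. Concretely, their polynomial method shows that for any $T$-query quantum algorithm $Q$, the acceptance probability $p(\vec{x}) \defeq \Pr[Q^\oracle()=1]$ is, when $\oracle$ is identified with its truth table $\vec{x}$, a multilinear polynomial of degree at most $2T$ taking values in $[0,1]$. Conjecture~\ref{conj:aa14} then reduces to the following query-complexity statement: for every bounded degree-$d$ multilinear polynomial $p \colon \bit^n \to [0,1]$ and every $\epsilon,\delta>0$, there is a deterministic classical algorithm making $\mathrm{poly}(d,1/\epsilon,1/\delta)$ queries to $\vec{x}$ whose output $\epsilon$-approximates $p(\vec{x})$ on a $(1-\delta)$-fraction of inputs.

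Next, the classical algorithm $C$ I would design is adaptive and greedy, driven by the Fourier-analytic influence of the remaining restricted polynomial. Starting from $p$, at each step $C$ identifies (non-constructively, by enumeration over the polynomial's coefficients, which are not themselves queried) the coordinate $i^\ast$ maximizing the conditional influence $\mathrm{Inf}_{i^\ast}[p] = \mathbb{E}_{\vec{x}}\left[\mathrm{Var}_{x_{i^\ast}}[p(\vec{x})]\right]$ on the subcube consistent with prior queries, and queries that bit. After each query, $p$ restricts to a multilinear polynomial of degree at most $d$ and the total variance drops by the influence of the queried variable. The procedure halts once the residual variance falls below $\epsilon^2\delta$, at which point outputting the conditional expectation $\mathbb{E}[\,p \mid \text{answers so far}\,]$ gives an $\epsilon$-approximation with probability at least $1-\delta$ by Chebyshev.

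The main obstacle—and this is precisely why the statement is only conjectured—is the supporting combinatorial inequality: every multilinear polynomial $p \colon \bit^n \to [0,1]$ of degree $d$ and variance $\sigma^2$ must admit a coordinate whose influence is at least $\mathrm{poly}(\sigma^2, 1/d)$. Without such a lower bound, the greedy decomposition above cannot be guaranteed to terminate in $\mathrm{poly}(d,1/\epsilon,1/\delta)$ steps, because a single round of queries might barely reduce variance. This inequality is the central unresolved problem flagged by Aaronson and Ambainis and is a strict strengthening of classical results known only for symmetric or block-multilinear polynomials. Any serious attack therefore needs either \emph{(i)} a Fourier-analytic proof of the influence inequality that exploits the degree-plus-boundedness combination more tightly than standard $L_2$--$L_\infty$ arguments, \emph{(ii)} a substitute potential function (e.g., based on entropy or higher moments) whose per-query decrease can be controlled without routing through a single-variable influence bound, or \emph{(iii)} a more direct quantum-to-classical simulator bypassing the polynomial representation—perhaps via classical random walks or low-rank tensor approximations of the evolved amplitude vector. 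Each of these routes is presently beyond known techniques, which is why the paper adopts the statement as an assumption rather than attempting a proof.
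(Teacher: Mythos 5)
You have correctly recognized that this statement is a conjecture, not a theorem: the paper states it as Conjecture~\ref{conj:aa14} (attributed to Aaronson and Ambainis) and uses it purely as an \emph{assumption} in \Cref{sec:PoR}, never attempting a proof. There is therefore no ``paper's own proof'' to compare against. Your meta-analysis is faithful to the Aaronson--Ambainis framework: the reduction via the polynomial method to a bounded multilinear polynomial of degree at most $2T$, the greedy restriction scheme that repeatedly queries a maximally influential variable to drive down variance, and the identification of the missing ingredient---a lower bound of the form $\max_i \mathrm{Inf}_i[p] \ge \mathrm{poly}(\mathrm{Var}[p], 1/d)$ for bounded degree-$d$ multilinear $p$---as precisely the open inequality that would make the simulation terminate in polynomially many queries. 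Your three suggested avenues of attack are reasonable commentary but, as you acknowledge, none closes the gap with current technology, which is exactly why the paper (and Aaronson--Ambainis before it) leaves the statement as a conjecture. In short, your ``proof'' is really a correct and well-calibrated explanation of why no proof exists, which is the honest answer here.
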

\end{quote}
Aaronson and Ambanis give some evidence for Conjecture~\ref{conj:aa14}, by reducing it to a plausible \emph{mathematical} conjecture closely related to known existing results. If Conjecture~\ref{conj:aa14} is true, any quantum \emph{decision} algorithm $Q$ making queries to a random oracle can be simulated classically with only polynomially-more queries. 

Note that the conjectured classical simulator may be \emph{computationally} inefficient, and indeed we would expect it to be if, say, $Q$ ignored its oracle and just factored integers. But for any particular algorithm $Q$, proving computational inefficiency amounts to an unconditional hardness result, which is beyond the reach of current complexity theory. Thus, Conjecture~\ref{conj:aa14}, if true, essentially shows that random oracles are equivalent to the non-oracular world with respect to $\NP$ decision problems, and cannot be used to provide provable quantum advantage for such problems.

\subsection{Our Results} In this work, we make progress toward justifying super-polynomial quantum advantage for $\NP$ problems, under less structured oracles or milder computational assumptions. We show, perhaps surprisingly, that for certain \emph{search} problems in $\NP$, random oracles do in fact give provable unconditional super-polynomial quantum speed-ups.

\paragraph{Random oracles.} Our starting point is to prove the following theorem:

\begin{theorem}[Informal]\label{thm:poqinf} Relative to a random oracle,  
there exists a non-interactive proof of quantumness, with unconditional security against any computationally-unbounded adversary making a polynomial number of classical queries.
\end{theorem}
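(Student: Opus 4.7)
The plan is to instantiate the PoQ directly from a Yamakawa--Zhandry-style search problem defined over the random oracle itself. Concretely, fix a suitable efficiently list-decodable code $C \subseteq \Sigma^n$ (e.g., folded Reed--Solomon), and from the random oracle derive a predicate $H : [n] \times \Sigma \to \{0,1\}$ with $\Pr[H(i,x)=1]$ tuned so that a uniformly random codeword has constant probability of being a solution. The search problem is to output $(x_1, \ldots, x_n) \in C$ such that $H(i, x_i) = 1$ for all $i \in [n]$; the PoQ is the prover sending such a tuple, and the verifier classically checking codeword membership together with the $n$ marker bits.

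Completeness follows from the YZ quantum algorithm: prepare the uniform superposition over $C$ using an efficient encoder, run amplitude amplification against the ``all coordinates marked'' predicate (which the prover computes with $n$ quantum queries to the RO), and measure in $\mathrm{poly}(n)$ time to recover a valid codeword with overwhelming probability. The prover only needs quantum access to the random oracle, which matches the model.

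The core of the theorem is soundness. I would invoke the YZ classical query lower bound: for every fixed classical algorithm $A$ making $q = \mathrm{poly}(\lambda)$ queries, $\Pr_{H,\text{coins}(A)}[A^H \text{ wins at } \lambda] \leq \varepsilon(\lambda)$ with $\varepsilon$ superpolynomially small in $\lambda$. To convert this ``for each $A$'' average-case statement into the desired ``with probability $1$ over $H$, no uniform $A$ succeeds,'' I argue as follows. For each uniform Turing machine $M$ and each polynomial $p$, Markov's inequality gives $\Pr_H\!\left[\Pr_{\text{coins}}[M^H \text{ wins at }\lambda] \geq 1/p(\lambda)\right] \leq p(\lambda)\,\varepsilon(\lambda)$, which is summable in $\lambda$. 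By the Borel--Cantelli lemma, with probability $1$ over $H$ this event holds for only finitely many $\lambda$. Taking a countable union over the countably many pairs $(M, p)$---a countable union of measure-zero events---yields: with probability $1$ over $H$, every uniform polynomial-query classical adversary has success probability that is $\negl(\lambda)$.

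The main obstacle is ensuring the YZ classical lower bound is strong enough, both quantitatively and in model: it must be genuinely $\lambda^{-\omega(1)}$ (so that summability after Markov holds) and it must be a pure \emph{query} lower bound (so that ``computationally unbounded'' is automatically handled). The delicate piece is verifying that the code and sparsity parameters force any $q$-query classical algorithm to see marks on at most $q/|\Sigma|$ of its queries, while any valid output must mark \emph{all} $n$ coordinates simultaneously in a codeword-consistent way---this is exactly what the code-based, unstructured-marker design of YZ is tailored for. Once the lower bound is in hand at the right parameters, the Borel--Cantelli/union-bound wrap-up over the countable class of uniform adversaries is essentially measure-theoretic boilerplate, and delivers the ``with probability $1$'' conclusion.
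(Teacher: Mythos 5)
The completeness step is where your proposal breaks down, and it is not a small gap: you have replaced the paper's actual quantum algorithm with one that cannot work. You propose to prepare the uniform superposition over $C$ and then run amplitude amplification against the predicate ``$H_i(c_i)=1$ for all $i$.'' But under a random oracle, a uniformly random codeword satisfies all $n$ marker bits with probability about $2^{-n}$, so the initial overlap with the good subspace is $2^{-n/2}$ and amplitude amplification requires $\Theta(2^{n/2})$ iterations --- exponential, not $\poly(n)$. Precisely because brute amplification fails here, the paper's quantum prover is a Regev-style algorithm: it forms $\ket{\phi}\propto\sum_{\ve:H_i(\ve_i)=1}\ket{\ve}$ (by rejection sampling each coordinate), forms $\ket{\psi}\propto\sum_{\vx\in C}\ket{\vx}$, applies the QFT over $\FF_q^m$ to both, adds the registers, \emph{decodes the dual code} $C^\perp$ from the sum using an efficient list decoder, and inverts the QFT on the remaining register, landing (approximately) on $\sum_{\vx\in C\cap T^H}\ket{\vx}$. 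The correctness of this chain of transformations is the technical heart of the paper (\Cref{lem:Regev_like}, \Cref{lem:correctness}, \Cref{cla:conditions}); it hinges on showing the ``bad error'' mass is negligible both before and after convolution, which in turn requires the specific decodability property of $C^\perp$ under a BSC-like error model (\Cref{item:dual_decode} of \Cref{lem:good_codes}). None of this is captured by amplitude amplification. Relatedly, you also conflate list-\emph{decodability} (needed for the dual $C^\perp$ in the quantum algorithm) with list-\emph{recoverability} (needed for $C$ itself in the soundness argument); these are distinct requirements that the paper imposes simultaneously via a careful rate choice for folded Reed--Solomon.

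Your soundness sketch is closer in spirit but is underspecified: ``see marks on at most $q/|\Sigma|$ of its queries'' is not the operative counting. The paper's argument fixes the adversary's query sets $S_i$ at each coordinate, invokes $(\zeta,\ell,L)$-list recoverability of $C$ to bound by $L=2^{\tilde{O}(\lambda^{c'})}$ the number of codewords that agree with the $S_i$ on $\geq(1-\zeta)n$ positions, and then union-bounds: for each such codeword the $\lfloor\zeta n\rfloor$ unqueried positions must all fresh-sample to $1$, which happens with probability $2^{-\Omega(\lambda)}$, and $L\cdot 2^{-\Omega(\lambda)}$ remains negligible because $c'<1$. You would need to supply this (or an equivalent) argument rather than invoking ``the YZ classical query lower bound'' as a black box, since that lower bound is part of what \Cref{thm:poqinf} asserts. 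Your Borel--Cantelli / countable-union argument for passing from oracle-independent to uniform oracle-dependent security (probability $1$ over $H$) is correct and matches \Cref{thm:independent_to_dependent}, though the paper gets by with a direct measure-zero union and does not need Markov plus summability.
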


Here, a proof of quantumness~\cite{FOCS:BCMVV18} is a protocol between a quantum prover and classical verifier (meaning in particular that messages are classical) where no cheating classical prover can convince the verifier. By being non-interactive, our protocol is also publicly verifiable. Prior LWE-based proofs of quantumness~\cite{FOCS:BCMVV18,BKVV20} lacked public verifiability. The only previous publicly verifiable proof of quantumness~\cite{STOC:AGKZ20} required highly non-trivial structured oracles.

\begin{remark}We note the restriction to uniform adversaries is necessary in the non-interactive setting, as a non-uniform adversary (that may take oracle-dependent advice) can simply have a proof hardcoded. Our protocol also readily gives a two-message public coin (and hence also publicly verifiable) protocol against non-uniform adversaries, which is the best one can hope for in the non-uniform setting.
\end{remark}

\noindent Theorem~\ref{thm:poqinf} has a number of interesting immediate consequences:
\begin{corollary}\label{cor:npinf} Relative to a random oracle, there exists an $\NP$ search problem that is solvable by 
quantum polynomial-time (QPT) machines but not by classical probabilistic polynomial-time (PPT) machines. 
\end{corollary}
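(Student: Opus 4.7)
The plan is to derive the corollary directly from Theorem~\ref{thm:poqinf} by packaging the non-interactive proof of quantumness as an $\NP$ search problem. Concretely, let $\poqverify^\oracle$ denote the classical polynomial-time verifier with oracle access to the random oracle $\oracle$, and let $\poqprove^\oracle$ denote the honest quantum prover. I would define the search relation
\[
R_\oracle \defeq \{(1^n,\pi) : \poqverify^\oracle(1^n,\pi)=\accept\}\enspace ,
\]
so that the corresponding search problem takes $1^n$ as input and asks for a witness $\pi$ of length $\poly(n)$ such that $\poqverify^\oracle(1^n,\pi)$ accepts. Membership in $\NP^\oracle$ is immediate because $\poqverify^\oracle$ is a classical polynomial-time oracle machine, and $\pi$ has polynomial length by the syntactic structure of the protocol.

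Next I would argue the $\BQP^\oracle$ upper bound: the quantum prover $\poqprove^\oracle$ runs in quantum polynomial time and, by the completeness of the proof of quantumness guaranteed in Theorem~\ref{thm:poqinf}, outputs an accepting $\pi$ with overwhelming probability over $\oracle$ and its internal randomness. Standard completeness amplification (repeating $\poqprove^\oracle$ and running $\poqverify^\oracle$ on each candidate to pick an accepting one) yields a $\BQP^\oracle$ machine solving the search problem with probability $1-\negl(n)$ for all sufficiently large $n$, for a $1$-measure set of oracles.

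For the classical lower bound I would argue by contradiction: suppose, for a positive-measure set of oracles, there is a $\BPP^\oracle$ machine $M$ that on input $1^n$ outputs an accepting $\pi$ with non-negligible probability infinitely often. Since $M$ is uniform and makes $\poly(n)$ classical queries to $\oracle$, it directly contradicts the unconditional soundness guarantee of Theorem~\ref{thm:poqinf} against computationally-unbounded uniform polynomial-query classical adversaries. Hence no such $\BPP^\oracle$ machine exists with probability $1$ over $\oracle$, and combining with the $\BQP^\oracle$ algorithm above gives the desired separation with probability $1$ (the intersection of two probability-$1$ events).

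The only subtle point, and the place I would be most careful, is the matching of quantifiers between ``probability $1$ over $\oracle$'' and ``for every $\BPP$ machine.'' Because there are only countably many uniform Turing machines, I can apply Theorem~\ref{thm:poqinf} to each $M$ individually, each yielding a probability-$1$ event that $M$ fails, and then take a countable intersection to conclude that with probability $1$ over $\oracle$ no $\BPP^\oracle$ machine solves $R_\oracle$. No analogous union-bound trick works for non-uniform advice, which is consistent with the remark in the paper that non-interactive soundness necessarily fails against non-uniform adversaries; this confines the separation cleanly to the uniform setting captured by $\BPP$.
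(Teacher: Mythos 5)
Your proof is correct and follows essentially the same route the paper has in mind: Corollary~\ref{cor:npinf} is presented as an immediate consequence of Theorem~\ref{thm:poqinf}, obtained by treating the (poly-time, poly-query classical) verifier as the $\NP$ relation, running the honest quantum prover for the $\BQP$ upper bound, and invoking soundness against uniform polynomial-query classical adversaries for the $\BPP$ lower bound. Your handling of the quantifier order via countability of uniform machines mirrors the paper's Theorem~\ref{thm:independent_to_dependent}, and the one place you are slightly terse---passing from the paper's average-over-$H$ completeness bound to ``for a measure-$1$ set of oracles, all sufficiently large $n$''---is the standard Markov plus Borel--Cantelli step, which the paper likewise elides.
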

Our construction also readily adapts to give one-way functions that are classically secure but quantum insecure. We can alternatively use minimal-round proofs of quantumness generically to give a one-way function counterexample, and even a collision resistance counterexample:
\begin{theorem}\label{thm:colinf}Relative to a random oracle, there exists a compressing function that is collision resistant against any computationally unbounded adversary making a polynomial number of classical queries, but is not even one-way against quantum adversaries. 
\end{theorem}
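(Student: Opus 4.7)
The plan is to give a generic reduction: combine the non-interactive proof of quantumness from Theorem~\ref{thm:poqinf} with the standard birthday-type classical collision resistance of a random oracle, in such a way that a valid PoQ transcript yields a preimage (and hence a collision) under the new function, while any collision not involving a valid transcript would have to be a collision in the random oracle itself.

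More concretely, let $(\poqsetup,\poqprove,\poqverify)$ be the PoQ from Theorem~\ref{thm:poqinf} with proof length $\pilength$, and partition the random oracle into two domain-separated slices: a slice $\oracle$ used by the PoQ, and an independent slice $H:\{0,1\}^{\pilength+\secpar}\to\{0,1\}^{\secpar}$ used as a compressing random function. Let $x$ denote the PoQ instance (part of the function description; omitted if $\poqsetup$ is trivial). Define
\[ f(\pi,z) \;=\; \begin{cases} z & \text{if } \poqverify^{\oracle}(x,\pi)=1,\\ H(\pi,z) & \text{otherwise.}\end{cases} \]
Since $\pilength\ge 1$, the map $f:\{0,1\}^{\pilength+\secpar}\to\{0,1\}^{\secpar}$ is compressing. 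Quantum non-one-wayness is then immediate: on any target $y$, a quantum adversary runs $\poqprove$ to obtain a valid $\pi^{\star}$ and outputs $(\pi^{\star},y)$, which satisfies $f(\pi^{\star},y)=y$ by construction, so $f$ is in fact easily invertible everywhere in its range.

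For classical collision resistance, suppose a polynomial-query uniform classical adversary outputs $(\pi_1,z_1)\neq(\pi_2,z_2)$ with $f(\pi_1,z_1)=f(\pi_2,z_2)$. If at least one $\pi_i$ satisfies $\poqverify^{\oracle}(x,\pi_i)=1$, then the adversary has produced a valid PoQ proof, contradicting the unconditional uniform-adversary security of Theorem~\ref{thm:poqinf}; here the reduction simply forwards $\oracle$-queries and simulates all $H$-queries itself using independent randomness. Otherwise both $\pi_i$ are invalid, so $H(\pi_1,z_1)=H(\pi_2,z_2)$ is a genuine collision in the purely random function $H$ of output length $\secpar$, which is ruled out for any polynomial number of classical queries by the standard birthday bound.

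The main obstacle I expect is simply setting up the hybrid cleanly so that the two oracle pieces do not interfere: the PoQ security must still apply when the adversary additionally queries $H$, and the random-oracle collision bound for $H$ must still apply when the adversary additionally queries $\oracle$. Both requirements are essentially free because $\oracle$ and $H$ are information-theoretically independent parts of the same random oracle, so each reduction can perfectly simulate the other oracle internally. Everything else is a routine case split on the structure of a purported collision.
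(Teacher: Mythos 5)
Your proposal is correct and is essentially the same construction and argument as the paper's (Theorem~\ref{thm:separation_CRH}): a piecewise function that returns the ``free'' half of the input when the accompanying string is an accepting PoQ proof, and a random-oracle hash of the whole input otherwise, with collision resistance reduced to PoQ soundness plus the birthday bound and non-one-wayness following from PoQ correctness. The only cosmetic difference is that you explicitly domain-separate the oracle into a PoQ slice and an independent hashing slice, whereas the paper folds both into a single oracle ``without loss of generality''; your version is arguably a touch cleaner but the content is identical.
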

Using results from~\cite{EC:YamZha21}, we also obtain an unconditional analogous counterexample for digital signatures and CPA-secure public key encryption (the latter requiring assuming classically CPA-secure public key encryption). Previous such results required LWE (in the case of signatures) or highly structured additional oracles (in the case of CPA-secure encryption).


\medskip

Our results do not appear to contradict Conjecture~\ref{conj:aa14}, since they are for \emph{search} problems as opposed to \emph{decision} problems. In particular, our quantum algorithm for generating proofs of quantumness/breaking the one-wayness does not compute a function, but rather samples from a set of possible values. Assuming Conjecture~\ref{conj:aa14} shows that this is inherent. We leverage this feature to yield the following:
\begin{theorem}Assuming Conjecture~\ref{conj:aa14}, relative to a random oracle there exists a one- (resp. \mbox{two-)} message certifiable randomness protocol against a single uniform (resp. non-uniform) quantum device. By adding a final message from the verifier to the prover, our protocols become public coin and publicly verifiable.
\end{theorem}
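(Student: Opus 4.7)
The plan is to use the proof of quantumness from Theorem~\ref{thm:poqinf} essentially unchanged and argue, via Conjecture~\ref{conj:aa14}, that the prover's message carries high conditional min-entropy; certifiable randomness then follows by a standard seeded extractor. In the uniform case the prover sends a witness $w$ for the fixed $\NP$ search problem of Corollary~\ref{cor:npinf}; in the non-uniform case the verifier first sends a random instance $x$ and the prover responds with a witness for $x$. The verifier accepts iff $w$ is a valid $\NP$ witness. For the public-coin, publicly verifiable variant, the verifier sends an extractor seed $s$ in a final message and the certified randomness is $\mathsf{Ext}(w,s)$; the entire transcript then suffices for any third party to recompute and verify the output.

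The crux is to show that, conditioned on acceptance, $w$ has $\omega(\log\secpar)$ bits of min-entropy with probability $1-\negl(\secpar)$ over the random oracle $\oracle$. Suppose for contradiction this fails, so for a non-negligible fraction of oracles there exists $w^{*}\in\bit^{\ell}$ with
\[
\Pr\bigl[Q^{\oracle}=w^{*}\,\wedge\,\verify^{\oracle}(w^{*})=1\bigr]\geq\epsilon
\]
for some non-negligible $\epsilon$. I would then build a classical polynomial-query algorithm that recovers a valid witness by a branching prefix search. For each prefix $p\in\bit^{i}$, let $Q_{p}$ be the Boolean quantum algorithm that runs $Q$ and outputs $1$ iff $\verify$ accepts \emph{and} the first $i$ output bits of $Q$ equal $p$. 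Applying Conjecture~\ref{conj:aa14} to $Q_{p}$ with suitable polynomially small error parameters yields a classical $C_{p}$ making $\poly(\secpar)$ oracle queries that estimates $\Pr[Q_{p}^{\oracle}=1]$ within $\epsilon/4$ on all but a $\negl(\secpar)$ fraction of oracles. The reduction maintains a set $\mathcal{L}_{i}$ of ``live'' prefixes of length $i$: starting from $\mathcal{L}_{0}=\{\emptyset\}$, it expands each $p\in\mathcal{L}_{i-1}$ to $p\concat 0$ and $p\concat 1$, runs $C_{p\concat b}$, and keeps only those whose estimate is at least $3\epsilon/4$. Since length-$i$ prefixes correspond to disjoint events under $Q$, $|\mathcal{L}_{i}|\leq 2/\epsilon$ at every level, and the procedure makes only $O(\ell/\epsilon)$ AA invocations overall. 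Every prefix of $w^{*}$ has true probability $\geq\epsilon$ and is therefore never pruned, so $w^{*}\in\mathcal{L}_{\ell}$; running $\verify$ on the candidates in $\mathcal{L}_{\ell}$ recovers a valid witness, contradicting Theorem~\ref{thm:poqinf}. Certifiable randomness then follows from standard extraction, since any classical predictor with non-negligible bias on the transcript would itself induce a low-min-entropy witness distribution and reduce to the same contradiction.

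The step I expect to be the main obstacle is controlling the reduction's failure probability over $\oracle$. Conjecture~\ref{conj:aa14} only guarantees accurate approximation on \emph{most} oracles, and the ``good'' oracle set depends on the particular Boolean algorithm being estimated, so one must union-bound failure across all $O(\ell/\epsilon)$ invocations of AA \emph{and} align this global good event with the non-negligible fraction of oracles on which the min-entropy bound is postulated to fail. Choosing AA's accuracy and failure parameters so that a single good event suffices throughout the prefix search, while keeping the total query complexity polynomial, is where the technical heart of the argument lies.
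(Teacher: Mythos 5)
Your high-level skeleton is the same as the paper's: deduce from Conjecture~\ref{conj:aa14} that any quantum prover's accepted output must have min-entropy (otherwise a classical simulator would break the one-wayness/soundness bound of the underlying protocol), then obtain true randomness by hashing with a seeded extractor, with the seed revealed after the prover's message. That part is right, and so is your handling of the non-uniform and public-coin variants.

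The genuine gap is precisely where you flag it, and I do not believe the prefix-search reduction can be repaired as stated. Conjecture~\ref{conj:aa14} hands you, for each \emph{fixed} Boolean algorithm $Q_p$, a classical simulator $C_p$ together with a \emph{$C_p$-dependent} set of bad oracles of measure $\delta$, where $\delta$ can be made only inverse-polynomial (not negligible) if $C_p$ is to remain polynomial-query, since the query cost of $C_p$ scales with $1/\delta$. Your branching search is adaptive: which of the $2^{i}$ algorithms $Q_p$ at depth $i$ you invoke depends on the oracle, and the target string $w^{*}_{H}$ (hence its prefixes) also depends on $H$. To guarantee that every $C_p$ you might consult is accurate, you would have to union-bound the bad sets over \emph{all} $2^{\le \ell}$ prefixes, not just the $O(\ell/\epsilon)$ you actually touch on a given run; that would require $\delta = 2^{-\Omega(\ell)}$, i.e., superpolynomially many queries per $C_p$, which exceeds even the subexponential soundness budget of the scheme. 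There is no obvious way to restrict attention to a fixed polynomial-size set of prefixes, because the heavy prefixes vary with $H$.

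The paper sidesteps this by never applying the conjecture to an adaptively chosen family. It first builds, from the assumed low-min-entropy quantum prover $\A$, a \emph{fixed polynomial-size family} of quantum algorithms $\A_{i}$ indexed by a randomly chosen threshold $i\in[M]$ with $M=O(1/\epsilon_\A)$: each $\A_i$ estimates the output probabilities of $\A$ by brute-force resampling (not via the conjecture) and outputs the lexicographically smallest accepting proof whose estimated probability exceeds a threshold determined by $i$. A counting argument over the randomized threshold shows that for a noticeable fraction of pairs $(H,i)$ the algorithm $\A_i$ is $4/5$-deterministic. Only then is Conjecture~\ref{conj:aa14} invoked, bit-by-bit, on the fixed family $\{\A_{i,j}\}_{i\in[M],j\in[\ell_\pi]}$, and the union bound is over this polynomial-size, oracle-independent index set, which is exactly what the conjecture's inverse-polynomial $\delta$ can absorb. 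Replacing your branching search by this ``random threshold makes the adversary near-deterministic, then simulate each output bit'' construction closes the gap you identified.
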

Here, certifiable randomness~\cite{FOCS:BCMVV18} means the classical verifier, if it accepts, is able to expand a small random seed $s$ into a truly random bit-string $x,|x|\gg |s|$, with the aid of a single quantum device. Conditioned on the verifier accepting, $x$ remains truly random even if the device is adversarial. 
We remark that $|x|\gg |s|$ is the key property that makes certifiable randomness non-trivial: It enables the verifier to create a large random string $x$ from a much smaller random seed $s$.  In addition, we remark that the random seed $s$ is used only in the verifier's postprocessing for deriving $x$ and not used during the protocol execution in our construction.

We note that our results are the best possible: if the final message is from prover to verifier, the protocols cannot be publicly verifiable. Indeed, the prover could force, say, the first output bit to be 0 by generating a candidate final message, computing the what the outputted string would be, and then re-sampling the final message until the first output bit is 1. Our one- and two-message protocols therefore require verifier random coins that are kept from the prover. In our protocols, however, these secret random coins can be sampled and even published after the prover's message. The result is that, by adding a final message from the verifier, our protocols are public coin and publicly verifiable.

\paragraph{Instantiating the random oracle.} We next instantiate the random oracle in the above construction with a standard-model cryptographic hash, such as SHA2. We cannot hope to prove security unconditionally. Nevertheless, the resulting construction is quite plausibly secure. Indeed, it is common practice in cryptography to prove security of a hash-based protocol relative to random oracles~\cite{CCS:BelRog93}, and then assume that security also applies when the random oracle is replaced with a concrete well-designed cryptographic hash. While there are known counter-examples to the random oracle assumption~\cite{STOC:CanGolHal98}, they are quite contrived and are not known to apply to our construction.

We thus obtain a plausible construction of non-interactive proofs of quantumness based on a cryptographic hash, such as SHA2.
This gives a completely new approach to non-oracular quantum advantage. What's more, it is widely believed that SHA2 is only capable of yielding symmetric key cryptosystems. Impagliazzo and Rudich~\cite{STOC:ImpRud89} show that there is no classical black box construction of public key encryption from cryptographic hash functions, and no quantum or non-black box techniques are known to overcome this barrier\footnote{There is also some evidence that quantum black box techniques cannot overcome this barrier~\cite{C:ACCFLM22}.}. In fact, what~\cite{STOC:ImpRud89} show is that, in the world of computationally unbounded but query bounded (classical) attackers, random oracles cannot be used to construct public key encryption. But this is exactly the setting of the random oracle model we consider.

Therefore, by instantiating the random oracle with a well-designed hash such as SHA2, we obtain a Minicrypt construction of a proof of quantumness. We likewise obtain candidate Minicrypt examples of $\NP$ search problems in $\BQP\setminus\BPP$, functions that are classically one-way but quantumly easy, and even certifiable randomness.

\subsection{Discussion}

\paragraph{Other sources of quantum advantage.} Other candidates for super-polynomial quantum speed-ups are known. Aaronson and Arkhipov~\cite{STOC:AarArk11} and Bremner, Jozsa, and Shepherd~\cite{BJS10} give a sampling task with such a speed-up, based on plausible complexity-theoretic constructions. Similar sampling tasks are at the heart of current real-world demonstrations of quantum advantage. More recently, Brakerski et al.~\cite{FOCS:BCMVV18} provided a proof of quantumness from the Learning With Errors (LWE) assumption, 
Kalai et al. \cite{STOC:KLVY23} give a construction from general quantum homomorphic encryption, and 
Morimae and Yamakawa~\cite{ITCS:MorYam23} give a construction from general trapdoor permutations.

We note, however, that none of the these alternate sources of quantum advantage correspond to $\NP$ search problems, as there is no way to verify the output. In the case of~\cite{STOC:AarArk11,BJS10}, this is because the task is to sample from a distribution, and it is in general hard to tell if an algorithm samples from a given distribution. In the case of~\cite{FOCS:BCMVV18,STOC:KLVY23,ITCS:MorYam23}, this is due to the interactive protocols being private coin.

\paragraph{Why $\NP$ search problems?} Most real-life problems of interest can be phrased as $\NP$ search problems, so it is a natural class of problems to study. Our work gives the first evidence besides period finding of a quantum advantage for this class.

Moreover, $\NP$ means that solutions can be efficiently verified. For existing sampling-based demonstrations of quantum advantage~\cite{STOC:AarArk11,BJS10}, verification is roughly as hard as classically sampling. Proofs of quantumness from cryptographic assumptions~\cite{FOCS:BCMVV18,STOC:KLVY23,ITCS:MorYam23} do admit verification, but the verifier must use certain secrets computed during the protocol in order to verify. This means that only the verifier involved in the protocol is convinced of the quantumness of the prover.

In contrast, using an $\NP$ problem means anyone can look at the solution and verify that it is correct. Moreover, our particular instantiation allows for sampling the problems obliviously, meaning we obtain a \emph{public coin} proof of quantumness where the verifier's message is simply uniform random coins. Against uniform adversaries, we can even just set the verifier's message to $000\cdots$, eliminating the verifier's message altogether.

\paragraph{The QROM} In classical cryptography, the Random Oracle Model (ROM)\cite{CCS:BelRog93} models a hash function as a truly random function, and proves security in such a world. This model is very important for providing security justifications of many practical cryptosystems.

Boneh et al.~\cite{AC:BDFLSZ11} explain that, when moving to the quantum setting, one needs to model the random oracle as a \emph{quantum random oracle model} (QROM). Many works (e.g.~\cite{C:Zhandry12,TCC:TarUnr16,EC:SaiXagYam18,EC:KilLyuSch18,AC:KatYamYam18,C:LiuZha19,C:DFMS19,TCC:ChiManSpo19}) have been devoted to lifting classical ROM results to the QROM. Ambainis, Rosmanis, and Unruh \cite{FOCS:AmbRosUnr14} demonstrated that some random-oracle-based constructions that are known to be secure against classical adversaries are insecure against quantum adversaries. However, their counterexamples are insecure even against quantum adversaries in the classical ROM (i.e., those that only make classical queries), and thus they do not indicate a difference between the classical ROM and QROM.  
To date, most of the main classical ROM results have successfully been lifted. 
This leads to a natural question: do all ROM results lift to the QROM?

Recently, Yamakawa and Zhandry~\cite{EC:YamZha21}, leveraging recent proofs of quantumness~\cite{BKVV20} in the random oracle, give a counter-example assuming the hardness of learning with errors (LWE). Their counter-examples were limited to highly interactive security models such as digital signatures and CCA-secure public key encryption.

By relying on LWE, \cite{EC:YamZha21} left open the possibility that \emph{unconditional} ROM results may all lift to the QROM. Our proof of quantumness refutes this, showing that the ROM and QROM are separated even in the unconditional setting. Our results also give counterexamples for many more objects, especially for objects like one-way functions and collision resistance which have essentially non-interactive security experiments.

\paragraph{Subsequent work.} Our techniques have already been used in many subsequent works. Liu~\cite{EC:Liu23} uses our construction to give an exponential separation between classical and quantum advice, relative to a random oracle. 
Li, Liu, Pelecanos, Yamakawa~\cite{ITCS:LLPY24} and Ben-David and Kundu~\cite{ICALP:BenKun24} extended this idea to show a separation between $\mathsf{QMA}$ and $\mathsf{QCMA}$ relative to a classical oracle in restricted models.
Arora et al.~\cite{STOC:ACCGSW23} use our construction to give a proof of quantum depth relative to random oracles.
Jordan et al.~\cite{jordan2024optimizationdecodedquantuminterferometry} extend our idea to give a new quantum algorithm for optimization problems. 
Göös et al.~\cite{GGJL24} use our construction to show a new quantum advantage in the context of communication complexity.
Li~\cite{ITCS:Li24} and Jain et al.~\cite{FOCS:JLRX24} study the complexity of our problem in terms of subclasses of $\mathsf{TFNP}$.  
\subsection{Overview}

Let $\Sigma$ be an exponentially-sized alphabet, and $C\subseteq \Sigma^n$ be an error correcting code over $\Sigma$. Let $O:\Sigma\rightarrow\{0,1\}$ be a function. Consider the following function $f_C^O:C\rightarrow\{0,1\}^n$ derived from $C,O$:
\[
    f_C^O(c_1,\dots,c_n)=(O(c_1),\dots,O(c_n))
\]
In other words, $f_C^O$ simply applies $O$ independently to each symbol in the input codeword. We will model $O$ as a uniformly random function. Note that if $f$ were applied to arbitrary words in $\Sigma^n$, then it would just be the parallel application of a function with one-bit outputs, which can be trivially inverted. By restricting the domain to only codewords, we show, under a suitable choice of code elaborated on below, that:
\begin{itemize}
    \item $f_C^O$ is unconditionally one-way against classical probabilistic algorithms making polynomially-many queries to $O$. It is even infeasible to find $c\in C$ such that $f_C^O(c)=0^n$.
    \item There exists a quantum algorithm which, given any $y\in\{0,1\}^n$, samples statistically close to uniformly from the set of pre-images $c\in C$ such that $f_C^O(c)=y$.
\end{itemize}
From these properties, we immediately obtain a weak version of Theorem~\ref{thm:colinf} which only considers classical one-wayness. We explain in Section~\ref{sec:CRH} how to obtain the full Theorem~\ref{thm:colinf}. To prove quantumness, one simply produces $c\in C$ such that $f_C^O(c)=0^n$, giving Theorem~\ref{thm:poqinf}. Since inverting one-way functions is in $\NP$, this also immediately gives Corollary~\ref{cor:npinf}. We now explain how we justify these facts about $f_C^O$.

\paragraph{Classical hardness.} Assume $C$ satisfies the following properties: (1) the set of symbols obtained at each position are distinct, and (2) $C$ is information-theoretically {\bf list-recoverable}.\footnote{List-recoverable codes have been used in cryptography in the contexts of domain extension of hash functions \cite{C:HIOS15,EC:KomNaoYog18,STOC:BitKalPan18} and the Fiat-Shamir transform \cite{STOC:HolLomRot21}.} 
Here, we take list-recoverability to mean that, given polynomial-sized sets $S_i,i\in[n]$ of possible symbols for each position, there exist a sub-exponential sized (in $n$) list of codewords $c$ such that $c_i\in S_i$ for all $i\in [n]$. The list size remains sub-exponential even if we include codewords such that $c_i\notin S_i$ for a few positions.

Property (1) can be obtained generically by replacing $\Sigma\mapsto[n]\times \Sigma$, where $(c_1,\dots,c_n)\mapsto ((1,c_1),\dots,(n,c_n))$. Property (2) is satisfied by folded Reed-Solomon codes, as shown by Guruswami and Rudra~\cite{GR08}. 

Assuming (1) and (2), we can show classical hardness. Fix an image $y$. We can assume without loss of generality that the adversary always evaluates $f_C^O(c)$ for any pre-image $c$ it outputs. Suppose for our discussion here that all queries to $O$ were made in parallel. Then any polynomial-sized set of queries corresponds to a collection of $S_i$. List recoverability means that there are at most $2^{n^c},c<1$ codewords consistent with the $S_i$. For each consistent codeword, the probability of being a pre-image of $y$ is at most $2^{-n}$ over the choice of random oracle. Union-bounding over the list of consistent codewords shows that the probability that \emph{any} consistent codeword is a pre-image is exponentially small. With some effort, we can show the above holds even for adaptively chosen queries.

\begin{remark}Haitner et al.~\cite{C:HIOS15} construct a very similar hash function from list-recoverable codes. Their hash functions assumes a multi-bit $O$, but then XORs the results together, rather than concatenating them. 
They prove that their hash function is collision-resistant. Our proof of one-wayness is based on a similar idea to their proof of collision-resistance.
Our novelty, and what does not appear to be possible for their construction, is the quantum pre-image finder, which we discuss next. 

We note that we could, similar to~\cite{C:HIOS15}, prove the collision resistance of $f_C^O$ by choosing $C$ to have an appropriate rate. However, our quantum pre-image finder constrains $C$ to having a rate where we only know how to prove one-wayness. Proving Theorem~\ref{thm:colinf} therefore requires a different construction, which we elaborate on in Section~\ref{sec:CRH}.
\end{remark}

\paragraph{Quantum easiness.} Our algorithm can be seen as loosely inspired by Regev's quantum reduction between SIS and LWE~\cite{STOC:Regev05}. Given an image $y$, our goal will be to create a uniform superposition over pre-images of $y$:
\[|\psi_y\rangle\propto \sum_{c\in C:f_C^O(c)=y}|c\rangle\]
We can view $|\psi_y\rangle$ as the point-wise product of two vectors:
\begin{align*}
    |\phi\rangle&\propto \sum_{c\in C}|c\rangle\;,\;\;\;\;\;\;\;\;\;\;\text{ and }\;\;\;\;\;\;\;\;\;\;
    |\tau_y\rangle\propto \sum_{c\in{\mathbf{\Sigma^n}}:f_C^O(c)=y}|c\rangle
\end{align*}
Observe that $|\tau_y\rangle$ looks like $|\psi_y\rangle$, except that the domain is no longer constrained to codewords. Once we have the state $|\psi_y\rangle$, we can simply measure it to obtain a random pre-image of $y$. We will show how to construct $|\psi_y\rangle$ in reverse: we will show a sequence of reversible transformations that transform $|\psi_y\rangle$ into states we can readily construct. By applying these transformations in reverse we obtain $|\psi_y\rangle$. To do so, we will now impose that $\Sigma$ is a vector space over $\FF_q$ for some prime $q$, and that $C$ is {\bf linear} over $\FF_q$.\footnote{In the main body, we use an extension field $\FF_q$ (i.e., $q$ is a prime power) for an appropriate parameter choice, but one can think of it as a prime field for the purpose of this overview.} This means there is a dual code $C^\perp$, such that $c\cdot d=0$ for all $c\in C,d\in C^\perp$.

We now consider the quantum Fourier transform $\QFT$ of $|\psi_y\rangle$.\footnote{
Note that an element of $\Sigma^n$ can be written as a vector over $\FF_q$. Here, we simply write $\QFT$ to mean the operation that applies QFT over the additive group of $\FF_q$ for each coordinate.} Write:
\begin{align*}
|\widehat{\phi}\rangle&:=\QFT|\phi\rangle\propto \sum_{c\in\Sigma^n}\alpha_c |c\rangle=\sum_{c\in C^\perp}|c\rangle\\
|\widehat{\tau_y}\rangle&:=\QFT|\tau_y\rangle\propto \sum_{c\in\Sigma^n}\beta_{y,c} |c\rangle\\
\end{align*}
Above, we used the fact that the QFT of a uniform superposition over a linear space is just the uniform superposition over the dual space. Then, by the Convolution Theorem, the QFT of $|\psi_y\rangle$ is the convolution of $|\widehat{\phi}\rangle$ and $|\widehat{\tau}_y\rangle$:
\[|\widehat{\psi}_y\rangle:=\QFT|\psi_y\rangle\propto \sum_{c,e\in\Sigma^n}\alpha_c\beta_{y,e}|c+e\rangle=\sum_{c\in C^\perp,e\in\Sigma^n}\beta_{y,e} |c+e\rangle\]
The next step is to decode $c$ and $e$ from $c+e$; assuming we had such a decoding, we can apply it to obtain the state proportional to
\[\sum_{c\in C^\perp,e\in\Sigma^n}\beta_{y,e}|c,e\rangle=|\widehat{\phi}\rangle|\widehat{\tau}_y\rangle\]
We can then construct $|\widehat{\phi}\rangle$ as the QFT of $|\phi\rangle$, which we can generate using the generator matrix for $C$. We will likewise construct $|\widehat{\tau}_y\rangle$ as the QFT of $|\tau_y\rangle$. To construct $|\tau_y\rangle$, we note that $|\tau_y\rangle$ is a product of $n$ states that look like:
\[|\tau_{i,y_i}\rangle\propto\sum_{\sigma\in\Sigma:O(\sigma)=y_i}|\sigma\rangle\]
Since each $y_i$ is just a single bit, we can construct such states by applying $O$ to a uniform superposition of inputs, measuring the result, and starting over if we obtain the incorrect $y_i$.

\medskip

It remains to show how to decode $c,e$ from $c+e$. We observe that $|\widehat{\tau}_{i,y_i}\rangle$ has roughly half of its weight on 0, whereas the remaining half the weight is essentially uniform (though with complex phases) since $O$ is a random function. This means we can think of $e$ as a vector where each symbol is 0 with probability $1/2$, and random otherwise. In other words, $c+e$ is a noisy version of $c$ following an analog of the binary symmetric channel generalized to larger alphabets. If the dual code $C^\perp$ were efficiently decodable under such noise, then one can decode $c$ (and hence $e$) from $c+e$.

Toward that end, we show that $c$ is uniquely and efficiently decodable (with high probability) provided the rate of $C^\perp$ is not too high. In our case where $C$ is a folded Reed-Solomon code, $C^\perp$ is essentially another Reed-Solomon code, and we can decode efficiently using {\bf list-decoding} algorithms~\cite{GS99}. We can show that the list-decoding results in a unique codeword (with high probability) for the above described error distribution assuming $C$ to have an appropriate rate.

There are a couple important caveats with the above. First is that, to use list-recoverability to prove one-wayness, we actually needed to augment $C$, which broke linearity. This is easily overcome by only applying the QFT to the linear part of $C$.

More importantly, and much more challenging, we can only decode $c+e$ as long as $e$ has somewhat small Hamming weight. While such $e$ occur with overwhelming probability, there will always be a negligible fraction of decoding errors. The problem is that the constant of proportionality in the Convolution Theorem is exponentially large, and therefore the negligible decoding errors from our procedure could end up being blown up and drowning out $|\widehat{\psi}_y\rangle$. This is not just an issue with our particular choice of decoding algorithm, as for large enough Hamming weight decoding errors are guaranteed. What this means is that the map $|\widehat{\phi}\rangle|\widehat{\tau}_y\rangle\mapsto |\widehat{\psi}_y\rangle$ is not even unitary, and $|\widehat{\psi}_y\rangle$ is not even unit norm. 

By exploiting the particular structure of our coding problem and the uniform randomness of the oracle $O$, we are able to resolve the above difficulties and show that our algorithm does, in fact, produce pre-images of $y$ as desired.

\paragraph{Certifiable randomness.} We next explain that \emph{any} efficient quantum algorithm for inverting $f_C^O$ likely produces random pre-images. After all, suppose there was an alternative quantum algorithm which inverted $f_C^O$, such that it finds a deterministic pre-image on any given $y$. If we look at any single bit of the pre-image, then Conjecture~\ref{conj:aa14} would imply that this bit can be simulated by a polynomial-query classical algorithm. By applying Conjecture~\ref{conj:aa14} to every bit of the pre-image, we thus obtain a classical query algorithm for inverting $f_C^O$, which we know is impossible.

This immediately gives us a proof of entropy: the prover generates a pre-image $c$ of an arbitrary $y$ (even $y=0^n$), and the verifier checks that $f_C^O(c)=y$. If the check passes, the verifier can be convinced that $c$ was not deterministically generated, and therefore has some randomness. Though this only ensures that $c$ is not completely deterministic, 
by using the fact that $f_C^O$ is one-way even against sub-exponential-query algorithms, we can extend the above argument to show that the min-entropy must be polynomial. 

Once we have a string with min-entropy, we can easily get uniform random bits by having the verifier extract using a private random seed.

\paragraph{Extension to non-uniform adversaries.} Note that the above results all considered fixing an adversary first, and then sampling a random oracle. A standard complexity theoretic argument shows that, in the case of uniform adversaries, we can switch the order of quantifiers, and choose the random oracle first and then the adversary.

For non-uniform adversaries, we have to work harder, and direct analogs of the results above may in fact be impossible: for example, a non-uniform adversary (chosen after the random oracle) could have a valid proof of quantumness hardcoded. 

For proofs of quantumness, we can leverage the ``salting defeats preprocessing'' result of~\cite{EC:CDGS18,FOCS:CGLQ20} to readily get a two-message public coin proof of quantumness against non-uniform attackers. For certifiable entropy/randomness, this also works, except the known bounds would end up requiring the verifiers message to be longer than the extracted string. This is a consequence of leveraging the sub-exponential one-wayness of $f_C^O$ to obtain polynomially-many random bits. Since the verifier's message must be uniform, this would somewhat limit the point of a proof of randomness. We show via careful arguments how to overcome this limitation, obtaining two message proofs of randomness where the verifier's message remains small in the classical advice setting. We leave it open to extend our result to construct proofs of randomness that are secure against non-uniform adversaries with quantum advice.

\paragraph{Extension to worst-case completeness.} Our analysis of the quantum algorithm seems to inherently rely on the oracle being uniformly random.  We show how to tweak our scheme so that correctness holds for \emph{any} oracle. The idea is to set $O=O'\oplus P$, where $O'$ is the oracle, and where $P$ is a $k$-wise independent function for some sufficiently large $k$. The point is that $P$ is supplied as part of the problem solution, and so is chosen by the quantum algorithm. This makes $O$ $k$-wise independent regardless of $O'$, which is sufficient for the analysis.

Of course, introducing $P$ makes the classical problem easier, since now the classical adversary has some flexibility in constructing $O$. We handle this by asking the adversary to find many solutions relative to different $O'$, but the same $P$. This amplifies hardness, after which we can union-bound over all possible $P$ and still maintain classical hardness. The quantum algorithm, on the other hand, can solve each of the individual instances with high probability, so it can easily solve all instances.

This gives the following conceptual implication: 
By regarding the oracle as an $N=2^n$-bit input, 
we obtain a relational problem $R\subseteq \{0,1\}^N \times \{0,1\}^{m}$ 
for $m=\poly(n)$ such that 
\begin{enumerate}
\item 
$R$ is classically efficiently verifiable, i.e., 
we can test if $(x,w)\in R$ given $w$ and $\poly(n)$ classical queries to $x$, and  
\item finding $w$ such that $(x,w)\in R$ is easy with $\poly(n)$ quantum queries to for all $x$ but hard with $\poly(n)$ classical queries on average over $x$.  
\end{enumerate} 

Note that this is a slightly different setting than our $\NP$ relation above, where the instances and witnesses were both polynomial-length strings, and the oracle is used to determine which witnesses are valid for a given instance.

\subsection{Acknowledgements} We thank Scott Aaronson for helpful suggestions, including the conceptual implication of worst-case completeness. We thank anonymous reviewers of FOCS 2022, QIP 2023, and Journal of the ACM for their helpful comments. Mark Zhandry is supported in part by an NSF CAREER award.

\subsection{Organization}

The remainder of the paper is organized as follows. Section~\ref{sec:prelim} gives some basic preliminaries, including for quantum computation. Section~\ref{sec:ROM} defines the various objects we will be considering and gives some basic relations. Section~\ref{sec:codes} discusses the properties of error correcting codes we will need. Section~\ref{sec:technical_lemma} gives a technical lemma that is needed to prove the correctness of our protocol, that may be more broadly useful. Section~\ref{sec:PoQ} gives our proof of quantumness, while Section~\ref{sec:separation_primitive} uses this to give counterexamples for various cryptographic primitives. Finally, Section~\ref{sec:PoR} gives our proofs of randomness.

\section{Preliminaries}\label{sec:prelim}
\paragraph{Basic notations.}
We use $\secpar$ to mean the security parameter throughout the paper.
For a set $X$, $|X|$ is the cardinality of $X$.
For a non-empty finite set $X$, 
we denote by $x\sample X$ to mean that $x$ is uniformly taken from $X$.
For a distribution $D$ over a set $X$, we denote by $x\sample D$ to mean that $x\in X$ is taken according to the distribution $D$. 
For sets $\calX$ and $\calY$, $\Func(\calX,\calY)$ denotes the set of all functions from $\calX$ to $\calY$.
For a positive integer $n$, $[n]$ means a set $\{1,...,n\}$.
For a random variable $X$, $\Ex[X]$ denotes its expected value.
For random variables $X$ and $X'$, $\Delta(X,X')$ denotes the statistical distance between $X$ and $X'$.
For a random variable $X$, $H_\infty(X)$ denotes the min-entropy of $X$, i.e., $H_\infty(X)=-\log\max_{x}\Pr[X=x]$.
For a quantum or randomized classical algorithm $\A$, we denote $y\sample \A(x)$ to mean that $\A$ outputs $y$ on input $x$.  
For a randomized classical algorithm $\A$, we denote $y\leftarrow \A(x;r)$ to mean that $\A$ outputs $y$ on input $x$ and randomness $r$. 

\paragraph{Notations for quantum states.}
For a not necessarily normalized state $\ket{\psi}$, we denote by $\|\ket{\psi}\|$ to mean its Euclidean norm. 
For not necessarily normalized quantum states $\ket{\psi}$ and $\ket{\phi}$ and $\epsilon>0$, we denote by $\ket{\psi}\approx_{\epsilon} \ket{\phi}$ to mean $\|\ket{\psi}-\ket{\phi}\|\leq \epsilon$.
We simply write $\ket{\psi}\approx \ket{\phi}$ to mean $\ket{\psi}\approx_{\negl(\secpar)} \ket{\phi}$.
By the triangle inequality, if we have $\ket{\psi}\approx_\epsilon \ket{\phi}$ and $\ket{\phi}\approx_\delta \ket{\tau}$, then we have $\ket{\psi}\approx_{\epsilon+\delta} \ket{\tau}$. 

For not necessarily normalized quantum states $\ket{\psi}$ and $\ket{\phi}$, we denote by $\ket{\psi}\propto \ket{\phi}$ to mean that $\ket{\psi}=C\ket{\phi}$ for some $C\in \mathbb{C}\setminus \{0\}$. 


\if0
\paragraph{Classical/quantum random oracle model.}  
In the classical random oracle model (CROM)~\cite{CCS:BelRog93}, a random function $H$ is chosen at the beginning, and every party (including honest algorithms of a protocol whose security is analyzed and an adversary) can classically access $H$.\footnote{The classical random oracle model is often just referred to as the ROM, but we call it CROM to emphasize that the oracle access is classical.}
The quantum random oracle model (QROM) \cite{AC:BDFLSZ11} is defined similarly except that the access to $H$ can be quantum.
In other words, a quantumly-accessible classical oracle that applies a unitary $\ket{x}\ket{y}\mapsto \ket{x}\ket{y\oplus H(x)}$ is available.  
See \Cref{sec:ROM} for more detailed treatment of these models. 
\fi

\subsection{Finite Fields}\label{sec:finite_field}
For a prime power $q=p^r$, $\FF_q$ denotes a field of order $q$. We use this notation throughout the paper, and whenever we write $\FF_q$, $q$ should be understood as a prime power. 
We denote by $\vzero$ to mean $(0,...,0)\in \FF_q^n$ where $n$ will be clear from the context. 
For $\vx=(x_1,...,x_n)\in \FF_q^n$ and $\vy=(y_1,...,y_n)\in \FF_q^n$, we define $\vx\cdot \vy\defeq \sum_{i=1}^{n}x_iy_i$.

We often consider vectors $\vx\in \Sigma^n$ over the alphabet $\Sigma=\FF_q^m$. 
We identify $\Sigma^n$ and $\FF_q^{nm}$ in the canonical way, i.e., we identify $((x_1,\ldots,x_m),\ldots,(x_{(n-1)m+1},\ldots,x_{nm}))\in \Sigma^n$ and $(x_1,x_2,\ldots, x_{nm})\in \FF_q^{nm}$. 
For $\vx=(\vx_1,...,\vx_n)\in \Sigma^n$ and $\vy=(\vy_1,...,\vy_n)\in \Sigma^n$, we define $\vx\cdot \vy\defeq \sum_{i=1}^{n}\vx_i\cdot \vy_i$.

The trace function $\Tr:\FF_q\rightarrow \FF_p$ is defined by\footnote{It may not be immediately clear from the definition below that $\Tr(x)\in\FF_p$, but this is a well-known fact~\cite{1997-lidl}.} 
\begin{align*}
    \Tr(x)\defeq \sum_{i=0}^{r-1}x^{p^{i}}. 
\end{align*}
The trace function is $\FF_p$-linear, i.e., for any $a,b\in \FF_p$ and $x,y\in \FF_q$, we have 
\begin{align*}
    \Tr(ax+by)=a\Tr(x)+b\Tr(y).
\end{align*}
We let $\omega_p\defeq e^{2\pi i/p}$. 
For any $\vx\in \FF_q^n \setminus \{\vzero\}$, we have 
\begin{align}\label{eq:sum_is_zero}
\sum_{\vy\in \FF_q^n}\omega_p^{\Tr(\vx\cdot \vy)}=0.     
\end{align}
The multiplicative group $\FF_q^*$ of $\FF_q$ is cyclic, and thus there is an element $\gamma\in \FF_q^*$ such that 
\begin{align*}
    \{\gamma^{i}\}_{i\in [q-1]}=\FF_q^*.
\end{align*} 
For $\vx\in \FF_q^n$, we denote by $\hw(\vx)$ to mean the Hamming weight of $\vx$, i.e., 
$\hw(\vx)\defeq |\{i\in [n]: x_i \neq 0\}|$ where $\vx=(x_1,\ldots,x_n)$.
For $\vx=(x_1,\ldots,x_n) \in \FF_q^n$ and a subset $S\subseteq [n]$, we denote by $\vx_S$ to mean $(x_i)_{i\in S}$.

\subsection{Quantum Fourier Transform over Finite Fields}
We review known facts on quantum Fourier transform over finite fields. 
On a quantum system over a finite field $\FF_q$, a quantum Fourier transform is a unitary denoted by $\QFT_{\FF_q}$ such that for any $x\in \FF_q$, 
\begin{align*}
    \QFT_{\FF_q}\ket{x} = \frac{1}{\sqrt{q}}\sum_{z\in \FF_q}\omega_p^{\Tr(x \cdot z)}\ket{z}. 
\end{align*} 
A quantum Fourier transform over $\FF_q$ can be approximated to within error $\epsilon$ in time polynomial in $\log q$ and $\log 1/\epsilon$~\cite{BCW02,DHI06}. For ease of exposition, we ignore the approximation error in the rest of the paper since it can be made exponentially small by a polynomial-size quantum circuit.  

We often consider quantum systems over the alphabet $\Sigma=\FF_q^m$ for some positive integer $m$. We define the QFT over $\Sigma$ to be the $m$-tensor product of $\QFT_{\FF_q}$: For $\vx=(x_1,...,x_m)\in \Sigma$, 
\begin{align*}
    \QFT_{\Sigma}\ket{\vx}&:=
    \QFT_{\FF_q}^{\otimes m}\ket{x_1}\ket{x_2}...\ket{x_m}\\
    &=\frac{1}{\sqrt{|\Sigma|}}\sum_{\vz\in \Sigma}\omega_p^{\Tr(\vx \cdot \vz)}\ket{\vz}
\end{align*} 
where the second equality follows from the definition of $\QFT_{\FF_q}$ and linearity of $\Tr$. 
Similarly, for any positive integer $n$ and $\vx\in \Sigma^n$, we have 
\begin{align*}
    \QFT_{\Sigma}^{\otimes n}\ket{\vx} = \frac{1}{|\Sigma|^{n/2}}\sum_{\vz\in \Sigma^n}\omega_p^{\Tr(\vx \cdot \vz)}\ket{\vz}
\end{align*} 
by the definition of $\QFT_{\Sigma}$ and linearity of $\Tr$. 

For a function $f:\Sigma^n\ra \mathbb{C}$, we define 
\begin{align*}
    \hat{f}(\vz)\defeq \frac{1}{|\Sigma|^{n/2}}\sum_{\vx\in \Sigma^n}f(\vx)\omega_p^{\Tr(\vx \cdot \vz)}. 
\end{align*}
Then it is easy to see that we have 
\begin{align*}
    \QFT_{\Sigma}^{\otimes n}\sum_{\vx\in \Sigma^n} f(\vx)\ket{\vx} = \sum_{\vz\in \Sigma^n} \hat{f}(\vz)\ket{\vz}.
\end{align*}
For functions $f:\Sigma^n\ra \mathbb{C}$ and $g:\Sigma^n\ra \mathbb{C}$, $f\cdot g$ and $f\ast g$ denote the point-wise product and convolution of $f$ and $g$, respectively, i.e.,
\begin{align*}
&(f\cdot g)(\vx)\defeq f(\vx)\cdot g(\vx)\\
    &(f\ast g)(\vx)\defeq \sum_{\vy\in \Sigma^n}f(\vy)\cdot g(\vx-\vy).
\end{align*}

We have the following standard lemmas.
We include the proofs for completeness.

\begin{lemma}[Parseval's equality]\label{lem:Parseval}
For any $f:\Sigma^n\ra \mathbb{C}$, we have 
\begin{align*} 
    \sum_{\vx\in \Sigma^n}|f(\vx)|^2=\sum_{\vz\in \Sigma^n}|\hat{f}(\vz)|^2.
\end{align*}
\end{lemma}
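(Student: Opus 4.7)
The plan is to prove Parseval's equality by a direct expansion of $\sum_{\vz} |\hat{f}(\vz)|^2$ in terms of $f$, followed by an application of the orthogonality relation in Equation~\eqref{eq:sum_is_zero}. This is the standard approach for Fourier inversion/Plancherel-type identities over finite abelian groups, and the trace pairing $\omega_p^{\Tr(\vx \cdot \vz)}$ here plays exactly the role of the character used in those arguments.

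First, I would write $|\hat{f}(\vz)|^2 = \hat{f}(\vz)\overline{\hat{f}(\vz)}$, substitute the definition of $\hat{f}$ twice, and pull out the leading $1/|\Sigma|^n$ factor to obtain
\[
|\hat{f}(\vz)|^2 = \frac{1}{|\Sigma|^n}\sum_{\vx,\vy\in\Sigma^n} f(\vx)\overline{f(\vy)}\,\omega_p^{\Tr((\vx-\vy)\cdot \vz)},
\]
using that $\overline{\omega_p^{\Tr(\vy\cdot\vz)}} = \omega_p^{-\Tr(\vy\cdot\vz)} = \omega_p^{\Tr(-\vy\cdot\vz)}$ and the $\FF_p$-linearity of $\Tr$ (both stated in the preliminaries). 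Next, I would sum over $\vz$ and swap the order of summation so that the character sum $\sum_{\vz}\omega_p^{\Tr((\vx-\vy)\cdot \vz)}$ is the innermost sum.

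Then I would invoke Equation~\eqref{eq:sum_is_zero}: the inner sum equals $|\Sigma|^n = q^{nm}$ when $\vx = \vy$ (every summand is $\omega_p^0 = 1$) and vanishes otherwise (since $\vx-\vy \neq \vzero$, viewed as an element of $\FF_q^{nm}$ under the identification from the preliminaries). This collapses the double sum to the diagonal $\vx=\vy$, cancels the $1/|\Sigma|^n$ factor, and yields $\sum_{\vz}|\hat{f}(\vz)|^2 = \sum_{\vx} f(\vx)\overline{f(\vx)} = \sum_{\vx}|f(\vx)|^2$, which is the claim.

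There is no real obstacle here: once one writes down the expansion, the orthogonality relation \eqref{eq:sum_is_zero} finishes the argument. The only point that requires a bit of care is verifying that the character $\vz \mapsto \omega_p^{\Tr(\vw\cdot \vz)}$ is nontrivial for $\vw \neq \vzero$, which is what equation \eqref{eq:sum_is_zero} already asserts in the form we need, so the proof is essentially a two-line computation that I would present cleanly in a single displayed chain of equalities.
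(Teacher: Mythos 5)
Your argument is correct, but it takes a different (and more self-contained) route than the paper. The paper proves \Cref{lem:Parseval} in one line by citing the fact that $\QFT$ is a unitary (\cite{BCW02,DHI06}); norm preservation is then automatic. You instead unpack that fact: you expand $\sum_{\vz}|\hat f(\vz)|^2$ into a double sum over $\vx,\vy$, exchange the order of summation, and apply the character orthogonality relation \eqref{eq:sum_is_zero} so that the inner sum $\sum_{\vz\in\Sigma^n}\omega_p^{\Tr((\vx-\vy)\cdot\vz)}$ equals $|\Sigma|^n$ when $\vx=\vy$ and vanishes otherwise, collapsing the double sum to the diagonal. This is essentially the computation that underlies the unitarity of $\QFT$ in the cited references, so the two arguments differ in presentation rather than in substance: the paper outsources the orthogonality computation to a reference, while you carry it out explicitly, which makes the proof longer but elementary and dependence-free. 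One small point you handled correctly but is worth stating precisely in a write-up: \eqref{eq:sum_is_zero} is phrased over $\FF_q^n$, so applying it over $\Sigma^n$ requires the canonical identification $\Sigma^n\cong\FF_q^{nm}$ and the compatibility of the two dot products, both of which are set up in the preliminaries.
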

\begin{proof}
Since $\QFT_{\FF_q}$ is unitary,  $\QFT_{\Sigma}^{\otimes n}$ is also unitary. 
This immediately implies \Cref{lem:Parseval}.
\end{proof}

\begin{lemma}\label{lem:QFT_prod}
Let $m$ be a positive integer that divides $n$. 
Suppose that we have $f_i:\Sigma \rightarrow \mathbb{C}$ for $i\in [n]$ and $f:\Sigma^n\ra \mathbb{C}$ is defined by
\begin{align}\label{eq:f_product}
    f(\vx)\defeq \prod_{i\in [n]}f_i(\vx_i)
\end{align}
where $\vx=(\vx_1,\vx_2,...,\vx_n)$. 
Then, we have 
\begin{align*}
    \hat{f}(\vz)=\prod_{i\in [n]}\hat{f}_i(\vz_i)
\end{align*}
where $\vz=(\vz_1,\vz_2,...,\vz_n)$. 
\end{lemma}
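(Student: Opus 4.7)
The plan is to compute $\hat{f}(\vz)$ directly from its definition and factor the resulting sum into a product. First I would unfold the defining formula
\[
\hat{f}(\vz) = \frac{1}{|\Sigma|^{n/2}}\sum_{\vx\in \Sigma^n} f(\vx)\,\omega_p^{\Tr(\vx\cdot \vz)},
\]
and substitute the assumed factorization $f(\vx)=\prod_{i\in[n]} f_i(\vx_i)$ given in \eqref{eq:f_product}.

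Next I would use the two structural facts about the pairing $\vx\cdot\vz$ and the trace map that are already laid out in the preliminaries: by definition $\vx\cdot\vz=\sum_{i\in[n]} \vx_i\cdot\vz_i$, and by $\FF_p$-linearity of $\Tr$, this gives $\Tr(\vx\cdot\vz)=\sum_{i\in[n]}\Tr(\vx_i\cdot\vz_i)$. Consequently $\omega_p^{\Tr(\vx\cdot\vz)}=\prod_{i\in[n]}\omega_p^{\Tr(\vx_i\cdot\vz_i)}$. Combining this with the product form of $f$, the summand fully factors over coordinates.

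Then I would swap the order of sum and product. Writing the outer sum over $\vx\in\Sigma^n$ as iterated sums over each $\vx_i\in\Sigma$, and using the elementary identity $\sum_{a_1,\ldots,a_n}\prod_i g_i(a_i)=\prod_i\sum_{a_i} g_i(a_i)$ for arbitrary functions $g_i$, I obtain
\[
\hat{f}(\vz)=\frac{1}{|\Sigma|^{n/2}}\prod_{i\in[n]}\Bigl(\sum_{\vx_i\in\Sigma} f_i(\vx_i)\,\omega_p^{\Tr(\vx_i\cdot\vz_i)}\Bigr).
\]
Finally, I would distribute the normalization by writing $|\Sigma|^{-n/2}=\prod_{i\in[n]}|\Sigma|^{-1/2}$ and recognize each parenthesized factor as $\hat{f}_i(\vz_i)$ according to the single-register version of the Fourier transform definition, yielding $\hat{f}(\vz)=\prod_{i\in[n]}\hat{f}_i(\vz_i)$ as required.

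There is no real obstacle here; the entire argument is a bookkeeping exercise verifying that the Fourier transform associated with the additive character $\vx\mapsto \omega_p^{\Tr(\vx\cdot \vz)}$ of the abelian group $\Sigma^n$ factors as a tensor product over the $n$ coordinates. The only place one has to be momentarily careful is in invoking $\FF_p$-linearity of $\Tr$ to split the exponent across coordinates; everything else is a direct swap of sum and product.
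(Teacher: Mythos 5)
Your proof is correct and follows the same route as the paper's: unfold the definition of $\hat{f}$, use the factorization of $f$ together with $\FF_p$-linearity of $\Tr$ to split the exponent across coordinates, swap the sum and product, distribute the normalization, and identify each factor as $\hat{f}_i(\vz_i)$. There is no meaningful difference between your argument and the one in the paper.
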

\begin{proof}
This can be proven by the following equalities:
\begin{align*}
\hat{f}(\vz)
&=\frac{1}{|\Sigma|^{n/2}}\sum_{\vx\in \Sigma^n}f(\vx)\omega_p^{\Tr(\vx\cdot \vz)}\\
&=\frac{1}{|\Sigma|^{n/2}}\sum_{\vx_1\in \Sigma}...\sum_{\vx_{n}\in \Sigma}\prod_{i\in [n]} f_i(\vx_i)\omega_p^{\Tr(\vx_i\cdot \vz_i)}\\
&=\prod_{i\in [n]}\frac{1}{|\Sigma|^{1/2}}\sum_{\vx_i\in \Sigma}f_i(\vx_i)\omega_p^{\Tr(\vx_i\cdot \vz_i)}\\
&=\prod_{i\in [n]}\hat{f}_i(\vz_i)
\end{align*}
where the second equality follows from \Cref{eq:f_product} and the linearity of $\Tr$.
\end{proof}

\begin{lemma}[Convolution theorem]\label{lem:convolution}
For functions $f:\Sigma^n\ra \mathbb{C}$, $g:\Sigma^n\ra \mathbb{C}$, and $h:\Sigma^n\ra \mathbb{C}$, the following equations hold.
\begin{align}\label{eq:conv_one}
    \widehat{f\cdot g} = \frac{1}{|\Sigma|^{n/2}}(\hat{f} \ast \hat{g}), 
\end{align}
\begin{align}\label{eq:conv_two}
    \widehat{f\ast g} = |\Sigma|^{n/2}(\hat{f} \cdot \hat{g}),
\end{align}
\begin{align}\label{eq:fgh}
    \widehat{f\cdot(g\ast h)} = (\hat{f} \ast (\hat{g}\cdot \hat{h})).
\end{align}
\end{lemma}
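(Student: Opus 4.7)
The plan is to verify each of the three identities by direct computation from the definition of $\hat{\cdot}$, using the orthogonality identity $\sum_{\vx\in\Sigma^n}\omega_p^{\Tr(\vx\cdot\vu)}=|\Sigma|^n$ if $\vu=\vzero$ and $0$ otherwise, which follows from \eqref{eq:sum_is_zero} together with the $\FF_p$-linearity of $\Tr$ (noting $\Sigma^n$ is identified with $\FF_q^{nm}$). I would prove \eqref{eq:conv_two} first since it is the cleanest, then \eqref{eq:conv_one} dually, and then \eqref{eq:fgh} by composing them.

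For \eqref{eq:conv_two}, I would expand
\[
\widehat{f\ast g}(\vz)=\frac{1}{|\Sigma|^{n/2}}\sum_{\vx\in\Sigma^n}\sum_{\vy\in\Sigma^n}f(\vy)\,g(\vx-\vy)\,\omega_p^{\Tr(\vx\cdot\vz)},
\]
substitute $\vx=\vy+\vx'$, and use the linearity of $\Tr$ to split $\omega_p^{\Tr((\vy+\vx')\cdot\vz)}=\omega_p^{\Tr(\vy\cdot\vz)}\omega_p^{\Tr(\vx'\cdot\vz)}$. The two sums then separate, each producing a factor $|\Sigma|^{n/2}\hat{f}(\vz)$ and $|\Sigma|^{n/2}\hat{g}(\vz)$ respectively, yielding $|\Sigma|^{n/2}(\hat f\cdot\hat g)(\vz)$.

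For \eqref{eq:conv_one}, I would substitute the Fourier inversion formula
\[
f(\vx)=\frac{1}{|\Sigma|^{n/2}}\sum_{\vy}\hat f(\vy)\,\omega_p^{-\Tr(\vx\cdot\vy)},\qquad g(\vx)=\frac{1}{|\Sigma|^{n/2}}\sum_{\vw}\hat g(\vw)\,\omega_p^{-\Tr(\vx\cdot\vw)},
\]
into $\widehat{f\cdot g}(\vz)=|\Sigma|^{-n/2}\sum_{\vx}f(\vx)g(\vx)\omega_p^{\Tr(\vx\cdot\vz)}$. The sum over $\vx$ then becomes $\sum_{\vx}\omega_p^{\Tr(\vx\cdot(\vz-\vy-\vw))}$, which by the orthogonality identity collapses to $|\Sigma|^n$ precisely when $\vw=\vz-\vy$ and vanishes otherwise, leaving $|\Sigma|^{-n/2}\sum_{\vy}\hat f(\vy)\hat g(\vz-\vy)=|\Sigma|^{-n/2}(\hat f\ast\hat g)(\vz)$. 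Inversion itself is justified by noting $\QFT$ is unitary (as used in \Cref{lem:Parseval}), so its inverse acts as stated.

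Finally, \eqref{eq:fgh} follows by simply chaining the first two: apply \eqref{eq:conv_one} to the pointwise product of $f$ with $(g\ast h)$ to get $\widehat{f\cdot(g\ast h)}=|\Sigma|^{-n/2}(\hat f\ast \widehat{g\ast h})$, then substitute \eqref{eq:conv_two} for $\widehat{g\ast h}=|\Sigma|^{n/2}(\hat g\cdot\hat h)$; the two powers of $|\Sigma|^{n/2}$ cancel, giving $\hat f\ast(\hat g\cdot\hat h)$. There is no real obstacle here — the whole argument is bookkeeping with indices and one invocation of orthogonality; the only thing to be careful about is the normalization factor $|\Sigma|^{n/2}$, which appears asymmetrically in the definition of $\hat{\cdot}$ versus $\ast$, and which is the reason for the explicit $|\Sigma|^{\pm n/2}$ prefactors in \eqref{eq:conv_one} and \eqref{eq:conv_two}.
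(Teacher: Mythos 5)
Your proposal is correct and follows essentially the same approach as the paper: direct computation from the definition of $\hat{\cdot}$, using linearity of $\Tr$ and the orthogonality identity derived from \eqref{eq:sum_is_zero}, with \eqref{eq:fgh} obtained by composing the first two. The only cosmetic difference is that for \eqref{eq:conv_one} the paper expands $(\hat f\ast\hat g)$ directly and shows it equals $|\Sigma|^{n/2}\widehat{f\cdot g}$, whereas you expand $\widehat{f\cdot g}$ via Fourier inversion and arrive at $|\Sigma|^{-n/2}(\hat f\ast\hat g)$ --- the same computation read in the opposite direction.
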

\begin{proof}

For any $\vx\in \Sigma^n$, we have 
\begin{align*}
(\hat{f}\ast \hat{g})(\vx)
&=\sum_{\vy\in\Sigma^n}\hat{f}(\vy)\hat{g}(\vx-\vy)\\
&=\sum_{\vy\in\Sigma^n}\left(\frac{1}{|\Sigma|^{n/2}}\sum_{\vz\in\Sigma^n}f(\vz)\omega_p^{\Tr(\vy\cdot \vz)}\right)\left(\frac{1}{|\Sigma|^{n/2}}\sum_{\vz'\in\Sigma^n}g(\vz')\omega_p^{\Tr((\vx-\vy)\cdot \vz')}\right)\\
&=\frac{1}{|\Sigma|^{n}}\sum_{\vy\in\Sigma^n}\sum_{\vz\in\Sigma^n}\sum_{\vz'\in\Sigma^n}f(\vz)g(\vz')\omega_p^{\Tr(\vx\cdot \vz')}\omega_p^{\Tr(\vy\cdot (\vz-\vz'))}\\
&=\frac{1}{|\Sigma|^{n}}\sum_{\vz\in\Sigma^n}\sum_{\vz'\in\Sigma^n}\left(f(\vz)g(\vz')\omega_p^{\Tr(\vx\cdot \vz')}\sum_{\vy\in\Sigma^n}\omega_p^{\Tr(\vy\cdot (\vz-\vz'))}\right)\\
&=\frac{1}{|\Sigma|^{n}}\cdot |\Sigma|^{n} \sum_{\vz\in\Sigma^n}f(\vz)g(\vz)\omega_p^{\Tr(\vx\cdot \vz)}\\
&=\sum_{\vz\in\Sigma^n}f(\vz)g(\vz)\omega_p^{\Tr(\vx\cdot \vz)}\\
&=|\Sigma|^{n/2}(\widehat{f \cdot g})(\vx)
\end{align*}
where the third equality follows from the linearity of $\Tr$ and the fifth equality follows from \Cref{eq:sum_is_zero}.
This implies \Cref{eq:conv_one}.

For any $\vx\in \Sigma^n$, we have 
\begin{align*}
\widehat{(f \ast g)}(\vx)
&=\frac{1}{|\Sigma|^{n/2}}\sum_{\vz\in \Sigma^n}(f \ast g)(\vz)\omega_p^{\Tr(\vx\cdot \vz)}\\
&=\frac{1}{|\Sigma|^{n/2}}\sum_{\vz\in \Sigma^n}\sum_{\vy\in \Sigma^n}f(\vy)g(\vz-\vy)\omega_p^{\Tr(\vx\cdot \vy)}\omega_p^{\Tr(\vx\cdot (\vz-\vy))}\\
&=\frac{1}{|\Sigma|^{n/2}}\left(\sum_{\vy\in \Sigma^n}f(\vy)\omega_p^{\Tr(\vx\cdot \vy)}\right)\left(\sum_{\vz'\in \Sigma^n}g(\vz')\omega_p^{\Tr(\vx\cdot \vz')}\right)\\
&=|\Sigma|^{n/2}(\hat{f}\cdot \hat{g})(\vx)
\end{align*}
where the second equality follows from the linearity of $\Tr$.
This implies \Cref{eq:conv_two}.
\Cref{eq:fgh} immediately follows from \Cref{eq:conv_one,eq:conv_two}. 
\end{proof}


\subsection{Other Lemmas}
We rely on the following well-known lemmas.
\begin{lemma}[Chernoff Bound]\label{lem:Chernoff}
Let $X_1,...,X_n$ be independent random variables taking values in $\bit$, $X\defeq \sum_{i\in [n]}X_i$, and $\mu\defeq \Ex[X]$.
For any $\delta\geq 0$, it holds that 
\begin{align*}
    \Pr[X\geq (1+\delta)\mu]\leq e^{-\frac{\delta^2 \mu}{2+\delta}}.
\end{align*}
\end{lemma}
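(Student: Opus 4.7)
The plan is to follow the standard Chernoff--Hoeffding argument: apply Markov's inequality to the moment generating function $e^{tX}$, use independence to factor, bound each factor, and then optimize the free parameter $t>0$.

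Concretely, for any $t>0$, Markov's inequality gives
\[
\Pr[X \geq (1+\delta)\mu] \;=\; \Pr\!\left[e^{tX}\geq e^{t(1+\delta)\mu}\right] \;\leq\; e^{-t(1+\delta)\mu}\,\Ex[e^{tX}].
\]
By independence of the $X_i$, $\Ex[e^{tX}] = \prod_i \Ex[e^{tX_i}]$. Writing $p_i \defeq \Pr[X_i=1]$ and using $1+x\leq e^x$, we have $\Ex[e^{tX_i}] = 1 + p_i(e^t-1) \leq e^{p_i(e^t-1)}$. Multiplying over $i\in[n]$ and using $\sum_i p_i = \mu$ yields $\Ex[e^{tX}]\leq e^{\mu(e^t-1)}$, so
\[
\Pr[X \geq (1+\delta)\mu] \;\leq\; \exp\!\bigl(\mu(e^t-1)-t(1+\delta)\mu\bigr).
\]

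The next step is to optimize by choosing $t=\ln(1+\delta)\geq 0$, which produces the familiar bound $\left(\tfrac{e^\delta}{(1+\delta)^{1+\delta}}\right)^{\mu}$. To reach the form claimed in the lemma, it then suffices to show the numerical inequality
\[
\frac{e^\delta}{(1+\delta)^{1+\delta}} \;\leq\; \exp\!\left(-\frac{\delta^2}{2+\delta}\right) \qquad (\delta\geq 0),
\]
equivalently $(1+\delta)\ln(1+\delta) \geq \delta + \tfrac{\delta^2}{2+\delta}$, which after dividing by $1+\delta$ reduces to $\ln(1+\delta)\geq \tfrac{2\delta}{2+\delta}$.

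The only non-mechanical step is verifying this last inequality, but it is elementary: both sides vanish at $\delta=0$, and comparing derivatives gives $\tfrac{1}{1+\delta}\geq \tfrac{4}{(2+\delta)^2}$, which is equivalent to $(2+\delta)^2 \geq 4(1+\delta)$, i.e., $\delta^2\geq 0$. Hence the inequality holds for all $\delta\geq 0$, and raising to the $\mu$-th power completes the proof. There is no real obstacle here; the main thing to be careful about is the optimization step and the final numerical bound, which are both standard.
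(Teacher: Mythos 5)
Your proof is correct and is the standard moment-generating-function argument for the Chernoff bound. The paper states this lemma without proof, citing it as well known, so there is no in-paper argument to compare against; your derivation (Markov on $e^{tX}$, factor by independence, bound each factor via $1+x\le e^x$, set $t=\ln(1+\delta)$, and then reduce the resulting tail bound to the elementary inequality $\ln(1+\delta)\ge \tfrac{2\delta}{2+\delta}$, verified by comparing derivatives) is exactly the standard route and all the algebra checks out.
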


\begin{lemma}[\cite{C:Zhandry12}]\label{lem:simulation_QRO}
For any sets $\calX$ and $\calY$ of classical strings and $q$-quantum-query algorithm $\A$, we have
\[
\Pr[\A^{H}=1:H\sample \Func(\calX,\calY)]= \Pr[\A^{H}=1:H\sample \mathcal{F}]
\]
where  $\mathcal{F}$ is a family of $2q$-wise independent hash functions from $\calX$ to $\calY$.
\end{lemma}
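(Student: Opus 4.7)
The plan is to apply the polynomial method from Zhandry's original paper. The core idea is that, viewed as a function of the oracle's truth table, the acceptance probability of any $q$-query quantum algorithm is a polynomial of total degree at most $2q$ in a natural collection of indicator variables. Since a $2q$-wise independent family by definition matches the uniform distribution on all marginals of size at most $2q$, the two expectations in the lemma must then agree.

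For every $x\in\calX$ and $a\in\calY$ introduce the indicator $\delta_{x,a}\in\bit$ equal to $1$ iff $H(x)=a$, so that the truth table of $H$ is encoded by the tuple $\{\delta_{x,a}\}_{x,a}$. The standard oracle unitary can then be written as
\[
O_H \;=\; \sum_{x,a}\delta_{x,a}\,\ket{x}\bra{x}\otimes V_a,
\]
where $V_a$ XORs $a$ into the response register; in particular, every matrix entry of $O_H$ is \emph{linear} in the variables $\delta_{x,a}$.

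The next step is an induction on the query count, showing that after writing the final state as $\ket{\psi_q}=U_qO_HU_{q-1}\cdots O_HU_0\ket{\psi_0}$ for fixed $H$-independent unitaries $U_0,\ldots,U_q$, each amplitude of $\ket{\psi_q}$ in the computational basis is a polynomial of degree at most $q$ in the $\delta_{x,a}$. The acceptance probability is then $\sum\alpha\,\overline{\alpha}$ summed over accepting basis states, hence a real polynomial $P(\delta)$ of total degree at most $2q$. Moreover, since $\delta_{x,a}\delta_{x,a'}=0$ whenever $a\neq a'$, every surviving monomial of $P$ may be taken to involve pairwise distinct inputs $x$, so it depends only on the value of $H$ at at most $2q$ inputs.

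Taking expectations, $\Ex_H[P(\delta)]$ therefore depends only on the joint distribution of $H$ restricted to any set of at most $2q$ inputs. Since $\mathcal{F}$ is $2q$-wise independent, this marginal coincides with that of a uniformly random function, so $\Ex_{H\sample\Func(\calX,\calY)}[P(\delta)]=\Ex_{H\sample\mathcal{F}}[P(\delta)]$, which is exactly the claim. The only delicate step is the inductive degree bound on the amplitudes, which hinges on the linearity of $O_H$ in the variables $\delta_{x,a}$; once this is in hand, the rest is a routine expectation computation, and I do not anticipate a genuine obstacle beyond this initial bookkeeping.
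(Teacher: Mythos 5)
Your proof is correct and follows essentially the same polynomial-method strategy as Zhandry's original argument in \cite{C:Zhandry12}, which the paper cites without reproducing: expand the oracle unitary linearly in the indicators $\delta_{x,a}$, observe that amplitudes after $q$ queries are degree-$q$ polynomials, so the acceptance probability is a degree-$2q$ multilinear polynomial whose monomials each touch at most $2q$ distinct inputs, and then conclude by $2q$-wise independence. The one slightly informal step, reducing monomials via $\delta_{x,a}\delta_{x,a'}=0$ and $\delta_{x,a}^2=\delta_{x,a}$ so that each surviving monomial depends on at most $2q$ distinct inputs, is exactly the bookkeeping needed, and you have it right.
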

\section{Cryptographic Definitions in the Random Oracle Model}\label{sec:ROM}

Here, we define various cryptographic notions we will be constructing. We consider the following variations of the random oracle model.  
\begin{itemize}
    \item {\bf Classical random oracle model (CROM)~\cite{CCS:BelRog93}.} 
    In this model, a uniformly random function $H:\bit^n \rightarrow \bit^m$ is chosen at the beginning where $n=n(\secpar)$ and $m=m(\secpar)$ are polynomials in the security parameter $\secpar$ (that may vary depending on the protocol), and the adversary is allowed to make classical queries to $H$.\footnote{The classical random oracle model is often just referred to as the ROM, but we call it CROM to emphasize that the oracle access is classical.} When we consider probabilities over the random oracle $H$, it should be understood to be uniformly chosen from the set of all functions from $\bit^n$ to $\bit^m$ unless otherwise stated.  
    We often refer to adversaries in the CROM as uniform classical adversaries. 
    \item {\bf Quantum random oracle model (QROM)~\cite{AC:BDFLSZ11}.} This is identical to the CROM except that queries to $H$ can now be quantum.  In other words, a quantum oracle that applies a unitary $\ket{x}\ket{y}\mapsto \ket{x}\ket{y\oplus H(x)}$ is available.  
     We often refer to adversaries in the QROM as uniform quantum adversaries.
    \item {\bf Classical random oracle model with auxiliary-inputs (AI-CROM)~\cite{C:Unruh07}.}  This is identical to the CROM except that the adversary is allowed to take a polynomial-size classical advice that depends on the random oracle. 
   We often refer to adversaries in the AI-CROM as non-uniform classical adversaries. 
    \item {\bf Quantum random oracle model with (classical) auxiliary-inputs (AI-QROM)~\cite{AC:HhaXagYam19}.\footnote{We could also consider the QROM with quantum auxiliary-inputs, but we do not consider it in this paper.}}  This is identical to the QROM except that the adversary is allowed to take a polynomial-size classical advice that depends on the random oracle. 
   We often refer to adversaries in the AI-QROM as non-uniform quantum adversaries. 
\end{itemize}

\begin{remark}
In this paper, we treat random oracles as functions defined over a finite-size domain that depends on the security parameter. This treatment is more common in cryptography. On the other hand, in complexity theory, random oracles are often treated as functions over the infinite set $\{0,1\}^*$. By standard arguments, we can translate our results into those in the complexity theoretic setting (e.g., relative to a random oracle with probability $1$, proofs of quantumness exist etc.).   
\end{remark}

\begin{definition}[Family of oracle-aided functions.] For functions $\keylength=\keylength(\lambda),\inlength=\inlength(\lambda),\outlength=\outlength(\lambda)$,
a family $\{f_\secpar:\bit^{\keylength}\times\bit^{\inlength} \rightarrow \bit^{\outlength}\}_{\secpar\in\mathbb{N}}$ of \emph{efficiently computable oracle-aided keyed functions relative to oracles} $H:\bit^n \rightarrow \bit^m$
is a family of functions $f_{\secpar}$ that is implemented by a polynomial-time (deterministic) classical machine with an oracle access to $H$. The family of functions is \emph{keyless} if $\keylength=0$. If we do not specify keyed or keyless, we mean keyless. We denote by $f_{\secpar}^H$ to mean $f_\secpar$ relative to a specific oracle $H$.
\end{definition}

\paragraph{One-way functions.} We now define what it means for an oracle-aided function to be one-way relative to a random oracle. 
For one-way functions, we only consider keyless functions, as it is well known that keyless and keyed one-way functions are equivalent.

\begin{definition}[One-way functions with random oracles]
We say that a family $\{f_\secpar:\bit^{\inlength} \rightarrow \bit^{\outlength}\}_{\secpar\in\mathbb{N}}$ of efficiently computable oracle-aided functions relative to oracles $H:\bit^n \rightarrow \bit^m$  
is one-way in the CROM (resp. QROM) if for all unbounded-time $\A$ that make $\poly(\secpar)$ classical (resp. quantum) queries to $H$, there exists a negligible function $\negl$ such that:
\begin{align}
    \Pr_H[y= f_\secpar^{H}(x')
    :x\sample \bit^{\inlength}, y= f_\secpar^{H}(x), 
    x'\sample \A^{H}(1^\secpar,y)]<\negl(\secpar).\label{eq:owf}
\end{align} 

We say that $\{f_\secpar:\bit^{\inlength} \rightarrow \bit^{\outlength}\}_{\secpar\in\mathbb{N}}$ is one-way in the AI-CROM (resp. AI-QROM) if for all unbounded-time $\A$ that make $\poly(\secpar)$ classical (resp. quantum) queries to $H$ and polynomial-size classical advice $\{a(H)\}_H$, there exists a negligible function $\negl$ such that:
\begin{align}
    \Pr_H[y= f_\secpar^{H}(x')
    :x\sample \bit^{\inlength}, y= f_\secpar^{H}(x), 
    x'\sample \A^{H}(a(H),1^\secpar,y)]<\negl(\secpar).\label{eq:owf_AI_CROM}
\end{align} 
\end{definition}

\paragraph{Collision-resistance.} We now define collision-resistant hashing. 
\begin{definition}[Collision-resistance with random oracles]
We say that a family $\{f_\secpar:\bit^{\keylength}\times\bit^{\inlength} \rightarrow \bit^{\outlength}\}_{\secpar\in\mathbb{N}}$ of efficiently computable oracle-aided \emph{keyed} functions relative to oracles $H:\bit^n\rightarrow \bit^m$ 
is collision-resistant in the CROM (resp. QROM) if for all unbounded-time adversaries $\A$ that make $\poly(\secpar)$ classical (resp. quantum) queries to $H$, there exists a negligible function $\negl$ such that:
\begin{align*}
    \Pr_H[f_\secpar^{H}(k,x_0)=f_\secpar^{H}(k,x_1)~\land~x_0\neq x_1  :
    k\sample \bit^\keylength, 
    (x_0,x_1)\sample \A^{H}(k)]=\negl(\secpar).
\end{align*}
Collision-resistance in the AI-CROM and AI-QROM is defined analogously.
\end{definition}
A \emph{keyless} hash function has $\keylength=0$. Note that unlike one-way functions, \emph{keyless} collision resistant hash functions cannot have security  
against non-uniform adversaries since collisions may be hardcoded in the advice. 

\paragraph{Proofs of quantumness.} We now define proofs of quantumness, which have a quantum prover prove that they are quantum to a classical verifier. Like before, we will consider various definitions.

\begin{definition}\label{def:poqro}
A (keyed non-interactive publicly verifiable) proof of quantumness with key length $\keylength=\poly(\secpar)$ 
relative to a random oracle consists of algorithms $(\prove,\verify)$.
\begin{description}
\item[$\prove^{H}(1^\secpar,k)$:]
This is a QPT algorithm that takes the security parameter $1^\secpar$ and a key $k\in\{0,1\}^\keylength$ as input, makes $\poly(\secpar)$ quantum queries to the random oracle $H$, and outputs a classical proof $\pi$.  
\item[$\verify^{H}(1^\secpar,k,\pi)$:]
This is a deterministic classical polynomial-time algorithm that takes the security parameter $1^\secpar$, $k$ and a proof $\pi$, makes $\poly(\secpar)$ queries to the random oracle $H$, and outputs $\top$ indicating acceptance or $\bot$ indicating rejection.
\end{description}
We require a proof of quantumness to satisfy the following properties.\\

\noindent\textbf{Correctness.}
We have 
\[
\Pr_{H,k}\left[\verify^{H}(1^\secpar,k,\pi)=\bot:
\begin{array}{l}
\pi \sample \prove^{H}(1^\secpar,k)
\end{array}
\right]\leq \negl(\secpar).
\]

\noindent\textbf{Soundness.} A proof of quantumness is $(Q(\secpar),\epsilon(\secpar))$-sound in the CROM if, for any unbounded-time adversary $\A$ that makes  $Q(\secpar)$ \emph{classical} oracle queries to $H$, we have
\[
\Pr_{H,k}\left[\verify^{H}(1^\secpar,k,\pi^*)=\top:
\begin{array}{l}
\pi^* \sample \A^H(1^\secpar,k)
\end{array}
\right]\leq \epsilon(\secpar).
\]
When we do not specify $Q$ and $\epsilon$, we require that for any unbounded-time adversary $\A$ that makes $\poly(\secpar)$ queries, the above probability is $\negl(\secpar)$. 
Soundness in the AI-CROM is defined analogously.  
A \emph{keyless} proof of quantumness has $\keylength=0$. 
\end{definition}

Note that, as with collision resistance, there cannot be keyless proofs of quantumness with soundness against non-uniform adversaries. 
Indeed, a valid proof $\pi$ could be hardcored in the advice.

\paragraph{Proofs of randomness.} We now define proofs of (min-)entropy and proofs of randomness, also referred to as certifiable randomness. These are protocols by which a classical verifier with very little entropy can produce significant entropy with the help of a potentially untrusted quantum device.

We note that Brakerski et al.'s~\cite{FOCS:BCMVV18} work giving the first certifiable randomness protocol for a single device actually did not provide a formal definition. The work of Amos et al.~\cite{STOC:AGKZ20} provide a definition of certifiable min-entropy, but we observe that it is inappropriate.  
Their definition says that, conditioned on the verifier accepting, the string produced by the verifier must have min-entropy. We note, however, that a malicious device may always output a deterministic value. This value may be accepted with negligible but non-zero probability. Conditioned on accepting, the entropy remains zero. We give new definitions for certifiable entropy and randomness, overcoming this limitation.

We also note that defining certifiable randomness relative to a random oracle is subtle, since the random oracle itself is an infinite source of randomness. To accurately model entropy that comes from the protocol as opposed to the random oracle, we insist that the random string produced by the verifier has min-entropy or is uniformly random, even conditioned on the random oracle.

We note that for a proof of min-entropy, the situation is analogous to collision resistance where it is potentially feasible in the uniform setting or with a key, but trivially impossible in the non-uniform keyless setting. However, for a proof of randomness, it is inherent in the non-interactive setting that the verifier must have some local randomness. This is because, in the non-interactive setting without verifier randomness, a malicious prover can keep generating samples until, say, the first bit of the output is 0. Such a string clearly will not be uniformly random. This shows that the actual string obtained by the verifier must be kept secret from the prover, at least until after the prover's message is sent. 

We now give the definitions.

\begin{definition}\label{def:pominentropy}
A (keyed non-interactive publicly verifiable) proof of min-entropy relative to a random oracle with key length $\keylength=\poly(\secpar)$
consists of algorithms $(\prove,\verify)$.
\begin{description}
\item[$\prove^{H}(1^\secpar,k,1^h)$:]
This is a QPT algorithm that takes the security parameter $1^\secpar$, key $k\in\{0,1\}^\keylength$, and a min-entropy threshold $1^h$ as input. It makes $\poly(\secpar,h)$ quantum queries to the random oracle $H$, and outputs a classical proof $\pi$.  
\item[$\verify^{H}(1^\secpar,k,1^h,\pi)$:]
This is 
a deterministic classical polynomial-time algorithm that takes  
$1^\secpar,k,1^h$, and a proof $\pi$; 
it makes $\poly(\secpar,h)$ queries to the random oracle $H$, and outputs either a string $x$ (whose length may depend on $h$), or $\bot$ indicating rejection.
\end{description}

We require a proof of min-entropy to satisfy the following properties:\\

\noindent\textbf{Correctness.} 
 For any $h=h(\secpar)$,  
we have 
\[
\Pr_{H,k}\left[\verify^{H}(1^\secpar,k,1^h,\pi)=\bot:
\begin{array}{l}
\pi \sample \prove^{H}(1^\secpar,k,1^h)
\end{array}
\right]\leq \negl(\secpar).
\]

\noindent\textbf{Min-entropy.} 
For any polynomially-bounded $h=h(\lambda)$, any unbounded-time adversary $\A$ that makes $\poly(\secpar)$ quantum oracle queries to $H$, and for any inverse polynomial $\delta$, there is a negligible $\negl$ such that the following holds. Let $\A^H_{\top}(1^\secpar,k,1^h)$ be the distribution $\verify^H(1^\secpar,k,1^h,\A^H(1^\secpar,k,1^h))$, conditioned on the output not being $\bot$. Then:

\[\Pr_{H,k}\left[\Pr[\verify^H(1^\secpar,k,1^h,\A^H(1^\secpar,k,1^h))\neq\bot]\geq\delta(\lambda)\wedge H_\infty\left(\A^H_{\top}(1^\secpar,k,1^h)\right)\leq h(\lambda)\right]\leq\negl(\lambda)
\] 
The min-entropy requirement in the AI-QROM is defined analogously. 
A \emph{keyless} proof of min-entropy has $\keylength=0$ in which case we omit $k$ from the input of $\prove$ and $\verify$. 
\end{definition}
Note that min-entropy and correctness together imply that the output of $\verify$ when interacting with the honest $\prove$ algorithm must have min-entropy at least $h$ for an overwhelming fraction of $H,k$.

\medskip

\begin{definition}\label{def:porandomness}
A (keyed non-interactive publicly verifiable) proof of \emph{randomness} relative to a random oracle
has the same syntax as a proof of min-entropy (\Cref{def:pominentropy}), except that we 
allow $\verify$ to be randomized and 
require the output of $\verify$ to be exactly $h$ bits unless its output is $\bot$. 
We require a proof of randomness to satisfy the following properties:

\noindent\textbf{Correctness.} 
 For any $h=h(\secpar)$,  
we have 
\[
\Pr_{H,k,r}\left[\verify^{H}(1^\secpar,k,1^h,\pi;r)=\bot:
\begin{array}{l}
\pi \sample \prove^{H}(1^\secpar,k,1^h)
\end{array}
\right]\leq \negl(\secpar).
\]

\noindent\textbf{Succinct randomness.}
The length of the randomness $r$ used by $\verify$ is $\poly(\lambda,\log h)$ bits.  

\noindent\textbf{True randomness.} 
For any polynomially-bounded $h=h(\lambda)$ and any unbounded-time  adversary $\A$ that makes $\poly(\secpar)$ quantum oracle queries to $H$, and for any inverse polynomial $\delta$, there is a negligible $\negl$ such that the following holds for a $(1-\negl(\secpar))$-fraction of $(H,k)$. If it holds that $\Pr[\verify^H(k,h,\A^H(k,h);r)\neq\bot]\geq\delta$, then
\[
\Delta\left(\;(r,U)\;,\;(r,\A^H_{\top}(1^\secpar,k,1^h;r))\;)\right)\leq\negl(\lambda)
\]
where 
$\A^H_{\top}(1^\secpar,k,1^h;r)$ is the distribution $\verify^H(1^\secpar,k,1^h,\A^H(1^\secpar,k,1^h);r)$, conditioned on the output not being $\bot$,  
and $U$ is the uniform distribution over $h$-bit-strings. 
In other words, provided that $\verify$ actually outputs a string with inverse polynomial probability, that string will be statistically close to random for an overwhelming fraction of $H,k$. 

The true randomness requirement in the AI-QROM is defined analogously. 
A \emph{keyless} proof of randomness has $\keylength=0$ in which case we omit $k$ from the input of $\prove$ and $\verify$.
\end{definition}

\paragraph{From min-entropy to true randomness.} Here we discuss how proofs of min-entropy imply proofs of randomness. This is an immediate application of extractors:
\begin{theorem}\label{thm:min-entropy_to_randomness}
If proofs of min-entropy in the QROM (resp. AI-QROM) exist, then so do proofs of randomness in the QROM (resp. AI-QROM). 
If the proof of min-entropy is keyless, then so is the proof of randomness. 
\end{theorem}
\begin{proof}We simply have a new $\verify'$ which chooses a random seed for a strong extractor, which it applies to the result of $\verify$, outputting whatever the extractor outputs. By choosing the min-entropy $h$ sufficiently higher than the desired output length according to the parameters of the extractor, the output of $\verify'$ will be statistically close to random and the desired length.
\end{proof}

We note that the verifier's random seed for the extractor can be sampled after the prover's message, and can also be made public afterward. The result is that if the proof of min-entropy is public coin and publicly verifiable, the proof of randomness will be as well, at the cost of a single final message from the verifier.

\subsection{From Uniform to Non-Uniform Security} 
Clearly, security against non-uniform adversaries implies security against uniform adversaries.  
For the other direction, we can use known results of ~\cite{EC:CDGS18}  and~\cite{FOCS:CGLQ20} 
that show that salting generically lifts uniform security to non-uniform security in the classical and quantum random oracle models, respectively. Note that the results require it to be efficiently verifiable when the adversary wins; this applies to one-way functions, collision resistance, and proofs of quantumness, but not to proofs of min-entropy/randomness, where it cannot be efficiently checked if the adversary produced a low entropy or non-uniform string. As immediate corollaries of these results, we obtain the following: 
\begin{theorem}\label{thm:uniform_to_non-uniform_OW}
If $\{f_\lambda\}_\lambda$ is one-way in the CROM (resp. QROM), then $\{g_\lambda\}_\lambda$ where $g_\lambda^H(s,x)=s||f_\lambda^{H(s||\cdot)}(x)$ and where $s\in\bit^\secpar$ is one-way in the AI-CROM (resp. AI-QROM).
\end{theorem}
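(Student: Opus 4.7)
The plan is to prove the contrapositive: assume $\{g_\lambda\}$ is \emph{not} one-way against non-uniform oracle-dependent adversaries, and derive an oracle-independent attack on $\{f_\lambda\}$ in the corresponding (classical or quantum) model, contradicting the hypothesis.

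First I would unpack the structure of $g_\lambda^H(s,x) = s\|f_\lambda^{H(s\|\cdot)}(s,x)$. Since the salt $s \in \bit^\secpar$ is revealed in the image and the right-hand block queries $H$ only at inputs prefixed by $s$, inverting $g_\lambda$ on a challenge $s\|y$ amounts to inverting $f_\lambda$ with respect to the sub-oracle $H_s := H(s\|\cdot)$ on a random image $y$. Moreover, for a uniformly random $H$, the slices $\{H_s\}_{s \in \bit^\secpar}$ are independent and uniform, so a non-uniform oracle-dependent inverter for $g_\lambda$ with polynomial-size advice $a_n$ (depending on $H$) corresponds exactly to a preprocessing attack on the random-oracle family $\{H_s\}_s$ in which the salt is chosen uniformly after the advice is fixed.

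Next, I would apply the ``salting defeats preprocessing'' theorems of \cite{EC:CDGS18} in the CROM and \cite{FOCS:CGLQ20} in the QROM essentially verbatim. These state that, for any efficiently verifiable task indexed by a salt, the success probability of any non-uniform algorithm with $S$ bits of oracle-dependent advice and $T$ (classical or quantum) queries is at most the best \emph{uniform} success probability plus an additive loss polynomial in $S,T$ and in $2^{-\secpar}$ (with the appropriate square-root factor in the quantum case). Inverting $f_\lambda^{H_s}(s,\cdot)$ is efficiently verifiable since given $(s,y,x')$ one can recompute $f_\lambda^{H(s\|\cdot)}(s,x')$ and compare with $y$. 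Hence from the non-uniform attacker I would extract an oracle-independent uniform adversary $\B$ for $f_\lambda$ that samples $s$ itself, simulates the remaining slices of $H$ via a $2q$-wise independent hash function (justified by \Cref{lem:simulation_QRO} in the quantum case, and trivially in the classical case), draws a plausible advice string from the induced distribution, and runs the non-uniform attacker on $(s\|y)$; its success probability is non-negligible, contradicting the oracle-independent one-wayness of $f_\lambda$.

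The main obstacle is largely a bookkeeping matter: confirming that the parameter regime in which \cite{EC:CDGS18,FOCS:CGLQ20} give a non-trivial bound is compatible with our one-wayness experiment (polynomial $S$ and $T$, negligible additive loss, and the inverse-polynomial success gap we are trying to contradict) and that the non-uniform advice distribution about $H$ can be coherently simulated jointly with the sampled slices $H_{s'}$ for $s' \neq s$. Both points are standard in prior applications of these preprocessing theorems, so I expect only routine care is needed to make the reduction go through.
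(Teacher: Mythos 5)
Your high-level strategy is the same as the paper's: invoke the salting-defeats-preprocessing results of \cite{EC:CDGS18,FOCS:CGLQ20}, noting that inverting $f_\lambda^{H(s\|\cdot)}$ is an efficiently verifiable game indexed by the salt $s$ and that for a random $H$ the slices $\{H(s\|\cdot)\}_s$ are independent uniform oracles. The paper states the theorem precisely as an immediate corollary of those works on this basis.

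There is, however, a genuine gap in the explicit reduction you sketch. The oracle-independent adversary $\B$ cannot ``draw a plausible advice string from the induced distribution'': the advice $a_n$ is a function of the \emph{entire} oracle $H$, in particular of the very slice $H(s\|\cdot)$ that serves as $\B$'s challenge oracle, and $\B$ has only black-box query access to that slice. Sampling $a_n$ correctly conditioned on $H(s\|\cdot)$ would require reading the whole (exponentially large) slice, and freely re-sampling that slice would mean $\B$ is no longer attacking the oracle-independent one-wayness of $f_\lambda$. Circumventing exactly this obstacle is the technical content of \cite{EC:CDGS18,FOCS:CGLQ20}; their theorems are \emph{non-constructive} advantage bounds proved via compression or presampling (roughly, ``advantage of any $(S,T)$-preprocessing adversary on a freshly salted instance $\leq$ advantage of a uniform $\poly(S,T)$-query adversary plus a small additive loss''), not reductions in which a uniform adversary simulates the preprocessing one by sampling its advice. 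The correct argument applies that bound as a black box: together with oracle-independent one-wayness it makes the expected advantage (over $H,s,x$) negligible for each fixed algorithm and advice length, and the measure-one-over-$H$ clause of the non-uniform definition then follows by the same measure argument as in \Cref{thm:independent_to_dependent}, here taking the union over the countably many algorithms only, since the advice has already been absorbed into the preprocessing bound. The remaining ingredients in your write-up---the slice decomposition, the efficient-verifiability observation, and simulating the untouched slices by a $2q$-wise independent function---are correct and are exactly what one needs to invoke the cited theorems.
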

\begin{theorem}\label{thm:uniform_to_non-uniform_CR}
If $\{f_\lambda\}_\lambda$ is a potentially keyed function family that is collision resistant in the CROM (resp. QROM), then the \emph{keyed} function $\{g_\lambda\}_\lambda$ where $g_\lambda(k_0||k_1,x)=f_\lambda^{H(k_1||\cdot)}(k_0,x)$ and where $k_1\in\bit^\secpar$ is collision resistant against in the AI-CROM (resp. AI-QROM). 
\end{theorem}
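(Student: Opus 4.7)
The plan is to derive this theorem as a direct corollary of the salting-defeats-preprocessing results of Coretti, Dodis, Guo, and Steinberger~\cite{EC:CDGS18} in the CROM and Chung, Guo, Liu, and Qian~\cite{FOCS:CGLQ20} in the QROM, which the paper has already highlighted as the right tool. These results say that, for any efficiently verifiable oracle problem, the advantage of a non-uniform adversary holding $S$ bits of advice about $H$ and making $T$ online queries to a salted sub-oracle $H(s\|\cdot)$ (for a uniformly random $\secpar$-bit salt $s$) is at most the best purely online advantage plus a term that is negligible whenever $S,T$ are polynomial and $|s|=\secpar$.

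First I would identify the salt. The key $k_1\in\bit^\secpar$ in $g_\lambda$ plays exactly the role of $s$: writing $H_{k_1}(z)\defeq H(k_1\|z)$, the function $g_\lambda^H(k_0\|k_1,x)=f_\lambda^{H_{k_1}}(k_0,x)$ is computed relative to $H_{k_1}$, and, for uniform $k_1$, $H_{k_1}$ is distributed as a fresh uniformly random oracle. A non-uniform oracle-dependent adversary against $g_\lambda$ chooses its advice $a_\secpar$ after seeing $H$ and then receives $k_0\|k_1$; this is precisely the preprocessing model of~\cite{EC:CDGS18,FOCS:CGLQ20} applied to the salted oracle $H_{k_1}$.

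Next I would invoke the salting theorem in black-box fashion. Since the collision predicate $g_\lambda(k,x_0)=g_\lambda(k,x_1)\wedge x_0\ne x_1$ is efficiently checkable (classically and quantumly), the theorem yields that the non-uniform success probability is bounded by the success probability of an \emph{online, advice-free} adversary against $f_\lambda$ with quantum (resp.\ classical) oracle access to a uniformly random function, plus $\negl(\secpar)$. The assumed oracle-independent collision resistance of $\{f_\lambda\}_\lambda$ exactly bounds this online term by $\negl(\secpar)$, and summing the two negligible contributions gives non-uniform oracle-dependent collision resistance of $\{g_\lambda\}_\lambda$. The argument for proofs of quantumness is word-for-word the same: the verification predicate $\verify^{H}(k,\pi)=\top$ is efficiently checkable, $k_1$ again serves as the salt inside the protocol by replacing $H$ with $H(k_1\|\cdot)$ throughout $(\prove,\verify)$, and oracle-independent soundness lifts to non-uniform oracle-dependent soundness.

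The only real obstacle is parameter bookkeeping: one must check that the quantitative bound supplied by~\cite{FOCS:CGLQ20} in the quantum case (and~\cite{EC:CDGS18} in the classical case) does indeed give a negligible slack for our parameter regime, namely polynomial $S$ and $T$ against an $\secpar$-bit salt, and that the winning-condition verifier uses only polynomially many queries so that it is absorbed into $T$. Both conditions hold immediately for collision resistance and for proofs of quantumness, so no genuinely new technical ingredient is required beyond a careful invocation of the cited theorems.
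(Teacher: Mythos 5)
Your proposal takes essentially the same route as the paper: the paper also derives Theorem~\ref{thm:uniform_to_non-uniform_CR} as an immediate corollary of the salting-defeats-preprocessing theorems of~\cite{EC:CDGS18} (CROM) and~\cite{FOCS:CGLQ20} (QROM), identifying $k_1$ as the salt, noting that both the collision predicate and the proof-of-quantumness verification predicate are efficiently checkable, and concluding that the non-uniform advantage is bounded by the oracle-independent advantage plus a term that is negligible for polynomial advice/query budgets against an $\secpar$-bit salt.
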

\begin{theorem}\label{thm:poq_uniform_to_non-uniform} 
If $(\prove_0,\verify_0)$ is a proof of quantumness that satisfies soundness in the CROM, then $(\prove,\verify)$ satisfies soundness in the AI-CROM, where $\prove^H(1^\secpar,k_0||k_1)=\prove_0^{H(k_1||\cdot)}(1^\secpar,k_0)$ and $\verify^H(1^\secpar,k_0||k_1,\pi)=\verify_0^{H(k_1||\cdot)}(1^\secpar,k_0,\pi)$ and where $k_1\in\bit^\secpar$.
\end{theorem}

We next discuss how salting actually does lift security for proofs of min-entropy and randomness from the uniform to non-uniform case in the classical advice setting. We note that~\cite{FOCS:CGLQ20} actually \emph{does} work, by fixing a particular string, and having the adversary win if it can cause the verifier to output that string. This event occurs with exponentially-small probability, but~\cite{FOCS:CGLQ20} would handle exponentially small probabilities by setting the salt to be appropriately larger than the min-entropy requirement. This limits the utility of a proof of min-entropy, since the large salt could have just been used as the source of randomness. In the following, we show that small salts can, in fact, be used, though it requires a more careful proof and cannot simply rely on the prior theorem statements.

\begin{theorem}\label{thm:randomness_uniform_to_non-uniform} 
If $(\prove_0,\verify_0)$ is a proof of min-entropy (resp. proof of randomness) in the QROM, then $(\prove,\verify)$ is a proof of min-entropy (resp. proof of randomness) in the AI-QROM, where $\prove^H(1^\secpar,k_0||k_1,1^h)=\prove_0^{H(k_1||\cdot)}(1^\secpar,k_0,1^{h+1})$ and $\verify^H(1^\secpar,k_0||k_1,1^h,\pi)=\verify_0^{H(k_1||\cdot)}(1^\secpar,k_0,1^{h+1},\pi)$ and where $k_1\in\bit^\secpar$.
\end{theorem}

We defer the proof to \Cref{sec:proof_randomness_uniform_to_non-uniform}.

\section{Error Correcting Codes.}\label{sec:codes}
In this section, we first review basic definitions and facts on error correcting codes. 
Then, we state requirements of codes that are needed for our purpose. 
Then, we show that such a code exists based on known results. 


\subsection{Definitions}\label{sec:codes_definitions}
A code of length $n\in \mathbb{N}$ over an alphabet $\Sigma$ (which is a finite set) is a subset $C\subseteq \Sigma^n$.

\paragraph{Linear codes.}
A code $C$ is said to be a linear code if its alphabet is $\Sigma=\FF_q$ for some prime power $q$ and $C\subseteq \FF_q^n$ is a linear subspace of $\FF_q^n$. 

\paragraph{Folded linear codes.} 
A code $C$ is said to be a folded linear code \cite{Krachkovsky03,GR08} if its alphabet is $\Sigma=\FF_q^m$ for some prime power $q$ and a positive integer $m$ and $C\subseteq \Sigma^{n}$ is a linear subspace of $\FF_q^{nm}$ where $n$ is the length of $C$ and we embed $C$ into $\FF_q^{nm}$ in the canonical way.
Linear codes are the special case of folded linear codes where $m=1$. 
For a linear code $C\subseteq \FF_q^n$ and a positive integer $m$ that divides $n$, we define its $m$-folded version $C^{\mfol}$ as follows:
\begin{align*}
    C^{\mfol}\defeq \{((x_1,\ldots,x_m),(x_{m+1},\ldots,x_{2m})\ldots,(x_{n-m+1},\ldots,x_{n})):(x_1,\ldots,x_n)\in C\}.
\end{align*}
Clearly, $C^{\mfol}$ is a folded linear code. 
Conversely, any folded linear code can be written as $C^{\mfol}$ for some linear code $C$ and a positive integer $m$. 


\paragraph{Dual codes.}
Let $C$ be a linear code of length $n$ and dimension $k$ over $\FF_q$. The \emph{dual code} $C^\perp$ of $C$ is defined as the orthogonal complement of $C$ as a linear space over $\FF_q$, i.e., 
$$C^{\perp}\defeq \{\vz\in \FF_q^n: \vx \cdot \vz = 0 \text{~for~all~}\vx \in C\}.$$
$C^{\perp}$ is a linear code of length $n$ and dimension $n-k$ over $\FF_q$.\footnote{Note that it does not always hold that $\FF_q^n=C \oplus C^\perp$ since the bilinear form $(\vx,\vy)\mapsto \vx\cdot\vy$ does not satisfy the axioms of the inner product (i.e., there may exist $\vx\neq 0$ such that $\vx\cdot \vx =0$).}

We define dual codes for folded linear codes similarly. That is, for a folded linear code $C\in \Sigma^n$ over the alphabet $\Sigma=\FF_q^m$, its dual $C^\perp$ is defined as 
$$C^{\perp}\defeq \{\vz\in \Sigma^n: \vx \cdot \vz = 0 \text{~for~all~}\vx \in C\}.\footnote{Recall that $\vx \cdot \vz$ for $\vx,\vz\in \Sigma^n$ is defined in \Cref{sec:finite_field}.}$$
It is clear from the definition that $(C^{\perp})^{\mfol}=(C^{\mfol})^{\perp}$ for any linear codes $C$ of length $n$ and positive integer $m$ that divides $n$.

\begin{lemma}\label{lem:fourier_dual}
For a folded linear code $C\subseteq \Sigma^n$, if we define 
\[
f(\vx)\defeq 
\begin{cases}
\frac{1}{\sqrt{|C|}}& \vx\in C\\
0& \text{otherwise}
\end{cases},
\]
then we have 
\[
\hat{f}(\vz)= 
\begin{cases}
\frac{1}{\sqrt{|C^{\perp}|}}& \vz\in C^{\perp}\\
0& \text{otherwise}
\end{cases}.
\]
\end{lemma}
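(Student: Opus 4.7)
The plan is a direct character-sum computation. First I would substitute the definition of $f$ into the formula for $\hat f$, which yields
\[
\hat f(\vz) \;=\; \frac{1}{|\Sigma|^{n/2}\sqrt{|C|}}\,S(\vz), \qquad S(\vz)\defeq \sum_{\vx\in C}\omega_p^{\Tr(\vx\cdot\vz)}.
\]
So the whole question reduces to evaluating the character sum $S(\vz)$ in two cases.

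The easy case is $\vz\in C^\perp$: by definition $\vx\cdot\vz=0$ for every $\vx\in C$, so every summand is $1$ and $S(\vz)=|C|$. The main step is the other case, $\vz\notin C^\perp$. Here I would argue that $\chi(\vx)\defeq \omega_p^{\Tr(\vx\cdot\vz)}$ is a \emph{nontrivial} additive character of $C$. To see this, pick $\vx_0\in C$ with $\vx_0\cdot\vz\neq 0$; since $C$ is an $\FF_q$-linear subspace, the scalars $\{a\vx_0 : a\in\FF_q\}$ lie in $C$ and their pairings $a(\vx_0\cdot\vz)$ sweep out all of $\FF_q$. Because $\Tr:\FF_q\to\FF_p$ is a nonzero $\FF_p$-linear map and $\FF_p$ has no proper nontrivial subgroups, $\Tr$ is surjective, so $\chi$ takes nontrivial values on $C$. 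A standard character-orthogonality argument (or equivalently the already-stated identity \Cref{eq:sum_is_zero} applied on a coset decomposition of $C$ by the kernel of $\chi$) then gives $S(\vz)=0$.

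Finally I would assemble the pieces. For a folded linear code $C\subseteq \Sigma^n=\FF_q^{nm}$ of rank $k$, the dual $C^\perp$ has rank $nm-k$, so $|C|\cdot|C^\perp|=q^{nm}=|\Sigma|^n$. Therefore for $\vz\in C^\perp$,
\[
\hat f(\vz) \;=\; \frac{|C|}{|\Sigma|^{n/2}\sqrt{|C|}} \;=\; \frac{\sqrt{|C|}}{|\Sigma|^{n/2}} \;=\; \frac{1}{\sqrt{|C^\perp|}},
\]
and $\hat f(\vz)=0$ otherwise, which is the claim.

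I do not expect any real obstacle; the only subtlety is making sure the character $\chi$ is verified to be nontrivial on $C$ (not just on the ambient $\FF_q^{nm}$) when $\vz\notin C^\perp$, which is exactly where $\FF_q$-linearity of $C$ and surjectivity of $\Tr$ both enter. If one prefers to avoid invoking character orthogonality explicitly, the vanishing in the nontrivial case can instead be derived by grouping the sum over cosets of $\ker\chi$ in $C$ and applying \Cref{eq:sum_is_zero} on $\FF_p$.
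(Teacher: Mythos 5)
Your argument is correct, but it handles the vanishing case ($\vz\notin C^\perp$) differently from the paper. The paper first computes $\hat f(\vz)=1/\sqrt{|C^\perp|}$ for $\vz\in C^\perp$ directly (that part matches your easy case), and then, rather than evaluating the character sum for $\vz\notin C^\perp$, simply observes that the values on $C^\perp$ already exhaust the entire $\ell_2$-mass of $\hat f$ by Parseval's equality (\Cref{lem:Parseval}), forcing $\hat f$ to vanish elsewhere. Your route instead shows $\chi(\vx)=\omega_p^{\Tr(\vx\cdot\vz)}$ is a nontrivial additive character of the group $C$ when $\vz\notin C^\perp$ (using $\FF_q$-linearity of $C$ and surjectivity of $\Tr$), and then invokes orthogonality of characters. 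Both are valid; Parseval is slicker and avoids the character-nontriviality verification, while your direct computation is more self-contained and makes the mechanism of cancellation explicit. One small streamlining of your version: once you have $\vx_1\in C$ with $\Tr(\vx_1\cdot\vz)\neq 0$, the substitution $\vx\mapsto\vx+\vx_1$ (a bijection of $C$) multiplies $S(\vz)$ by $\omega_p^{\Tr(\vx_1\cdot\vz)}\neq 1$, giving $S(\vz)=0$ without needing to set up the coset decomposition over $\ker\chi$.
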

\begin{proof}
For $\vz\in C^{\perp}$, we have 
\begin{align*}
    \hat{f}(\vz)
    &=\frac{1}{|\Sigma|^{n/2}}\sum_{\vx\in \Sigma^n}f(\vx)\omega_p^{\Tr(\vx \cdot \vz)}\\
    &=\frac{1}{|\Sigma|^{n/2}}\sum_{\vx\in C}\frac{1}{\sqrt{|C|}}\\
    &=\frac{1}{\sqrt{|C^{\perp}|}}
\end{align*}
where the final equality follows from $|C|\cdot |C^\perp|=|\Sigma|^n$. 
That $\hat{f}(\vz)=0$ for $\vz \notin C^{\perp}$ immediately follows from the above and \Cref{lem:Parseval}. 
\end{proof}

\paragraph{List recovery.}
We say that a code $C\subseteq \Sigma^n$ is $(\zeta,\ell,L)$-list recoverable if for any subsets $S_i\subseteq \Sigma$ such that $|S_i|\leq \ell$ for $i\in [n]$, we have 
\begin{align*}
    |\{(x_1,...,x_n) \in C:|\{i\in [n]:x_i\in S_i\}|\geq (1-\zeta)n\}|\leq L.
\end{align*}
Note that list recoverability in the literature usually requires that the list of all codewords $(x_1,...,x_n) \in C$ satisfying $|\{i\in [n]:x_i\in S_i\}|\geq (1-\zeta)n$ can be computed from $\{S_i\}_{i\in [n]}$ in time polynomial in $|\Sigma|,n,\ell$. 
However, we will not require this.

\subsection{Suitable Codes}
The following lemma claims the existence of codes that are suitable for our purpose. 
\begin{lemma}[Suitable codes]\label{lem:good_codes}
For any constants
$0<c <c' <1$, there is an explicit family $\{C_\secpar\}_{\secpar \in \mathbb{N}}$ of folded linear codes over the alphabet $\Sigma=\FF_q^m$ of length $n$ where
$|\Sigma|=2^{\secpar^{\Theta(1)}}$,   $n=\Theta(\secpar)$, 
and $|C_\secpar|\geq 2^{n+\secpar}$ 
that satisfies the following.\footnote{\Cref{item:hw} is not needed for the construction of a proof of quantumness given in \Cref{sec:PoQ}. It is used only in the counterexample for one-way functions given in \Cref{sec:separation_OWF}.}
\begin{enumerate}
    \item \label{item:list_recovery}
    $C_\secpar$ is $(\zeta,\ell,L)$-list recoverable where $\zeta=\Omega(1)$, $\ell=2^{\secpar^c}$ and $L=2^{\tilde{O}(\secpar^{c'})}$.
    \item \label{item:dual_decode}
 There is an efficient deterministic decoding algorithm $\decode_{C_\secpar^\perp}$ for $C_\secpar^\perp$ that satisfies the following.
  Let $\dist$ be a distribution over $\Sigma$ that takes $\vzero$ with probability $1/2$ and otherwise takes a uniformly random element of $\Sigma\setminus \{\vzero\}$. 
  Then, it holds that
        \begin{align*}
       \Pr_{\ve\sample \dist^n}[\forall \vx\in C_\secpar^\perp,~\decode_{C_\secpar^\perp} (\vx+\ve)=\vx]=1-2^{-\Omega(\secpar)}.
    \end{align*}
 \item \label{item:hw}
 For all $j\in [n-1]$, $\Pr_{\vx\sample C_\secpar}[\hw(\vx)=n-j]\leq \left(\frac{n}{|\Sigma|}\right)^{j}$.
\end{enumerate}
\end{lemma}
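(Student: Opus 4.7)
The plan is to instantiate $C_\lambda$ as the $m$-fold $C'^{(m)}$ of a Reed--Solomon code $C'$ of length $N = nm$ and dimension $K$ over $\FF_q$, setting $|\Sigma| = q^m = 2^{\lambda^{\Theta(1)}}$, $n = \Theta(\lambda)$, and tuning the folded rate $R \defeq K/N$ to lie in a narrow window that makes all three properties simultaneously satisfiable. The size lower bound $|C_\lambda| = |\Sigma|^{nR} \geq 2^{n+\lambda}$ is then mild, so the real constraints come from Properties~1 and~2. Property~1 is essentially a direct citation of Guruswami--Rudra~\cite{GR08}: for folded RS of rate $R$ with the folding $m$ taken large enough relative to $\log\ell = \lambda^c$, the code is $(\zeta, \ell, L)$-list recoverable with $\zeta = \Omega(1)$ and $\log L = \tilde{O}(\lambda^{c'})$ whenever $R + \zeta$ is bounded away from~$1$.

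For Property~2, the key observation is that the folded dual equals the $m$-fold of the unfolded dual, so $C_\lambda^\perp = (C'^\perp)^{(m)}$ where $C'^\perp$ is a generalized Reed--Solomon code of rate $1 - R$. The decoder $\decode_{C^\perp}$ will invoke the Guruswami--Sudan list decoder~\cite{GS99} (or the Guruswami--Rudra decoder applied to the folded GRS) on its input at the appropriate list-decoding radius and output the unique element of the returned list, or $\bot$ if the list contains more than one codeword. The argument that this succeeds with probability $1 - 2^{-\Omega(\lambda)}$ over $\ve \sample \dist^n$ goes in two parts. First, a Chernoff bound on the $n$ independent $\dist$-distributed coordinates places the true codeword in the list, except with probability $2^{-\Omega(n)}$. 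Second, we show that except with probability $2^{-\Omega(\lambda)}$ over $\ve$, no other codeword of $C_\lambda^\perp$ is at least as close to $\ve$ as $\vzero$ is. For each nonzero $\vx' \in C_\lambda^\perp$ of folded Hamming weight $w$, we express the event $\hw(\vx' - \ve) \leq \hw(\ve)$ as a sign-balance condition on the $w$ independent contributions coming from the non-zero positions of $\vx'$, and bound its probability by $2^{-\Omega(w)}$ via a refined Chernoff-type argument that exploits the mass of $\dist$ at $\vzero$. Feeding this estimate into a geometric sum against the folded-MDS weight enumerator of $C_\lambda^\perp$ (which mirrors Property~3 applied to the dual) yields the desired overall bound, provided $R$ is chosen large enough that the exponent in the sum wins.

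Property~3 is then a direct counting estimate that leverages the MDS property of $C'$. A folded codeword of $C_\lambda$ has Hamming weight exactly $n - j$ iff some set of $j$ folded blocks is entirely zero, which forces $jm$ prescribed roots of the underlying polynomial and thus pins down a subspace of $C'$ of dimension at most $\max(K - jm, 0)$, contributing at most a $|\Sigma|^{-j}$ fraction of all codewords. Summing over the $\binom{n}{j} \leq n^j$ choices of vanishing blocks yields the advertised $(n/|\Sigma|)^j$ bound.

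The main obstacle will be Property~2, and in particular the delicate balancing of $R$: Property~1 demands $R \leq 1 - \zeta$ with $\zeta = \Omega(1)$; the list decoder in Property~2 needs $R$ high enough that its decoding radius covers a typical $\ve$; and the union bound over $|C_\lambda^\perp| = |\Sigma|^{(1-R)n}$ codewords --- which is large because $\log|\Sigma| = \lambda^{\Theta(1)}$ --- requires the per-codeword estimate and the weight enumerator to combine favorably in the geometric sum. Carefully choosing $R$, $m$, and $q$ so that all three constraints are met simultaneously, and polishing the per-codeword bound to extract enough exponential savings from the block structure of $\dist$, is where the bulk of the technical work sits.
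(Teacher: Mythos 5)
Your plan for the code itself (folded Reed--Solomon with the right rate and folding) and your sketches for Properties~1 and~3 match the paper's proof and are fine. The trouble is concentrated in the analysis you propose for Property~2, and it is a real gap, not a matter of polishing.

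You want to union-bound over nonzero codewords $\vy \in C_\lambda^\perp$ with a per-codeword estimate of $2^{-\Omega(w)}$, where $w \le n = \Theta(\lambda)$ is the folded weight, extracted from a Chernoff-type argument on the $w$ Bernoulli ``is this block of $\ve$ zero'' bits. This cannot close. The code $C_\lambda^\perp$ has $|C_\lambda^\perp| = q^{N-k-1}$ codewords with $N = nm = \Theta(\lambda^2)$ and $q = \lambda^{\Theta(1)}$, so $|C_\lambda^\perp| = 2^{\Theta(\lambda^2 \log\lambda)}$, while your per-codeword bound is at best $2^{-O(\lambda)}$. The union bound is off by an exponential-in-$\lambda$ factor, and reweighting by the dual weight enumerator does not help because essentially all $q^{\Theta(N)}$ dual codewords have folded weight $\Theta(n)$ (the dual is also close to folded MDS), so the sum is dominated by codewords where your per-codeword bound gives only $2^{-\Theta(n)}$. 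More fundamentally, the Bernoulli pattern of which blocks of $\ve$ are zero carries only $n$ bits of entropy; a per-codeword probability bound derived from it alone can never be smaller than $2^{-n}$, which is nowhere near $|C_\lambda^\perp|^{-1}$.

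What the paper's proof actually leverages is the other, much larger, source of randomness: conditioned on the zero pattern $S_\ve = S^*$, the restriction $\ve_{\bar S^*}$ is (up to a negligible correction) \emph{uniform over $\FF_q^{|\bar S^*|}$}, giving roughly $(N/2)\log q = \Theta(\lambda^2 \log\lambda)$ bits of entropy --- the same order of magnitude as $\log |C_\lambda^\perp|$. It then observes that if some nonzero dual codeword $\vy$ is within distance $(1/2+\epsilon)N$ of $\vx+\ve$, then because a nonzero degree-$d$ polynomial has $\hw(\vy_{S^*}) \ge |S^*| - d$, one is forced to have $\ve_{\bar S^*}$ agreeing with $\vy_{\bar S^*}$ on a set $T \subseteq \bar S^*$ of size $|\bar S^*| - \lceil d + 2\epsilon N\rceil$; and for any fixed such $T$, the probability over uniform $\ve_{\bar S^*}$ that \emph{any} dual codeword matches on $T$ is $q^{-(|T|-(d+1))}$ because a GRS codeword is determined by its values on $d+1$ coordinates. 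The union bound is then taken over the $\binom{|\bar S^*|}{\lceil d+2\epsilon N\rceil} \le q^{d+2\epsilon N}$ choices of $T$, not over codewords, and the exponents cancel favorably provided $\alpha > 5/6$ and $\epsilon$ is small. Your ``sign-balance + Chernoff'' per-codeword estimate simply does not see this $\FF_q$-level randomness or the MDS ``determined by $d+1$ values'' structure, and without one or both of those the calculation cannot be rescued. (A smaller issue: your decoder needs to filter the Guruswami--Sudan list to the radius $(1/2+\epsilon)N$, since the list decoder is run at the larger radius $N - \sqrt{dN}$ and can return codewords too far from the true one; this is an easy fix but is worth noting.)
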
 
Our instantiation of $C_\secpar$ is just folded Reed-Solomon codes with an appropriate parameter setting. 
\Cref{item:list_recovery} is a direct consequence of the list recoverability of folded Reed-Solomon codes in a certain parameter regime \cite{GR08,RudraThesis}. 
For proving \Cref{item:dual_decode}, we first remark that the duals of folded Reed-Solomon codes are folded \emph{generalized} Reed-Solomon codes, which have efficient list decoding algorithms~\cite{GS99}. 
Then, we prove that the list decoding algorithm returns a unique decoding result when the error comes from the distribution $\dist^n$. \Cref{item:hw} follows from a simple combinatorial argument.  
The proof of  \Cref{lem:good_codes} is given in \Cref{sec:construction_code}.  
\begin{remark}
Folded Reed-Solomon codes are the only instantiation of $C_\lambda$ which we are aware of. Especially, we are not aware of any other codes that satisfy list-recoverability with appropriate parameters for our purpose. 
\end{remark}



\subsection{Proof of \Cref{lem:good_codes}}\label{sec:construction_code}
In this subsection, we prove \Cref{lem:good_codes}, i.e., we give a construction of codes that satisfy the properties stated in \Cref{lem:good_codes}.  

\subsubsection{Preparation}
Before giving the construction, we need some preparations. 
\paragraph{Generalized Reed-Solomon codes.}
 We review the definition and known facts on (generalized) Reed-Solomon codes. 
 See e.g., \cite[Section 6]{Lindell_Coding} for more details. 

A generalized Reed-Solomon code $\GRS_{\FF_q,\gamma,k,\vecv}$ over $\FF_q$ w.r.t. a generator $\gamma$ of $\FF_q^*$, the degree parameter $0\leq k\leq  N$, and $\vecv=(v_1,...,v_N)\in {\FF_q^*}^N$ where $N\defeq q-1$ is defined as follows:
\[
\GRS_{\FF_q,\gamma,k,\vecv}\defeq \{(v_1f(\gamma),v_2f(\gamma^2)...v_Nf(\gamma^{N})):f\in \FF_q[x]_{deg\leq k}\}
\]
where $\FF_q[x]_{deg\leq k}$ denotes the set of polynomials over $\FF_q$ of degree at most $k$.\footnote{Reed-Solomon codes whose length $N$ is smaller than $q-1$ are often considered. But we focus on the case of $N=q-1$.} 
We remark that $\GRS_{\FF_q,\gamma,k,\vecv}$ is a linear code over $\FF_q$ that has length $N=q-1$ and dimension $k+1$.  
A Reed-Solomon code is a special case of a generalized Reed-Solomon code where  $\vecv=(1,1,\ldots,1)$. 
We denote it by $\RS_{\FF_q,\gamma,k}$ (which is equivalent to $\GRS_{\FF_q,\gamma,k,(1,1,\ldots,1)})$. 
The dual of $\RS_{\FF_q,\gamma,k}$ is $\GRS_{\FF_q,\gamma,N-k-2,\vecv}$ for some $\vecv\in \FF_q^N$~\cite[Claim 6.3]{Lindell_Coding}.\footnote{Recall that the dimension of (generalized) Reed-Solomon codes is the degree parameter $k$ plus one.}  

There is a classical deterministic list decoding algorithm $\GRSListDecode_{\FF_q,\gamma,k,\vecv}$ for $\GRS_{\FF_q,\gamma,k,\vecv}$ that corrects up to $N-\sqrt{kN}$ errors in polynomial time in $N$~\cite{GS99}.\footnote{\cite{GS99} described the list decoding algorithm for Reed-Solmon codes, but that can be extended to one for generalized Reed-Solomon codes in a straightforward manner since scalar multiplications in each coordinate do not affect the decodability.} 
More precisely, for any $\vz\in \FF_q^N$, $\GRSListDecode_{\FF_q,\gamma,k,\vecv}(\vz)$ returns the list of all  $\vx\in \GRS_{\FF_q,\gamma,k,\vecv}$ such that 
$\hw(\vx-\vz)< N-\sqrt{kN}$. 



\paragraph{Folded Reed-Solomon codes.}
Let $m$ be a positive integer that divides $N=q-1$. 
The $m$-folded version $\RS_{\FF_q,\gamma,k}^{\mfol}$ of $\RS_{\FF_q,\gamma,k}$ is a folded linear code of length $n=N/m$.\footnote{We remark that the roles of $n$ and $N$ are swapped compared with \cite{GR08,RudraThesis}.\label{footnote:n_and_N}} 
It is known that $\RS_{\FF_q,\gamma,k}^{\mfol}$ is list recoverable in the following parameter regime~\cite{GR08,RudraThesis}.\footnote{The following lemma is based on Rudra's PhD thesis~\cite{RudraThesis}. 
The same result is also presented in the journal version~\cite{GR08}, but note that there is a notational difference in the definition of list recovery: the definition of $(\zeta,\ell,L)$-list recovery of \cite{GR08} means $((1-\zeta),\ell,L)$-list recovery of \cite{RudraThesis} and this paper. Also remark \Cref{footnote:n_and_N}.} 
\begin{lemma}[{\cite[Sec. 3.6]{RudraThesis}}]\label{lem:FRS_list_recovery}
Let $q$ be a prime power, $\gamma\in \FF_q^*$ be a generator, $N\defeq q-1$, $k<N$ be a positive integer, and $m$ be a positive integer that divides $N$.
For positive integers $\ell$, $r$, and $s\leq m$ and a real $0<\zeta<1$, suppose that the following inequalities hold:
\begin{align}
    \frac{(1-\zeta)N}{m}\geq \left(1+\frac{s}{r}\right)\frac{\sqrt[s+1]{N\ell k^s}}{m-s+1} \label{eq:condition_one}\\
    (r+s)\sqrt[s+1]{\frac{N\ell}{k}}<q. \label{eq:condition_two}
\end{align}
Then, $\RS_{\FF_q,\gamma,k}^{\mfol}$ is $(\zeta,\ell,L)$-list recoverable where $L=q^s$.
\end{lemma}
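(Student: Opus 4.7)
The plan is to follow the Guruswami--Rudra interpolation/root-finding framework for list recovery of folded Reed--Solomon codes. Identify each codeword $\vx\in\RS_{\FF_q,\gamma,k}^{\mfol}$ with the underlying polynomial $f\in\FF_q[X]$ of degree at most $k$, so that the $i$-th folded coordinate is $(f(\gamma^{(i-1)m+1}),\ldots,f(\gamma^{im}))$. A codeword belongs to the list-recovery output precisely when this tuple lies in $S_i$ for at least $(1-\zeta)n$ indices $i$. The high-level strategy is to construct an algebraic ``interpolating'' polynomial $Q$ such that every admissible $f$ satisfies a shifted functional equation in $Q$, and then bound the number of $f$'s satisfying that equation.

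First I would perform the interpolation step: find a nonzero $(s+1)$-variate polynomial $Q(X,Y_1,\ldots,Y_s)\in\FF_q[X,Y_1,\ldots,Y_s]$ of $(1,k,\ldots,k)$-weighted degree at most some threshold $D$, vanishing with multiplicity $r$ at every tuple $(\gamma^j,y_1,\ldots,y_s)$ where $(y_1,\ldots,y_s)$ ranges over the length-$s$ windows extracted from the lists $S_{\lceil j/m\rceil}$. A standard dimension count (number of monomials versus number of multiplicity-$r$ vanishing conditions) shows such a $Q$ exists whenever $D$ has order $r\cdot (N\ell k^s)^{1/(s+1)}$; the inequality \Cref{eq:condition_two} is the precise bookkeeping that produces a valid $D<q$, which is what one needs so that univariate evaluation identities on $\FF_q^*$ lift to identities in $\FF_q[X]$. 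Second, for any $f$ consistent with at least $(1-\zeta)n$ folded positions, consider $R(X)\defeq Q(X,f(X),f(\gamma X),\ldots,f(\gamma^{s-1} X))$; each good folded block contributes $m-s+1$ length-$s$ sub-windows, each such sub-window forcing a root of $R$ of multiplicity $r$, while $\deg R\leq D+(s-1)k$. The inequality \Cref{eq:condition_one} is exactly the parameter relation that makes the total multiplicity-weighted root count strictly exceed $\deg R$, so $R\equiv 0$.

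Finally I would bound the number of $f\in\FF_q[X]$ of degree at most $k$ satisfying the shifted functional equation $Q(X,f(X),f(\gamma X),\ldots,f(\gamma^{s-1} X))\equiv 0$ by $q^s$. This is the main obstacle: it requires exploiting the very specific algebraic structure of folded Reed--Solomon, namely that $\gamma$ generates $\FF_q^*$, to show that the set of admissible $f$'s is contained in an affine subspace of $\FF_q[X]_{\deg\leq k}$ of dimension at most $s$. I would follow the Guruswami--Rudra argument, which recasts the shift $f(X)\mapsto f(\gamma X)$ as an endomorphism and reduces the solvability question to a linearized (additive) equation over a suitable quotient of $\FF_q[X]/(X^N-1)$; the dimension-$s$ bound then drops out of linear algebra over the relevant coefficient field. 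The interpolation and root-identification steps are standard parameter chasing that reproduce the two stated inequalities verbatim, so the genuine technical content lies in this final algebraic bound, which I would import wholesale from~\cite{RudraThesis}.
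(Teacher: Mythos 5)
The paper does not prove this lemma; it is imported verbatim from Rudra's thesis (Section~3.6), and the text around the lemma only does bookkeeping to translate between the notational conventions of the two sources. So there is no ``paper proof'' to compare against; the most one can do is check your sketch against the argument in the cited source.

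Your high-level three-step outline---weighted-degree interpolation with multiplicity $r$ at the $(m-s+1)$ sliding windows per folded block, forcing $R(X)=Q(X,f(X),f(\gamma X),\dots,f(\gamma^{s-1}X))\equiv 0$ by root counting, then bounding the number of solutions to the resulting functional equation---is indeed the Guruswami--Rudra framework that Rudra's thesis uses, so the overall shape is right, and explicitly importing the final algebraic step wholesale is consistent with the paper doing the same. However, two of the explanatory claims in your sketch do not match the cited argument. First, the role you assign to \Cref{eq:condition_two} is off: it does not ensure ``$D<q$ so that identities on $\FF_q^*$ lift to $\FF_q[X]$'' (root counting already needs only that the multiplicity-weighted root count exceeds $\deg R$); rather, \Cref{eq:condition_two} is equivalent to bounding the $Y$-degree of $Q$ by $q$, and that bound is precisely what yields $L=q^s$. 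Concretely, one reduces the functional equation modulo $E(X)=X^{q-1}-\gamma$ (irreducible since $\gamma$ generates $\FF_q^*$), uses $f(\gamma X)\equiv f(X)^q\pmod{E}$, and obtains a single univariate polynomial over the field $\FF_q[X]/(E)$ in the unknown $\Gamma=f\bmod E$ of degree at most $\deg_Y(Q)\cdot q^{s-1}\leq q^s$; the list bound is just the number of roots of that polynomial. Second, and related, describing the final step as recasting the shift into a ``linearized (additive) equation'' whose solution set is an affine subspace of dimension at most $s$ is not the argument in Rudra's thesis---that is the later linear-algebraic list-decoding approach of Guruswami--Wang/Guruswami--Xing. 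Conflating the two would actually break your sketch if you tried to fill in the details, because the interpolation polynomial here is not restricted to be linear in the $Y_i$'s. Finally, a small bookkeeping slip: with $Q$ of $(1,k,\dots,k)$-weighted degree $\leq D$ and each $f(\gamma^iX)$ of degree $\leq k$, one gets $\deg R\leq D$, not $D+(s-1)k$.
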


\subsubsection{Construction}\label{subsec:code_construction}
We show that folded Reed-Solomon codes satisfy the requirements of \Cref{lem:good_codes} if we set parameters appropriately.
In the following, whenever we substitute non-integer values into integer variables, there is an implicit flooring to integers which we omit writing.
Fix $0<c<c'<1$, which defines $\ell=2^{\secpar^c}$.  
Our choices of parameters are as follows:
\begin{itemize}
    \item $q=2^{2\lfloor \log \secpar\rfloor}$ (which automatically defines $N=q-1$), $m=2^{\lfloor\log \secpar\rfloor}+1$, and $n=N/m=2^{\lfloor\log \secpar\rfloor}-1$.\footnote{This is an example of the parameter choice. Any prime power of the form $q=nm+1$ where $n$ and $m$ are positive integers such that $n=\Omega(\secpar)$ and $m=\Omega(\secpar)$ suffices.}  
    
    \item $\gamma$ is an arbitrary generator of $\FF_q^*$. Note that we can find $\gamma$ in polynomial time in $\lambda$ since $q=\poly(\lambda)$.
    \item 
    $k=\alpha N$ for an arbitrary constant $5/6<\alpha<1$. 
\end{itemize}
We set $C_\secpar \defeq \RS_{\FF_q,\gamma,k}^{\mfol}$. 
By the above parameter setting, it is easy to see that we have $|\Sigma|=2^{\secpar^{\Theta(1)}}$,  $n=\Theta(\secpar)$, 
and $|C_\secpar|=q^{k+1}\geq 2^{n+\secpar}$. 
We show that 
$\{C_\secpar\}_{\secpar\in \mathbb{N}}$ satisfies 
the requirements of \Cref{lem:good_codes}. 
For notational simplicity, we omit $\secpar$ from the subscript of $C$. 


\paragraph{First item.}
We prove \Cref{item:list_recovery} of \Cref{lem:good_codes}. First, we remark that we only have to prove that the requirement is satisfied for sufficiently large $\secpar$ since  we can set $L=q^{N}$ for finitely many $\secpar$ for which $(\zeta,\ell,L)$-list recoverability is trivially satisfied for any $\zeta$ and $\ell$.  
We apply \Cref{lem:FRS_list_recovery} with the following parameters:
\begin{itemize}
\item $s=\lambda^{c'}$.  Note that this satisfies the requirement $s\leq m$ in \Cref{lem:FRS_list_recovery} for sufficiently large $\secpar$ since $m=\Omega(\secpar)$ and  $c'<1$.
\item $r=\lambda^{c''}$ for a constant $c'<c''<1$.
\item $0<\zeta<1-\alpha$ is an arbitrary constant.
\end{itemize}
Based on the above parameter setting, we have $\lim_{\secpar \rightarrow \infty}(1+\frac{s}{r})=1$, $\lim_{\secpar \rightarrow \infty}\frac{m}{m-s+1}=1$, and $\lim_{\secpar \rightarrow \infty}\sqrt[s+1]{\ell}=1$ where we used $\ell=2^{\secpar^{c}}$ and $c<c'$.  
Therefore, \Cref{eq:condition_one} can be rearranged as follows:
\begin{align}
    1-\zeta \geq (1+o(1))\left(\frac{k}{N}\right)^{\frac{s}{s+1}}
\end{align}
This is satisfied for sufficiently large $s$ (which occurs for sufficiently large $\secpar$) since we assume 
$k=\alpha N$ and 
$\zeta<1-\alpha$. 

Similarly, by our choice of parameters, the LHS of 
\Cref{eq:condition_two} is $O(\secpar^{c''})$ and the RHS is $\Omega(\secpar^2)$. 
Since $c''<1$, \Cref{eq:condition_two} also holds for sufficiently large $\secpar$.

Thus, by \Cref{lem:FRS_list_recovery}, $\RS_{\FF_q,\gamma,k}^{\mfol}$ with the above parameter setting is $(\zeta,\ell,L)$-list recoverable where $L=q^s\leq (\secpar^2)^{\secpar^{c'}}=2^{\widetilde{O}(\secpar^{c'})}$.
This means that  \Cref{item:list_recovery} of \Cref{lem:good_codes} is satisfied.

\paragraph{Second item.}
Next, we prove \Cref{item:dual_decode} of \Cref{lem:good_codes}.
Since $C=\RS_{\FF_q,\gamma,k}^{\mfol}$ is a folded Reed-Solomon code, its dual $C^\perp$ is a folded generalized Reed-Solomon code $\GRS_{\FF_q,\gamma,N-k-2,\vecv}^{\mfol}$ for some $\vecv\in \FF_q^N$.
In the following, we think of an element of $\Sigma^n$ as an element of $\FF_q^{N}$ in the canonical way.
Then, $C^\perp=\GRS_{\FF_q,\gamma,N-k-2,\vecv}^{\mfol}$ is identified with $\GRS_{\FF_q,\gamma,N-k-2,\vecv}$.  
Let $d\defeq N-k-2$ and $0<\epsilon<0.09$ be a constant specified later.
We define  
$\decode_{C^\perp}$ as follows.
\begin{description}
\item[$\decode_{C^\perp}(\vz)$:] 
On input $\vz \in \FF_q^N$, 
it runs the list decoding algorithm $\GRSListDecode_{\FF_q,\gamma,N-k-2,\vecv}(\vz)$ to get a list of codewords. 
If there is a unique $\vx$ in the list such that $\hw(\vz-\vx)\leq (1/2+\epsilon)N$, it outputs $\vx$, 
and otherwise outputs $\bot$.   
\end{description}

We define a subset $\gooderrors\subseteq \FF_q^N$ as follows.
\begin{align*}
 \gooderrors\defeq \{\ve \in \FF_q^N: \hw(\ve)\leq (1/2+\epsilon)N ~\land~ \forall\vy\in C^\perp\setminus\{\vzero\},~\hw(\ve-\vy)>(1/2+\epsilon)N  \}.  
\end{align*}

 For any $\vx\in C^\perp$ and $\ve\in \gooderrors$, by the definition of $\gooderrors$, $\vx$ is the only codeword of $C^\perp$ whose Hamming distance from $\vx+\ve$ is smaller than or equal to $(1/2+\epsilon)N$.
  Moreover, since $k=\alpha N$ for $\alpha>5/6$ and $\epsilon<0.09$, it holds that $N-\sqrt{dN}=N-\sqrt{(1-\alpha)N^2-2N}\geq (1-\sqrt{1-\alpha})N>0.59 N> (1/2+\epsilon)N$. 
 Thus, for any $\vx\in C^\perp$ and $\ve\in \gooderrors$, the list output by $\GRSListDecode_{\FF_q,\gamma,N-k-2,\vecv}(\vx+\ve)$ must contain $\vx$, which implies 
 \[
 \decode_{C^\perp}(\vx+\ve)=\vx.
 \]
Thus, it suffices to prove
\begin{align*}
        \Pr_{\ve\sample \dist^n}[\ve \notin \gooderrors]=2^{-\Omega(\secpar)}
\end{align*}
where $\dist$ is the distribution as defined in \Cref{lem:good_codes}.\footnote{$\dist^n$ is defined as a distribution over $\Sigma^n$, but its sample can be interpreted as an element of $\FF_q^N$ in the canonical way.} 
For $\ve\in \FF_q^N$, we parse it as  $\ve=(\ve_1,...,\ve_n)\in \Sigma^n$ and define $S_\ve\subseteq [N]$ as the set of indices on which $\ve_i=\vzero$, i.e., 
\[
S_\ve\defeq \bigcup_{i\in[n]:\ve_i=\vzero}\{(i-1)m+1,(i-1)m+2,\ldots,im\}.
\]
By the definition of $\dist$ and $n=\Theta(\secpar)$, the Chernoff bound (\Cref{lem:Chernoff}) gives
\begin{align*}
\Pr_{\ve\sample \dist^n}\left[(1/2-\epsilon)N\leq |S_\ve|\leq(1/2+\epsilon)N\right]\geq 1-2^{\Omega(\secpar)}.
\end{align*} 
Therefore, it suffices to prove 
\begin{align} \label{eq:gooderrors_prob}
        \Pr_{\ve\sample \dist^n}[\ve \notin \gooderrors\mid S_\ve=S^*]=2^{-\Omega(\secpar)}
\end{align}
for all $S^*\subseteq [N]$ such that  $(1/2-\epsilon)N\leq |S^*|\leq (1/2+\epsilon)N$.  
Fix such $S^*$. 
When $S_\ve=S^*$, it is clear that we have $\hw(\ve)\leq (1/2+\epsilon)N$ since $|S^*|\geq (1/2-\epsilon)N$.  
Thus, when $S_\ve=S^*$ and $\ve\notin \gooderrors$, there exists $\vy \in C^\perp\setminus \{\vzero\}$ such that 
\begin{align}\label{eq:upperbound_hw_e-y}
    \hw(\ve-\vy)\leq (1/2+\epsilon)N.
\end{align}
Let $\bar{S}^*\defeq [N]\setminus S^*$. 
Note that $|\bar{S}^*|> d+2\epsilon N$ holds by our parameter choices. 
It holds that\footnote{Recall the notation $\vx_S=(x_i)_{i\in S}$ for $\vx=(x_1,\ldots,x_N) \in \FF_q^N$ and $S\subseteq [N]$.}
\begin{align} \label{eq:hw_e-y_decompose}
\hw(\ve-\vy)= \hw(\ve_{S^*}-\vy_{S^*}) +  \hw(\ve_{\bar{S}^*}-\vy_{\bar{S}^*}).
\end{align}
Since we assume $S^*=S_{\ve}$, we have $\ve_{S^*}=\vzero$. On the other hand, since $\vy\neq \vzero$ and degree $d$ non-zero polynomials have at most $d$ roots, $\vy$ can take $0$ on at most $d$ indices. In particular, we have  
\begin{align} \label{eq:hw_e-y_on_S_star}
\hw(\ve_{S^*}-\vy_{S^*})\geq |S^*|-d.
\end{align}
By combining \Cref{eq:upperbound_hw_e-y,eq:hw_e-y_decompose,eq:hw_e-y_on_S_star}, we have 
\begin{align}\label{eq:lower_bound_hw_e-y_on_S_star}
 \hw(\ve_{\bar{S}^*}-\vy_{\bar{S}^*})\leq  (1/2+\epsilon)N- (|S^*|- d)\leq d +2\epsilon N 
\end{align}
where we used $|S^*|\geq (1/2-\epsilon)N$. 
That is, conditioned on $S_\ve=S^*$, \Cref{eq:lower_bound_hw_e-y_on_S_star} holds for some $\vy \in C^\perp\setminus \{\vzero\}$ whenever $\ve \notin \gooderrors$. 
Moreover, conditioned on $S_\ve=S^*$, the distribution of $\ve_{\bar{S}^*}$ is a direct product of $|\bar{S}^*|/m$ copies of the uniform distribution over $\FF_q^m\setminus \{\vzero\}$ by the definition of $\dist$. Since $q^m=2^{\Omega(\secpar)}$, the distribution is statistically $2^{-\Omega(\secpar)}$-close to the uniform distribution over $\FF_q^N$.  
Combining these observations, it holds that\footnote{We can take $\exists \vy \in C^\perp$ instead of $\exists \vy \in C^\perp\setminus \{\vzero\}$ in the RHS since this does not decrease the probability. Indeed, one can see that the probabilities are the same noting that $\ve_{\bar{S}^*}$ does not take $0$ on any index and $|\bar{S}^*|> d+2\epsilon N$.}
\begin{align} \label{eq:upper_bound_gooderrors}
     \Pr_{\ve\sample \dist^n}[\ve \notin \gooderrors\mid S_\ve=S^*]\leq \Pr_{\ve_{\bar{S}^*}\sample \FF_q^{\left|\bar{S}^*\right|}}[\exists \vy \in C^\perp~\hw(\ve_{\bar{S}^*}-\vy_{\bar{S^*}})\leq d+2\epsilon N]+2^{-\Omega(\secpar)}. 
\end{align}
When there exists $\vy \in C^\perp$ such that $\hw(\ve_{\bar{S}^*}-\vy_{\bar{S^*}})\leq d+2\epsilon N$, 
there is a subset $T\subseteq \bar{S}^*$ such that $|T|= |\bar{S}^*|-\lceil d+2\epsilon N \rceil$ and $\ve_{T}=\vy_{T}$ since we have $|\bar{S}^*|> \lceil d+2\epsilon N \rceil$.
On the other hand, since a codeword of $C^\perp$ is determined by values on $d+1$ indices, 
for any fixed $T\subseteq S^*$, we have 
\begin{align} \label{eq:upperbound_equal_on_T}
   \Pr_{\ve_{\bar{S}^*}\sample \FF_q^{\left|\bar{S}^*\right|}}[\exists \vy \in C^\perp~\ve_{T}=\vy_{T}]= q^{-(|T|-(d+1))}\leq  q^{-\left(\frac{1}{2}-3\epsilon\right)N+2d+1}
\end{align}
where we used $|T|\geq  |\bar{S}^*|-d-2\epsilon N$ and $|\bar{S}^*|\geq (1/2-\epsilon)N$. 
Since there are ${|\bar{S}^*| \choose \lceil d+2\epsilon N \rceil}$ possible choices of $T$, combined with \Cref{eq:upperbound_equal_on_T}, it holds that 
\begin{align}
    \Pr_{\ve_{\bar{S}^*}\sample \FF_q^{\left|\bar{S}^*\right|}}[\exists \vy \in C^\perp~\hw(\ve_{\bar{S}^*}-\vy_{\bar{S^*}})\leq d+2\epsilon N] 
    &\leq {|\bar{S}^*| \choose \lceil d+2\epsilon N \rceil}\cdot q^{-\left(\frac{1}{2}-3\epsilon\right)N+2d+1} \notag \\
    &\leq q^{d+2\epsilon N +1}\cdot q^{-\left(\frac{1}{2}-3\epsilon\right)N+2d+1} \notag \\
    &\leq q^{-\left(\frac{1}{2}-3(1-\alpha)-5\epsilon\right)N-4} \label{eq:upperbound_prob_S_star}
\end{align}
where we used $|\bar{S}^*|\leq N<q$ in the second inequality 
and $d=N-k-2=(1-\alpha)N-2$ in the third inequality.  
Since $5/6<\alpha<1$, we can choose $0<\epsilon<0.09$ in such a way that  $\frac{1}{2}-3(1-\alpha)-5\epsilon>0$. (For example, $\epsilon\defeq -\frac{1}{4}+\frac{3}{10}\alpha$ suffices.)
Then, by combining \Cref{eq:upper_bound_gooderrors,eq:upperbound_prob_S_star} together with $q=\Omega(\secpar)$ and $\frac{1}{2}-3(1-\alpha)-5\epsilon=\Omega(1)$, we obtain \Cref{eq:gooderrors_prob}.  

\paragraph{Third item.}
Finally, we prove \Cref{item:hw} of \Cref{lem:good_codes}. 
For $\lceil \frac{k+1}{m}\rceil< j < n$, there does not exist a codeword $\vx$ such that $\hw(\vx)=n-j$. 
This is because if $\hw(\vx)=n-j$, 
the polynomial $f$ corresponding to $\vx$ 
has at least $mj\geq k+1$ roots, which means that $\vx=\vzero$ since the degree of $f$ is at most $k$. This contradicts $\hw(\vx)=n-j>0$. 

The case of $j\leq \lceil \frac{k+1}{m}\rceil$ is proven below. 
In this case, since a polynomial of degree at most $k$ is determined by evaluated values on $k+1$ points, for any subset $S\subseteq [n]$ such that $|S|=j$, $\vx_S$ is uniformly distributed over $\Sigma^j$ when $\vx\sample C_\secpar$. 
Therefore, we have 
\begin{align*}
\Pr_{\vx\sample C_\secpar}[\hw(\vx)=n-j]
&\leq \sum_{S\subseteq [n]\text{~s.t.~}|S|=j}\Pr_{\vx\sample C_\secpar}[\vx_{S}=\vzero]\\
&\leq {n\choose j}|\Sigma|^{-j}\\
&\leq \left(\frac{n}{|\Sigma|}\right)^j. 
\end{align*}

This completes the proof of 
\Cref{lem:good_codes}.
\section{Technical Lemma}\label{sec:technical_lemma}
We prepare a lemma that is used in the proof of correctness of our proof of quantumness constructed in \Cref{sec:PoQ}.  
The lemma is inspired by the quantum step of Regev's reduction from LWE to worst-case lattice problems \cite{STOC:Regev05}. 

\begin{lemma}\label{lem:Regev_like}
Let $\ket{\psi}$ and $\ket{\phi}$ be quantum states on a quantum system over an alphabet $\Sigma=\FF_q^m$ written as 
\begin{align*}
    &\ket{\psi}=\sum_{\vx \in \Sigma^n}V(\vx)\ket{\vx}\\
    &\ket{\phi}=\sum_{\ve \in \Sigma^n}W(\ve)\ket{\ve}.
\end{align*}
Let $F:\Sigma^n \rightarrow \Sigma^n$ be a function. 
Let $\good\subseteq \Sigma^n \times \Sigma^n$ be a subset such that for any $(\vx,\ve)\in \good$, we have $F(\vx+\ve)=\vx$. 
Let $\bad$ be the complement of $\good$, i.e., $\bad\defeq(\Sigma^n \times \Sigma^n)\setminus \good$.
Suppose that we have 
\begin{align}
&\sum_{(\vx,\ve)\in \bad}|\hat{V}(\vx)\hat{W}(\ve)|^2\leq \epsilon \label{eq:hatV_hatW}\\
&\sum_{\vz\in \Sigma^n}\left|\sum_{(\vx,\ve)\in \bad: \vx+\ve=\vz}\hat{V}(\vx)\hat{W}(\ve)\right|^2\leq \delta.
\label{eq:hatV_hatW_two}
\end{align}
Let $U_{\mathsf{add}}$ and $U_{F}$ be unitaries defined as follows:
\begin{align*}
&\ket{\vx}\ket{\ve}
\xrightarrow{U_{\mathsf{add}}}
\ket{\vx}\ket{\vx+\ve}\xrightarrow{U_{F}}
\ket{\vx-F(\vx+\ve)}\ket{\vx+\ve}.
\end{align*}
Then we have 
\begin{align*}
   (I\otimes (\QFT_{\Sigma}^{-1})^{\otimes n})U_{F}U_{\mathsf{add}}(\QFT_{\Sigma}^{\otimes n}\otimes \QFT_{\Sigma}^{\otimes n})\ket{\psi}\ket{\phi} \approx_{\sqrt{\epsilon}+\sqrt{\delta}} |\Sigma|^{n/2}\sum_{\vz \in \Sigma^n}(V\cdot W)(\vz)\ket{0}\ket{\vz}.
\end{align*}
\end{lemma}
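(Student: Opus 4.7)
The plan is to track the state through each of the four unitaries and then split into a ``good'' and ``bad'' contribution, using condition (\ref{eq:hatV_hatW}) to discard the bad contribution on the first register and condition (\ref{eq:hatV_hatW_two}) to restore it as a benign contribution to the second register, so that the convolution theorem applies cleanly.

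First, I would apply $\QFT\otimes\QFT$ to get $\sum_{\vx,\ve}\hat V(\vx)\hat W(\ve)\ket{\vx}\ket{\ve}$, then apply $U_{\mathsf{add}}$ to obtain $\sum_{\vx,\ve}\hat V(\vx)\hat W(\ve)\ket{\vx}\ket{\vx+\ve}$, and finally $U_F$ to reach
\[
\ket{\Psi}\;=\;\sum_{\vx,\ve\in\Sigma^n}\hat V(\vx)\hat W(\ve)\,\ket{\vx-F(\vx+\ve)}\ket{\vx+\ve}.
\]
Split the sum into the \good\ and \bad\ parts. On \good, $F(\vx+\ve)=\vx$, so the first register is $\ket{\vzero}$. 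The vectors $\ket{\vx-F(\vx+\ve)}\ket{\vx+\ve}$ indexed by $(\vx,\ve)$ are mutually orthogonal (since $(\vx+\ve)$ determines $\vx+\ve$, and then $\vx-F(\vx+\ve)$ determines $\vx$), so by (\ref{eq:hatV_hatW}) the bad contribution has norm at most $\sqrt\epsilon$. Hence
\[
\ket{\Psi}\;\approx_{\sqrt\epsilon}\;\ket{\vzero}\otimes\!\!\sum_{(\vx,\ve)\in\good}\!\!\hat V(\vx)\hat W(\ve)\,\ket{\vx+\ve}.
\]

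Next I would ``fill in'' the missing bad terms on the second register: the difference
\[
\ket{\vzero}\otimes\!\!\sum_{(\vx,\ve)\in\bad}\!\!\hat V(\vx)\hat W(\ve)\,\ket{\vx+\ve}
\]
has squared norm $\sum_\vz\bigl|\sum_{(\vx,\ve)\in\bad,\,\vx+\ve=\vz}\hat V(\vx)\hat W(\ve)\bigr|^2\le\delta$ by (\ref{eq:hatV_hatW_two}). Applying the triangle inequality and then grouping by $\vz=\vx+\ve$,
\[
\ket{\Psi}\;\approx_{\sqrt\epsilon+\sqrt\delta}\;\ket{\vzero}\otimes\sum_{\vz\in\Sigma^n}(\hat V\ast\hat W)(\vz)\,\ket{\vz}.
\]
By the convolution theorem (\Cref{lem:convolution}, equation (\ref{eq:conv_one})), $\hat V\ast\hat W=|\Sigma|^{n/2}\,\widehat{V\cdot W}$, and since $I\otimes\QFT^{-1}$ is a unitary it preserves the $\ell_2$-distance. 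Applying $\QFT^{-1}$ to the second register therefore yields
\[
(I\otimes\QFT^{-1})\ket{\Psi}\;\approx_{\sqrt\epsilon+\sqrt\delta}\;|\Sigma|^{n/2}\sum_{\vz\in\Sigma^n}(V\cdot W)(\vz)\,\ket{\vzero}\ket{\vz},
\]
which is exactly the desired conclusion.

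The one subtle point, and in my view the main conceptual obstacle that explains why the lemma needs two separate hypotheses, is that condition (\ref{eq:hatV_hatW}) alone is not enough: the non-unitary effective map $\ket{\widehat\phi}\ket{\widehat\tau_y}\mapsto\ket{\widehat\psi_y}$ on the bad set can have $\ell_2$-mass much larger than its coefficient-sum, because the factor $|\Sigma|^{n/2}$ coming from the convolution theorem can blow up cancellations. Condition (\ref{eq:hatV_hatW_two}) is precisely what controls this after grouping bad pairs by $\vx+\ve$, so that the bad terms add up to an $\ell_2$-small error on the second register rather than being amplified by the QFT.
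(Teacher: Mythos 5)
Your proof is correct and follows essentially the same structure as the paper's: discard the bad pairs (using hypothesis~(\ref{eq:hatV_hatW}), which costs $\sqrt{\epsilon}$), restore them in collapsed form on the second register (using hypothesis~(\ref{eq:hatV_hatW_two}), which costs $\sqrt{\delta}$), then invoke the convolution theorem and invert the QFT. The only cosmetic difference is that the paper applies the $\sqrt{\epsilon}$ bound in the $\ket{\vx}\ket{\ve}$ basis before the unitary and uses that $U_F U_{\mathsf{add}}$ preserves norms, whereas you apply $U_F U_{\mathsf{add}}$ first and observe directly that the images $\ket{\vx - F(\vx+\ve)}\ket{\vx+\ve}$ remain orthonormal — two equivalent ways of saying the same thing.
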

\begin{proof}
\Cref{eq:hatV_hatW,eq:hatV_hatW_two} immediately imply the following inequalities, respectively:
\begin{align*} 
    \left\|\sum_{(\vx,\ve) \in \bad}\hat{V}(\vx)\hat{W}(\ve)\ket{\vx}\ket{\ve}\right\|\leq \sqrt{\epsilon}
\end{align*}
and 
\begin{align*} 
    \left\|\sum_{(\vx,\ve) \in \bad}\hat{V}(\vx)\hat{W}(\ve)\ket{\vx+\ve}\right\|\leq \sqrt{\delta}.
\end{align*}
Since $\bad$ is the complement of $\good$, the above imply the following:
\begin{align} \label{eq:approx_one}
\sum_{(\vx,\ve) \in \Sigma^n\times \Sigma^n}\hat{V}(\vx)\hat{W}(\ve)\ket{\vx}\ket{\ve}
\approx_{\sqrt{\epsilon}} 
    \sum_{(\vx,\ve) \in \good}\hat{V}(\vx)\hat{W}(\ve)\ket{\vx}\ket{\ve}
\end{align}
and 
\begin{align} \label{eq:approx_two}
\sum_{(\vx,\ve) \in \Sigma^n\times \Sigma^n}\hat{V}(\vx)\hat{W}(\ve)\ket{\vx+\ve}
\approx_{\sqrt{\delta}} 
\sum_{(\vx,\ve) \in \good}\hat{V}(\vx)\hat{W}(\ve)\ket{\vx+\ve}.
\end{align}

Then, we have 
\begin{align*}
   U_{F}U_{\mathsf{add}}(\QFT_{\Sigma}^{\otimes n}\otimes \QFT_{\Sigma}^{\otimes n})\ket{\psi}\ket{\phi}
   &=
   U_{F}U_{\mathsf{add}}\sum_{(\vx,\ve) \in \Sigma^n\times \Sigma^n}\hat{V}(\vx)\hat{W}(\ve)\ket{\vx}\ket{\ve}\\
   &\approx_{\sqrt{\epsilon}} U_{F}U_{\mathsf{add}}\sum_{(\vx,\ve) \in \good}\hat{V}(\vx)\hat{W}(\ve)\ket{\vx}\ket{\ve}\\
   &=\sum_{(\vx,\ve) \in \good}\hat{V}(\vx)\hat{W}(\ve)\ket{0}\ket{\vx+\ve}\\
   &\approx_{\sqrt{\delta}}\sum_{(\vx,\ve) \in \Sigma^n\times \Sigma^n}\hat{V}(\vx)\hat{W}(\ve)\ket{0}\ket{\vx+\ve}\\
   &=\sum_{\vz \in \Sigma^n}(\hat{V}*\hat{W})(\vz)\ket{0}\ket{\vz}\\
   &=|\Sigma|^{n/2}\sum_{\vz \in \Sigma^n}\widehat{(V\cdot W)}(\vz)\ket{0}\ket{\vz}\\
   &=(I\otimes \QFT_\Sigma^{\otimes n})|\Sigma|^{n/2}\sum_{\vz \in \Sigma^n}(V\cdot W)(\vz)\ket{0}\ket{\vz}
\end{align*}
where we used 
\Cref{eq:approx_one} for the second line, 
\Cref{eq:approx_two} for the fourth line, and 
the convolution theorem (\Cref{eq:conv_one} in \Cref{lem:convolution}) for the sixth line. 
This completes the proof of \Cref{lem:Regev_like}. 
\end{proof}
\section{Proofs of Quantumness}\label{sec:PoQ}
In this section, we give a construction of proofs of quantumness in the QROM, which is the main result of this paper.

\if0
\subsection{Definition}
We give a definition of proofs of quantumness in the QROM. 
\begin{definition}\label{def:poqro}
A (non-interactive publicly verifiable) proof of quantumness in the QROM consists of algorithms $(\prove,\verify)$.
\begin{description}
\item[$\prove^{H}(1^\secpar)$:]
This is a QPT algorithm that takes the security parameter $1^\secpar$ as input, makes $\poly(\secpar)$ quantum queries to the random oracle $H$, and outputs a classical proof $\pi$.  
\item[$\verify^{H}(1^\secpar,\pi)$:]
This is a deterministic classical polynomial-time algorithm that takes the security parameter $1^\secpar$ and a proof $\pi$, makes $\poly(\secpar)$ queries to the random oracle $H$, and outputs $\top$ indicating acceptance or $\bot$ indicating rejection.
\end{description}
We require a proof of quantumness to satisfy the following properties.\\

\noindent\textbf{Correctness.}
We have 
\[
\Pr\left[\verify^{H}(1^\secpar,\pi)=\bot:
\begin{array}{l}
\pi \sample \prove^{H}(1^\secpar)
\end{array}
\right]\leq \negl(\secpar).
\]

\noindent\textbf{$(Q(\secpar),\epsilon(\secpar))$-Soundness.}
For any unbounded-time adversary $\A$ that makes  $Q(\secpar)$ \emph{classical} oracle queries to $H$, we have
\[
\Pr\left[\verify^{H}(1^\secpar,\pi^*)=\top:
\begin{array}{l}
\pi^* \sample \A^H(1^\secpar)
\end{array}
\right]\leq \epsilon(\secpar).
\]
When we do not specify $Q$ and $\epsilon$, we require the proof of quantumness to satisfy $(Q(\secpar),\negl(\secpar))$-soundness for all polynomials $Q$.
\end{definition}
\begin{remark}[Comparison with \cite{EC:YamZha21}]
Proofs of quantumness in the QROM are similar to proofs of quantum access to random oracles defined in \cite{EC:YamZha21}. But ours is stronger than theirs because of the following differences.
\begin{itemize}
    \item We consider completely non-interactive protocols whereas \cite{EC:YamZha21} considers protocols with a setup. (See also \Cref{rem:non-uniform} below.)
    \item We require the public verifiability by default while it is optional in \cite{EC:YamZha21}.
    \item Most importantly, we require security against unbounded-time adversaries whereas \cite{EC:YamZha21} only considers security against computationally bounded adversaries.
\end{itemize}
\end{remark}
\begin{remark}[Non-uniform security]\label{rem:non-uniform}
In the definition of soundness above, we assume that the adversary has no information about the random oracle $H$ at the beginning. 
On the other hand, a line of researches studies security against adversaries that get polynomial-size auxiliary input that depends on $H$ both in the classical \cite{C:Unruh07,EC:DodGuoKat17,EC:CDGS18,C:CorDodGuo18} and quantum \cite{AC:HhaXagYam19,FOCS:CGLQ20,TCC:GLLZ21} random oracle models. 
We remark that we necessarily add a setup algorithm to achieve soundness against adversaries with auxiliary input since otherwise soundness can be trivially broken by setting auxiliary input to be a valid proof. 
On the other hand, Corretti et al. \cite{EC:CDGS18} showed a general trick called \emph{salting} can be used to reduce security against adversaries with auxiliary input to that against adversaries without auxiliary input. 
The idea is that if we add a setup phase where a sufficiently long random string $s$ is taken from an exponentially large space, and use $H(s,\cdot)$ instead of $H(\cdot)$, then an advice is almost useless since a polynomial-size advice cannot contain useful information on $H(s,\cdot)$ for a non-negligible fraction of $s$. 
Based on their result, we can achieve soundness against adversaries with auxiliary input if we add a setup phase where the salt $s$ is chosen. 
A similar remark applies to all applications of PoQRO presented in this paper.
\end{remark}
\fi


\begin{theorem}\label{thm:PoQ}
There exists a keyless proof of quantumness relative to a random oracle that satisfies soundness in the CROM.
\end{theorem}
 
By \Cref{thm:poq_uniform_to_non-uniform}, we immediately obtain the following corollary. 
\begin{corollary}\label{cor:PoQ_non-uniform}
There exists a keyed proof of quantumness relative to a random oracle that satisfies soundness in the AI-CROM. 
\end{corollary}

The rest of this subsection is devoted to a proof of \Cref{thm:PoQ}.

\paragraph{Construction.}
Let $\{C_\secpar\}_\secpar$ be a family of codes over an alphabet $\Sigma=\FF_q^m$ that satisfies the requirements of \Cref{lem:good_codes} with arbitrary $0<c<c'<1$. 
In the following, we omit $\secpar$ from the subscript of $C$ since it is clear from the context. 
We use notations defined in \Cref{lem:good_codes} (e.g., $n,m,\zeta,\ell,L$ etc). 
Let $H:\Sigma\rightarrow \{0,1\}^{n}$ be a random oracle.\footnote{Strictly speaking, we consider a random oracle with the domain $\bit^*$. However, since our construction only makes queries to $H$ on (bit representaions of) elements of $\Sigma$ for a fixed security parameter, we simply denote by $H$ to mean the restriction of $H$ to (bit representations of) $\Sigma$.}
For $i\in [n]$, let $H_i:\Sigma \ra \bit$ be a function that on input $x$ outputs the $i$-th bit of $H(x)$. 
Then, we construct a proof of quantumness $\Pi=(\prove,\verify)$ in the QROM as follows.
\begin{description}
\item[$\prove^{H}(1^\secpar)$:]
For $i\in [n]$, it generates a state
\[
\ket{\phi_i}\propto \sum_{
\ve_i\in \Sigma \text{~s.t.~} H_i(\ve_i)= 1
}  \ket{\ve_i}.
\]
This is done as follows. 
It generates a uniform superposition over $\Sigma$, coherently evaluates $H$, and measures its value. 
If the measurement outcome is $1$,  then it succeeds in generating the above state.  
It repeats the above procedure until it succeeds or it fails $\secpar$ times. 
If it fails to generate $\ket{\phi_i}$ within $\secpar$ trials for some $i\in [n]$, it just aborts.
Otherwise, it sets 
\[
\ket{\phi}\defeq \ket{\phi_1}\otimes\ket{\phi_2}\otimes\ldots\otimes \ket{\phi_n}. 
\]
Note that we have 
\[
\ket{\phi}\propto\sum_{\substack{\ve=(\ve_1,\ldots,\ve_n)\in \Sigma^n\text{~s.t.~}\\
H_i(\ve_i)=1\text{~for~all~}i\in[n]}}\ket{\ve}.
\]
It generates a state 
\[
\ket{\psi}\propto \sum_{\vx \in C}\ket{\vx}.
\]
Then it applies $\QFT_\Sigma^{\otimes n}$ to both $\ket{\psi}$ and $\ket{\phi}$.
At this point, it has the state
\[
\ket{\eta}\defeq \QFT_\Sigma^{\otimes n} \ket{\psi} \otimes \QFT_\Sigma^{\otimes n} \ket{\phi}.
\]
Let  $U_{\mathsf{add}}$ and $U_{\mathsf{decode}}$ be unitaries on the Hilbert space of $\ket{\eta}$ defined by the following:
\[
\ket{\vx}\ket{\ve}
\xrightarrow{U_{\mathsf{add}}}
\ket{\vx}\ket{\vx+\ve}
\xrightarrow{U_{\mathsf{decode}}}
\ket{\vx-\decode_{C^\perp}(\vx+\ve)}\ket{\vx+\ve}
\]
where $\decode_{C^\perp}$ is the decoder for $C^\perp$ as required in \Cref{item:dual_decode} of \Cref{lem:good_codes}.
Then it applies $(I\otimes(\QFT_\Sigma^{-1})^{\otimes n}) U_{\mathsf{decode}}U_{\mathsf{add}}$  to $\ket{\eta}$, measures the second register, and outputs the measurement outcome $\vx\in \Sigma^n$ as $\pi$.  A diagram showing how to compute $\pi$ is given in Figure~\ref{fig:prove}.

\item[$\verify^{H}(1^\secpar,\pi)$:]
It parses $\pi=\vx=(\vx_1,\ldots,\vx_n)$ and
outputs $\top$ if $\vx\in C$ and $H_i(\vx_i)= 1$ for all $i\in [n]$ 
and $\bot$ otherwise.
\end{description}
\begin{figure}
    \centering
    \includegraphics[width=\textwidth]{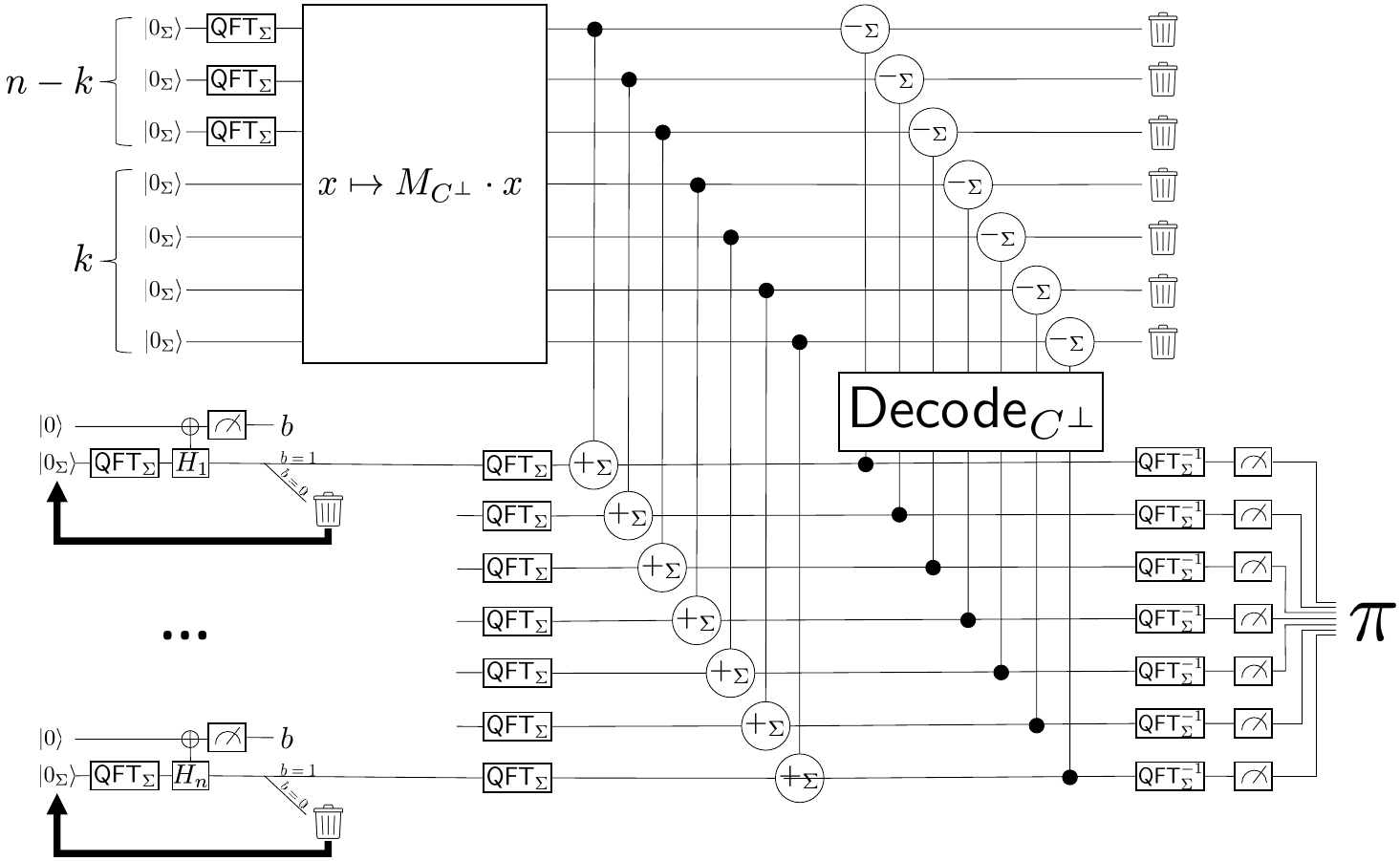}
    \caption{The algorithm $\prove$ for computing $\pi$. Here, $n-k$ is the dimension of $C^\perp$, and $M_{C^\perp}$ is any invertible matrix whose first $n-k$ columns are a basis for $C^\perp$.}
    \label{fig:prove}
\end{figure}

\paragraph{Correctness.}
\begin{lemma}\label{lem:correctness}
$\Pi$ satisfies correctness.
\end{lemma}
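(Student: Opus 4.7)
The plan is to invoke the Technical Lemma (\Cref{lem:Regev_like}) directly, with $V,W,F$ read off from the construction. I would first dispose of the rejection-sampling abort: for a uniformly random $H$, the fraction $|H_i^{-1}(1)|/|\Sigma|$ concentrates exponentially tightly around $1/2$, so on an overwhelming fraction of $H$ each trial succeeds with probability close to $1/2$, making the probability that none of $\secpar$ trials succeeds at most $2^{-\Omega(\secpar)}$; a union bound over $i\in[n]$ then shows $\prove$ aborts with only negligible probability.

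Next, I would set $V(\vx)\propto \mathbf{1}_{\vx\in C}$ and $W(\ve)\propto\prod_i\mathbf{1}_{H_i(\ve_i)=1}$, so that (after normalization) $\ket{\psi}=\sum_\vx V(\vx)\ket{\vx}$ and $\ket{\phi}=\sum_\ve W(\ve)\ket{\ve}$. By \Cref{lem:fourier_dual}, $\hat V$ is supported uniformly on $C^\perp$; by \Cref{lem:QFT_prod}, $\hat W$ factors as a tensor product of coordinate functions $\hat W_i$. Taking $F\defeq\decode_{C^\perp}$ matches the circuit executed by $\prove$ with the unitaries of \Cref{lem:Regev_like}, so that lemma gives that the pre-measurement state is $(\sqrt{\epsilon}+\sqrt{\delta})$-close to $|\Sigma|^{n/2}\sum_\vz (V\cdot W)(\vz)\ket{0}\ket{\vz}$. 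Crucially, $(V\cdot W)(\vz)\neq 0$ iff $\vz\in C$ and $H_i(\vz_i)=1$ for all $i$, i.e.\ iff $\vz$ is a proof accepted by $\verify$. A short counting argument using $|C|\geq 2^{n+\secpar}$, together with Chernoff concentration of the acceptance-set size around its expectation $|C|/2^n\geq 2^\secpar$, shows this target state has norm close to $1$, so measuring its second register yields a valid $\pi$.

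The substantive work is bounding $\epsilon$ and $\delta$. For $\epsilon$, the uniformity of $|\hat V|^2$ on $C^\perp$ gives $\epsilon=\Pr_{\vx\sample C^\perp,\,\ve\sim|\hat W|^2}[\decode_{C^\perp}(\vx+\ve)\neq\vx]$. A direct computation of $\hat W_i$ shows that in expectation over $H$ the coordinate distribution $|\hat W_i|^2$ matches the distribution $\dist$ of \Cref{item:dual_decode} of \Cref{lem:good_codes}---mass $1/2$ on $\vzero$ and roughly uniform mass on $\Sigma\setminus\{\vzero\}$---and, since each $\hat W_i(\vz)$ is an average of $\Omega(|\Sigma|)$ independent $\pm 1$ phases times $H$-bits, $|\hat W_i|^2$ concentrates tightly around this expectation. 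Combined with \Cref{item:dual_decode} of \Cref{lem:good_codes}, this gives $\epsilon=2^{-\Omega(\secpar)}$ except with negligible probability over $H$. For $\delta$, I would use that for each $\vz$ at most one $\vx\in C^\perp$ satisfies $(\vx,\vz-\vx)\in\good$ (namely $\vx=\decode_{C^\perp}(\vz)$), so $\delta=\sum_\vz|(\hat V\ast\hat W)(\vz)-\hat V(\decode_{C^\perp}(\vz))\hat W(\vz-\decode_{C^\perp}(\vz))|^2$, which I would control by a concentration argument exploiting that the contributions come entirely from Hamming-heavy $\ve$, whose total $|\hat W|^2$-mass is $2^{-\Omega(\secpar)}$.

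The main obstacle will be the $\delta$ bound: a naive Cauchy--Schwarz blows it up by $|C^\perp|$, which is exponentially large, so one cannot simply reduce $\delta$ to $\epsilon$. Exploiting the specific structure of $\decode_{C^\perp}$ (its image lies in a single codeword per $\vz$) together with the sparsity and concentration of $|\hat W|^2$ over $H$ should be enough to obtain a negligible $\delta$; once both bounds are in place, averaging over $H$ by Markov yields correctness for all but a negligible fraction of oracles.
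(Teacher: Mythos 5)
Your high-level plan---instantiate the Technical Lemma with $V\propto\mathbf{1}_C$, $W^H\propto\prod_i\mathbf{1}_{H_i(\ve_i)=1}$, $F=\decode_{C^\perp}$, handle the abort by Chernoff, and conclude by showing $\epsilon,\delta$ are negligible for most $H$---is the same route the paper takes. But two of your steps do not hold as stated, and the second is a genuine gap.

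On the $\epsilon$ bound: you claim that each $|\hat W_i(\vz)|^2$ ``concentrates tightly'' around its $H$-average $\dist(\vz)$. It does not. For $\vz\neq\vzero$, $\hat W_i(\vz)$ is a sum of roughly $|\Sigma|/2$ roots of unity each of magnitude $\Theta(1/|\Sigma|)$; the mean of $|\hat W_i(\vz)|^2$ is $\Theta(1/|\Sigma|)$ but its variance is of the same order, so the relative fluctuation is $\Theta(1)$. Fortunately pointwise concentration is not needed: what you need is that $\sum_{\ve\in\baderrors}|\hat W^H(\ve)|^2$ is small for most $H$, and for that it suffices to compute $\Ex_H[|\hat W^H(\ve)|^2]=\dist^n(\ve)$ \emph{exactly} (via a permutation-invariance argument on the set of ``balanced'' oracles), sum to get $\Ex_H[\sum_{\ve\in\baderrors}|\hat W^H(\ve)|^2]=\Pr_{\ve\sample\dist^n}[\ve\in\baderrors]=\negl$, and apply Markov. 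Replace ``concentration'' by ``averaging plus Markov'' and this step is fine.

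On the $\delta$ bound, the proposal has no argument. You correctly identify that Cauchy--Schwarz costs a factor $|C^\perp|$, and you note that the bad mass of $|\hat W|^2$ is negligible, but ``exploiting the structure of $\decode_{C^\perp}$ ... should be enough'' is not a proof, and I don't see the direction you're heading working: your implicit choice of $\good=\{(\vx,\ve):\vx\in C^\perp,\ \decode_{C^\perp}(\vx+\ve)=\vx\}$ is not a product set, which blocks the reduction you'd want. The paper's resolution is the key technical idea of the whole lemma and you are missing it. One deliberately takes $\good=C^\perp\times\gooderrors$ (a \emph{product} set, so the bad errors $\baderrors$ are independent of $\vx$), defines $B$ via $\hat B=\mathbf{1}_{\baderrors}$, and uses the convolution theorem to rewrite the inner coherent sum as $\widehat{V\cdot(B*W^H)}(\vz)$; by Parseval the $\vz$-sum is $\sum_{\vz}|V(\vz)|^2|(B*W^H)(\vz)|^2=\frac{1}{|C|}\sum_{\vz\in C}|(B*W^H)(\vz)|^2$. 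One then observes that $\Ex_H[|(B*W^H)(\vz)|^2]$ is \emph{the same for every $\vz$} by translation-invariance of the oracle distribution, hence equals its average over all $\vz$, which by Parseval and the convolution theorem is precisely the quantity $\Ex_H[\sum_{\ve\in\baderrors}|\hat W^H(\ve)|^2]$ already shown negligible in the $\epsilon$ step. This shift-invariance-plus-Parseval argument is what eliminates the $|C^\perp|$ blow-up; without it (or an equivalent observation), the correctness proof does not close.
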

\begin{proof}

Let $T_i^{H_i}\subseteq \Sigma$ be the subset consisting of $\ve_i\in \Sigma$ such that $H_i(\ve_i)=1 $ and 
$T^H\defeq T_1^{H_1}\times T_2^{H_2}\times \ldots \times T_n^{H_n}\subseteq \Sigma^n$. 
 Let $\hashset\subseteq \Func(\Sigma,\bit^n)$ be the subset that consists of all $H\in \Func(\Sigma,\bit^n)$ such that $\frac{1}{3}<\frac{|T_i^{H_i}|}{|\Sigma|}<\frac{2}{3}$ for all $i\in [n]$. 
By the Chernoff bound (\Cref{lem:Chernoff}) and union bound, we can see that $(1-n\cdot 2^{-\Omega(|\Sigma|)})$-fraction of $H\in \func(\Sigma,\bit^n)$ belongs to $\hashset$. Since we have $n\cdot 2^{-|\Sigma|}=\negl(\secpar)$ by our parameter choices specified in \Cref{lem:good_codes}, it suffices to prove the correctness assuming that $H$ is uniformly chosen from $\hashset$ instead of from $\Func(\Sigma,\bit^n)$. 
We prove this below.

First, we show that the probability that $\prove$ aborts is negligible. 
In each trial to generate $\ket{\phi_i}$, the success probability is $\frac{|T_i^{H_i}|}{|\Sigma|}<\frac{2}{3}$. 
Thus, the probability that it fails to generate $\ket{\phi_i}$ $\secpar$ times is negligible.  

Let $V:\Sigma^n\ra \mathbb{C}$, $W^{H_i}_i:\Sigma\ra \mathbb{C}$, and $W^H:\Sigma^n\ra \mathbb{C}$ be functions defined as follows:\footnote{Since we assume that $H$ is sampled from $\hashset$, we do not define them when $|T_i^{H_i}|=0$ for some $i$.} 
\begin{align*}
    &V(\vx)=
    \begin{cases}
    \frac{1}{\sqrt{|C|}}& \vx\in C\\
    0& \text{otherwise}
    \end{cases}\\
    &W^{H_i}_i(\ve_i)=
    \begin{cases}
    \frac{1}{\sqrt{|T^{H_i}_i|}}& \ve_i\in T^{H_i}_i\\
    0& \text{otherwise}
    \end{cases}\\
    &W^H(\ve)=
    \begin{cases}
    \frac{1}{\sqrt{|T^H|}}& \ve\in T^H\\
    0& \text{otherwise}
    \end{cases}
\end{align*}

Then we have 
\begin{align*}
    &\ket{\psi}=\sum_{\vx \in \Sigma^n}V(\vx)\ket{\vx}\\
    &\ket{\phi}=\sum_{\ve \in \Sigma^n}W^H(\ve)\ket{\ve}
\end{align*}
where $\ket{\psi}$ and $\ket{\phi}$ are as in the description of $\prove$.
For using \Cref{lem:Regev_like}, 
we prove the following claim.
\begin{claim}\label{cla:conditions}
For an overwhelming fraction of $H\in \hashset$, there is a subset $\good\subseteq \Sigma^n \times \Sigma^n$ such that $\decode_{C^\perp}(\vx+\ve)=\vx$ for any $(\vx,\ve)\in \good$ and we have
\begin{align*}
&\sum_{(\vx,\ve)\in \bad}|\hat{V}(\vx)\hat{W}^H(\ve)|^2\leq \negl(\secpar),\\
&\sum_{\vz\in \Sigma^n}\left|\sum_{(\vx,\ve)\in \bad: \vx+\ve=\vz}\hat{V}(\vx)\hat{W}^H(\ve)\right|^2\leq\negl(\secpar).
\end{align*}
where $\bad=(\Sigma^n \times \Sigma^n)\setminus \good$.
\end{claim}
We prove \Cref{cla:conditions} later. 
We complete the proof of \Cref{lem:correctness} by using \Cref{cla:conditions}.
By \Cref{lem:Regev_like} and \Cref{cla:conditions} where we set $F\defeq \decode_{C^\perp}$, for an overwhelming fraction of $H\in \hashset$, we have
\begin{align} \label{eq:final_state}
    (I\otimes (\QFT_\Sigma^{-1})^{\otimes n})U_{\mathsf{decode}}U_{\mathsf{add}}\ket{\eta}
    \approx
    |\Sigma|^{n/2}\sum_{\vx \in \Sigma^n}(V\cdot W^H)(\vx)\ket{0}\ket{\vx}
\end{align}
where $\ket{\eta}$ is as in the description of $\prove$.
Since $(V\cdot W^H)(\vx)=0$ for $\vx\notin C\cap T^H$, if we measure the second register of the RHS of \Cref{eq:final_state}, the outcome is in $C\cap T^H$ with probability $1$. Thus, 
if we measure the second register of the LHS of \Cref{eq:final_state}, the outcome is in $C\cap S$ with probability $1-\negl(\secpar)$.
This means that an honestly generated proof $\pi$ passes the verification with probability $1-\negl(\secpar)$. 
\end{proof}
To complete the proof of correctness, we prove \Cref{cla:conditions} below.
\begin{proof}[Proof of \Cref{cla:conditions}]
We use the notations defined in the proof of \Cref{lem:correctness} above. 
For each $i\in [n]$, let $\hashset_i\subseteq \Func(\Sigma,\bit)$ be the subset that consists of all $H_i\in \Func(\Sigma,\bit)$ such that $\frac{1}{3}<\frac{|T_i^{H_i}|}{|\Sigma|}<\frac{2}{3}$.\footnote{Mathematically, the set $\hashset_i$ does not depend on $i$. We index it by $i$ for notational convenience.} 
Choosing $H\sample \hashset$ is equivalent to choosing $H_i\sample \hashset_i$ independently for each $i\in [n]$.  
In the following, 
whenever we write $H$ or $H_i$ in subscripts of $\Ex$, they are uniformly taken from $\hashset$ or $\hashset_i$, respectively. 

By \Cref{lem:fourier_dual} and the definition of $V$, we have
\begin{align*}
    &\hat{V}(\vx)=
    \begin{cases}
    \frac{1}{\sqrt{|C^\perp|}}& \vx\in C^\perp\\
    0& \text{otherwise}
    \end{cases}.
\end{align*}

Let $\gooderrors\subseteq \Sigma^n$ be a subset defined as follows:
\begin{align*}
    \gooderrors\defeq \{\ve \in \Sigma^n: \forall \vx\in C^\perp,~\decode_{C^\perp} (\vx+\ve)=\vx\}.
\end{align*}
Let $\baderrors\defeq \Sigma^n \setminus \gooderrors$.  
\Cref{item:dual_decode} of \Cref{lem:good_codes} implies
\begin{align}\label{eq:prob_baderror}
    \Pr_{\ve\sample \dist^n}[\ve\in \baderrors]=\negl(\secpar)
\end{align}
where $\dist$ is the distribution as defined in \Cref{item:dual_decode} of \Cref{lem:good_codes}. 
We define $\good\defeq C^\perp \times \gooderrors$ and  $\bad\defeq (\Sigma^n\times \Sigma^n)\setminus \good$. 
Then, we have $\decode_{C^\perp}(\vx+\ve)=\vx$ for all $(\vx,\ve)\in \good$ by definition. 

Noting that $\hat{V}(\vx)=0$ for $\vx\notin C^{\perp}$, 
it is easy to see that we have the following:
\begin{align}
&\sum_{(\vx,\ve)\in \bad}|\hat{V}(\vx)\hat{W}^H(\ve)|^2
=\sum_{\ve\in \baderrors}|\hat{W}^H(\ve)|^2,
\label{eq:hatV_hatW_construction}\\
&\sum_{\vz\in \Sigma^n}\left|\sum_{(\vx,\ve)\in \bad: \vx+\ve=\vz}\hat{V}(\vx)\hat{W}^H(\ve)\right|^2
=\sum_{\vz\in \Sigma^n}\left|\sum_{\substack{\vx\in C^\perp,\ve\in \baderrors\\: \vx+\ve=\vz}}\hat{V}(\vx)\hat{W}^H(\ve)\right|^2.
\label{eq:hatV_hatW_two_construction}
\end{align}
We should prove that values of \Cref{eq:hatV_hatW_construction,eq:hatV_hatW_two_construction} are negligible for an overwhelming fraction of $H\in \hashset$. 
By a standard averaging argument, it suffices to prove that their expected values are negligible, i.e., 
\begin{align}
&\Ex_{H}\left[\sum_{\ve\in \baderrors}|\hat{W}^H(\ve)|^2\right]\leq \negl(\secpar),
\label{eq:hatV_hatW_exp}\\
&\Ex_{H}\left[\sum_{\vz\in \Sigma^n}\left|\sum_{\substack{\vx\in C^\perp,\ve\in \baderrors\\: \vx+\ve=\vz}}\hat{V}(\vx)\hat{W}^H(\ve)\right|^2\right]\leq \negl(\secpar).
\label{eq:hatV_hatW_two_exp}
\end{align}

Before proving them, we remark an obvious yet useful claim.
\begin{claim}\label{cla:symmetry}
Let $\pi$ be a permutation over $\Sigma$ (resp. $\Sigma^n$). Then, the distributions of $H_i$ and $H_i\circ \pi$ (resp. $H$ and $H\circ \pi$) are identical when $H_i\sample \hashset_i$ (resp. $H\sample \hashset$). 
\end{claim}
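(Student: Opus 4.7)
The plan is to prove this by a direct symmetry argument exploiting that the defining condition of $\hashset_i$ depends only on the cardinality $|T_i^{H_i}|$, which is invariant under relabeling the domain $\Sigma$. First I would recall that $\hashset_i = \{H_i \in \Func(\Sigma,\bit): \tfrac{1}{3} < |T_i^{H_i}|/|\Sigma| < \tfrac{2}{3}\}$ where $T_i^{H_i} = H_i^{-1}(1)$, and observe that for any permutation $\pi$ of $\Sigma$ we have $|T_i^{H_i \circ \pi}| = |\pi^{-1}(T_i^{H_i})| = |T_i^{H_i}|$. Consequently $H_i \in \hashset_i$ if and only if $H_i \circ \pi \in \hashset_i$.

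Next I would note that the map $\phi_\pi : H_i \mapsto H_i \circ \pi$ is a bijection on $\Func(\Sigma,\bit)$ (with inverse $\phi_{\pi^{-1}}$), and by the preceding observation it restricts to a bijection on the finite set $\hashset_i$. Since any bijection on a finite set carries the uniform distribution to itself, $H_i$ and $H_i \circ \pi$ are identically distributed when $H_i \sample \hashset_i$. This handles the first half of the claim.

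The second half follows by applying the same argument in parallel. Since $\hashset$ is the product $\hashset_1 \times \cdots \times \hashset_n$ via the identification $H \leftrightarrow (H_1, \ldots, H_n)$, sampling $H \sample \hashset$ is the same as sampling each coordinate $H_i \sample \hashset_i$ independently. Any permutation-induced action that preserves each factor $\hashset_i$ (the coordinate-wise action of a permutation of $\Sigma$, or more generally an $n$-tuple of permutations acting on the $n$ coordinates of $H$) therefore yields a bijection on $\hashset$, and the same finite-set bijection argument shows $H \stackrel{d}{=} H \circ \pi$.

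There is essentially no obstacle here: this is just the general fact that uniform measures on finite sets are invariant under symmetries of their support. The only point to verify carefully is the invariance $|T_i^{H_i \circ \pi}| = |T_i^{H_i}|$, which is immediate because a permutation of the domain does not change the number of preimages of any value.
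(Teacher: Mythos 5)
Your proof is correct and takes essentially the same approach as the paper's: both observe that precomposition with a permutation preserves the cardinality $|T_i^{H_i}|$, hence induces a bijection on $\hashset_i$, and the uniform distribution on a finite set is invariant under bijections of that set. Your clarifying remark that the $\Sigma^n$ case should be read as a coordinate-wise action preserving the product structure $\hashset = \hashset_1 \times \cdots \times \hashset_n$ is exactly the intended interpretation (and is precisely how the claim is later applied, with the translation $\vz \mapsto \vz + \vz_0 - \vz_1$).
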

\begin{proof}[Proof of \Cref{cla:symmetry}]
Recall that $\hashset_i$ is the set of all $H_i:\Sigma\rightarrow \bit$ such that $\frac{|\Sigma|}{3} < |\{\ve_i\in \Sigma:H(\ve_i)=1\}|<\frac{2|\Sigma|}{3}$.
Clearly, we have $|\{\ve_i\in \Sigma:H(\ve_i)=1\}|=|\{\ve_i\in \Sigma:H\circ \pi(\ve_i)=1\}|$. Thus, $\pi$ induces a permutation over $\hashset_i$, and thus $H_i\circ \pi$ is uniformly distributed over $\hashset_i$ when $H_i\sample \hashset_i$. A similar argument works for $\hashset$ as well. 
\end{proof}

Then, we prove \Cref{eq:hatV_hatW_exp,eq:hatV_hatW_two_exp}. 



\smallskip
\noindent\textbf{Proof of \Cref{eq:hatV_hatW_exp}.}
First, we prove the following claim. 
\begin{claim}\label{cla:hatW}
For all $i\in[n]$ and $\ve,\ve'\in \Sigma\setminus\{0\}$, it hold that  
\begin{align}
\Ex_{H_i}\left[|\hat{W}_i(\vzero)|^2\right] =\frac{1}{2}  \label{eq:hatW_zero}
\end{align}
and 
\begin{align}
\Ex_{H_i}\left[|\hat{W}_i(\ve)|^2\right] =\Ex_{H_i}\left[|\hat{W}_i(\ve')|^2\right].  \label{eq:hatW_others}
\end{align}
\end{claim}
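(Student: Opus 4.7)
The plan is to handle the two equalities separately, reducing each to an appropriate symmetry of the uniform distribution on $\hashset_i$.

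For the first equality, I would just unpack the definition of the Fourier transform at $\vzero$:
\[
\hat{W}_i(\vzero)=\frac{1}{|\Sigma|^{1/2}}\sum_{\ve_i\in \Sigma}W_i^{H_i}(\ve_i)=\sqrt{|T_i^{H_i}|/|\Sigma|},
\]
so $|\hat{W}_i(\vzero)|^2=|T_i^{H_i}|/|\Sigma|$, and the first equality reduces to showing $\Ex_{H_i\sample\hashset_i}[|T_i^{H_i}|]=|\Sigma|/2$. For this, I would use the bit-flip involution $H_i\mapsto 1-H_i$ on $\Func(\Sigma,\bit)$. This map preserves $\hashset_i$ because the defining condition $|\Sigma|/3<|T_i^{H_i}|<2|\Sigma|/3$ is invariant under $|T_i^{H_i}|\mapsto |\Sigma|-|T_i^{H_i}|$, while $|T_i^{1-H_i}|=|\Sigma|-|T_i^{H_i}|$. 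Averaging the two sides over $H_i\sample\hashset_i$ immediately pins the expectation at $|\Sigma|/2$.

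For the second equality, the idea is to transport one nonzero vector $\ve$ to another $\ve'$ by a linear change of variables and invoke \Cref{cla:symmetry}. Concretely, given $\ve,\ve'\in \Sigma\setminus\{\vzero\}$, I will pick an $\FF_p$-linear bijection $\pi:\Sigma\to\Sigma$ whose adjoint $\pi^*$ with respect to the $\FF_p$-bilinear form $(\vx,\vy)\mapsto \Tr(\vx\cdot\vy)$ satisfies $\pi^*(\ve')=\ve$. Such a $\pi$ exists because the trace pairing is a non-degenerate $\FF_p$-bilinear form on $\Sigma\cong \FF_p^{rm}$ (where $q=p^r$), so $\mathrm{GL}_{\FF_p}(\Sigma)$ acts transitively on nonzero vectors and the adjoint map $\pi\mapsto \pi^*$ is an involution of $\mathrm{GL}_{\FF_p}(\Sigma)$.

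With $\pi$ chosen this way, substituting $\vy=\pi(\ve_i)$ in the definition of $\hat{W}_i^{H_i\circ\pi}(\ve)$, together with $|T_i^{H_i\circ\pi}|=|T_i^{H_i}|$ (so $W_i^{H_i\circ\pi}(\ve_i)=W_i^{H_i}(\pi(\ve_i))$) and the identity $\Tr(\pi^{-1}(\vy)\cdot\ve)=\Tr(\vy\cdot (\pi^*)^{-1}(\ve))=\Tr(\vy\cdot\ve')$, yields
\[
\hat{W}_i^{H_i\circ\pi}(\ve)=\hat{W}_i^{H_i}(\ve').
\]
By \Cref{cla:symmetry}, $H_i$ and $H_i\circ\pi$ are identically distributed when $H_i\sample\hashset_i$, so taking expectations of $|\cdot|^2$ on both sides gives the desired equality. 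I expect the only mildly subtle point to be realizing that the needed transitivity and adjoint must be taken over the base field $\FF_p$ (since $\Tr$ is only $\FF_p$-linear, not $\FF_q$-linear) rather than over $\FF_q$; once the correct $\pi$ is identified, the rest is a one-line change of variables.
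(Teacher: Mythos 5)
Your proposal is correct, and the overall architecture matches the paper's: both equalities are reduced to symmetries of the uniform distribution on $\hashset_i$ via \Cref{cla:symmetry}, and the second one is established by composing $H_i$ with a permutation $\pi$ of $\Sigma$ that transports the phase factor attached to $\ve$ into the one attached to $\ve'$. The small differences are worth noting. For \Cref{eq:hatW_zero} the paper simply asserts $\Ex_{H_i\sample\hashset_i}[|T_i^{H_i}|]=|\Sigma|/2$; your bit-flip involution $H_i\mapsto 1-H_i$ makes that assertion explicit (one could equivalently invoke the symmetry of the conditional binomial distribution). For \Cref{eq:hatW_others} the paper's $\pi_{\ve,\ve'}$ is obtained by matching the size-$|\Sigma|/q$ fibers of the $\FF_q$-linear forms $\vz\mapsto\ve\cdot\vz$ and $\vz\mapsto\ve'\cdot\vz$, so that $\ve\cdot\vz=\ve'\cdot\pi_{\ve,\ve'}(\vz)$ holds in $\FF_q$ (which then implies the trace equality); this $\pi_{\ve,\ve'}$ need not be linear. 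You instead produce an $\FF_p$-linear $\pi$ via the adjoint of the nondegenerate trace pairing, matching the inner products only at the trace level — which is all that is needed, since only $\omega_p^{\Tr(\cdot)}$ appears. Both constructions are valid and yield the same calculation; yours is slightly more structured, the paper's slightly more elementary. One tiny nitpick: $\pi\mapsto\pi^*$ is an involutive \emph{anti}-automorphism of $\mathrm{GL}_{\FF_p}(\Sigma)$ rather than an automorphism, but it is still a bijection, which is all your transitivity argument uses.
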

\begin{proof}[Proof of \Cref{cla:hatW}]
\Cref{eq:hatW_zero} is proven as follows. 
\begin{align*}
\Ex_{H_i}\left[|\hat{W}_i(\vzero)|^2\right]
&=\Ex_{H_i}\left[\left|\frac{1}{\sqrt{|\Sigma|}}\sum_{\vz\in \Sigma}W_i^{H_i}(\vz)\right|^2\right]
=\frac{\Ex_{H_i}\left[|T_i^{H_i}|\right]}{|\Sigma|} =\frac{1}{2}.  
\end{align*}
Since $\ve\neq \vzero$, for any $w\in \FF_q$, the number of $\vz\in \Sigma$ such that $\ve\cdot \vz=w$ is $|\Sigma|/q$. 
A similar statement holds for $\ve'$ too. 
Therefore, there is a permutation $\pi_{\ve,\ve'}:\Sigma\rightarrow \Sigma$ such that $\ve\cdot \vz=\ve'\cdot \pi_{\ve,\ve'}(\vz)$ for all $\vz\in \Sigma$. 
Then, \Cref{eq:hatW_others} is proven as follows.
\begin{align*}
\Ex_{H_i}\left[|\hat{W}_i(\ve)|^2\right]
&=\Ex_{H_i}\left[\left|\frac{1}{\sqrt{|\Sigma|}}\sum_{\vz\in \Sigma}W_i^{H_i}(\vz)\omega_p^{\Tr(\ve\cdot \vz)}\right|^2\right]\\
&=\Ex_{H_i}\left[\left|\frac{1}{\sqrt{|\Sigma|}}\sum_{\vz\in \Sigma}W_i^{H_i\circ \pi_{\ve,\ve'}^{-1}}(\pi_{\ve,\ve'}(\vz))\omega_p^{\Tr(\ve'\cdot\pi_{\ve,\ve'}(\vz))}\right|^2\right]\\
&=\Ex_{H_i}\left[\left|\frac{1}{\sqrt{|\Sigma|}}\sum_{\vz\in \Sigma}W_i^{H_i\circ \pi_{\ve,\ve'}^{-1}}(\vz)\omega_p^{\Tr(\ve'\cdot\vz)}\right|^2\right]\\
&=\Ex_{H_i}\left[\left|\frac{1}{\sqrt{|\Sigma|}}\sum_{\vz\in \Sigma}W_i^{H_i}(\vz)\omega_p^{\Tr(\ve'\cdot \vz)}\right|^2\right]\\
&=\Ex_{H_i}\left[|\hat{W}_i(\ve')|^2\right]
\end{align*}
where the fourth equality follows from \Cref{cla:symmetry}. 
\end{proof}

\Cref{cla:hatW} means that we have 
\begin{align} \label{eq:dist_and_Ex}
    \dist(\ve_i)= \Ex_{H_i}\left[|\hat{W}_i(\ve_i)|^2\right]
\end{align}
for all $\ve_i\in \Sigma$ where $\dist(\cdot)$ is the probability density function of the distribution $\dist$ as defined in \Cref{item:dual_decode} of
\Cref{lem:good_codes}. 
Moreover, for any $\ve=(\ve_1,\ldots,\ve_n)\in \Sigma^n$ and $H\in \hashset$, 
since we have
    $W^H(\ve)=\prod_{i=1}^{n}W_i^{H_i}(\ve_i)$, 
by \Cref{lem:QFT_prod}, we have
\begin{align}\label{eq:hatW:mult}
    \hat{W}^{H}(\ve)=\prod_{i=1}^{n}\hat{W}_i^{H_i}(\ve_i).
\end{align}
By combining \Cref{eq:dist_and_Ex,eq:hatW:mult}, we obtain 
\begin{align} \label{eq:bardist_and_Ex}
    \dist^n(\ve)= \Ex_{H}\left[|\hat{W}(\ve)|^2\right]
\end{align}
for all $\ve\in \Sigma^n$ where $\dist^n(\cdot)$ is the probability density function of $\dist^n$. By 
\Cref{eq:prob_baderror}, \Cref{eq:bardist_and_Ex}, and the linearity of expectation, we obtain  \Cref{eq:hatV_hatW_exp}. 

\smallskip
\noindent\textbf{Proof of \Cref{eq:hatV_hatW_two_exp}.}
We define a function $B:\Sigma^n\rightarrow \mathbb{C}$ so that $\hat{B}$ satisfies the following:\footnote{That is, we first define $\hat{B}$ and then define $B$ as its inverse discrete Fourier transform.}
\begin{align*}
\hat{B}(\ve)=
\begin{cases}
1& \ve\in \baderrors\\
0& \text{otherwise}
\end{cases}.
\end{align*}
We prove the following claims.
\begin{claim}\label{cla:hatVhatW_conv_bad}
For any $H\in \hashset$, it holds that 
\begin{align*}
\sum_{\vz\in \Sigma^n}\left|\sum_{\substack{\vx\in C^\perp,\ve\in \baderrors\\: \vx+\ve=\vz}}\hat{V}(\vx)\hat{W}^H(\ve)\right|^2
=\sum_{\vz\in \Sigma^n}\left|(V\cdot (B\ast W^H))(\vz)\right|^2.
\end{align*}
\end{claim}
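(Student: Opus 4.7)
The plan is to recognize the inner sum on the left-hand side as a convolution, then move through Fourier space using the convolution theorem (\Cref{lem:convolution}) and Parseval's equality (\Cref{lem:Parseval}).

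First, I would use the definition of $B$ to absorb the restriction $\ve \in \baderrors$ into a point-wise multiplication. Specifically, since $\hat{B}(\ve)=1$ when $\ve\in\baderrors$ and $\hat{B}(\ve)=0$ otherwise, for any $\vz\in\Sigma^n$ we have
\begin{align*}
\sum_{\substack{\vx\in C^\perp,\ve\in \baderrors\\ \vx+\ve=\vz}}\hat{V}(\vx)\hat{W}^H(\ve)
\;=\;\sum_{\substack{\vx,\ve\in \Sigma^n\\ \vx+\ve=\vz}}\hat{V}(\vx)\,\hat{B}(\ve)\,\hat{W}^H(\ve)
\;=\;\bigl(\hat{V}\ast (\hat{B}\cdot \hat{W}^H)\bigr)(\vz),
\end{align*}
where the second equality is the definition of convolution, and where the restriction $\vx\in C^\perp$ is enforced automatically by $\hat{V}$ being supported on $C^\perp$.

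Next, I would invoke \Cref{eq:fgh} of \Cref{lem:convolution} with $f=V$, $g=B$, $h=W^H$, which gives $\widehat{V\cdot(B\ast W^H)} = \hat{V}\ast(\hat{B}\cdot\hat{W}^H)$. Combining this with the display above yields
\begin{align*}
\sum_{\vz\in \Sigma^n}\left|\sum_{\substack{\vx\in C^\perp,\ve\in \baderrors\\ \vx+\ve=\vz}}\hat{V}(\vx)\hat{W}^H(\ve)\right|^2
\;=\;\sum_{\vz\in\Sigma^n}\bigl|\widehat{V\cdot(B\ast W^H)}(\vz)\bigr|^2.
\end{align*}
Finally, applying Parseval's equality (\Cref{lem:Parseval}) to the function $V\cdot(B\ast W^H)$ turns the right-hand side into $\sum_{\vz}|(V\cdot(B\ast W^H))(\vz)|^2$, which is the desired identity.

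There is no real obstacle here; the only thing to double-check is that the indicator encoding via $\hat{B}$ is consistent with the convention used for the Fourier transform in the paper (so that $B$ is well-defined as the inverse transform of the indicator of $\baderrors$), and that all three applications of convolution/Parseval match the normalizations of \Cref{lem:convolution}. Everything else is a direct manipulation.
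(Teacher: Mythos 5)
Your proof is correct and follows exactly the same route as the paper: absorb the restriction $\ve\in\baderrors$ via $\hat{B}$, use that $\hat{V}$ vanishes off $C^\perp$ to recognize the inner sum as $(\hat{V}\ast(\hat{B}\cdot\hat{W}^H))(\vz)$, rewrite it as $\widehat{V\cdot(B\ast W^H)}(\vz)$ via \Cref{eq:fgh} of \Cref{lem:convolution}, and finish with Parseval's equality (\Cref{lem:Parseval}). There is nothing to add.
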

\begin{proof}[Proof of \Cref{cla:hatVhatW_conv_bad}]
For any $\vz\in \Sigma^n$, we have
\begin{align*}
\sum_{\substack{\vx\in C^\perp,\ve\in \baderrors\\: \vx+\ve=\vz}}\hat{V}(\vx)\hat{W}^H(\ve)
&=\sum_{\substack{
\vx\in \Sigma^n, \ve\in \Sigma^n\\
:\vx+\ve=\vz}}\hat{V}(\vx)(\hat{B}\cdot \hat{W}^H)(\ve)\\
&=(\hat{V}\ast  (\hat{B}\cdot \hat{W}^H))(\vz)\\
&=\widehat{(V\cdot (B\ast W^H))}(\vz)
\end{align*}
where 
we used $\hat{V}(\vx)=0$ for $\vx\notin C^\perp$ in the first equality and 
 the convolution theorem (\Cref{eq:fgh} in \Cref{lem:convolution}) in the third equality. 
\Cref{cla:hatVhatW_conv_bad} follows from the above equation and  Parseval's equality (\Cref{lem:Parseval}). 
\end{proof}
\begin{claim}\label{cla:ex_BW_conv}
For any $\vz\in \Sigma^n$, it holds that  
\begin{align*}
\Ex_{H}\left[|(B\ast W^H)(\vz)|^2\right]\leq \negl(\secpar).
\end{align*}
\end{claim}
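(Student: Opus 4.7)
The plan is to combine a global averaging argument with a translation-symmetry argument, so that a bound on the \emph{sum} $\sum_\vz \Ex_H[|(B*W^H)(\vz)|^2]$ will imply the same bound \emph{pointwise} in $\vz$.

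First, I would apply Parseval's equality (Lemma~\ref{lem:Parseval}) followed by the convolution theorem (\Cref{eq:conv_two} in Lemma~\ref{lem:convolution}) to obtain
\[
\sum_{\vz\in\Sigma^n} |(B*W^H)(\vz)|^2 \;=\; \sum_{\ve\in\Sigma^n} |\widehat{B*W^H}(\ve)|^2 \;=\; |\Sigma|^n \sum_{\ve\in\Sigma^n} |\hat{B}(\ve)|^2 |\hat{W}^H(\ve)|^2 \;=\; |\Sigma|^n \sum_{\ve\in\baderrors} |\hat{W}^H(\ve)|^2,
\]
where the last equality uses that $\hat{B}$ is the $0/1$-indicator of $\baderrors$. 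Taking $\Ex_H$ and invoking \Cref{eq:bardist_and_Ex} and then \Cref{eq:prob_baderror} bounds the expected sum by $|\Sigma|^n \cdot \negl(\secpar)$.

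Second, I would show that $\Ex_H[|(B*W^H)(\vz)|^2]$ is independent of $\vz$ by a translation symmetry. Expanding,
\[
\Ex_H[|(B*W^H)(\vz)|^2] \;=\; \sum_{\vy,\vy'\in\Sigma^n} B(\vy)\overline{B(\vy')}\, \Ex_H\!\left[W^H(\vz-\vy)\,W^H(\vz-\vy')\right].
\]
For any $\vt=(t_1,\ldots,t_n)\in\Sigma^n$, coordinate-wise translation $\pi_\vt(\ve)=\ve+\vt$ is a product of per-coordinate permutations of $\Sigma$. Since $|T_i^{H_i\circ\pi_{t_i}}|=|T_i^{H_i}|$, the set $\hashset_i$ is preserved, so Claim~\ref{cla:symmetry} applied in each coordinate (together with the independence of $H_1,\ldots,H_n$) gives that $H$ and $H\circ\pi_\vt$ are identically distributed on $\hashset$. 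Since $T^{H\circ\pi_\vt}=T^H-\vt$ and $|T^{H\circ\pi_\vt}|=|T^H|$, we have $W^{H\circ\pi_\vt}(\ve)=W^H(\ve+\vt)$, and hence
\[
\Ex_H[W^H(\vu)W^H(\vv)] \;=\; \Ex_H[W^H(\vu+\vt)W^H(\vv+\vt)] \qquad \text{for all } \vu,\vv,\vt\in\Sigma^n.
\]
Choosing $\vt=\vy-\vz$ (or equivalently shifting by $-\vz$ in the argument of each $W^H$) shows that the inner expectation depends only on $\vy'-\vy$, not on $\vz$. Consequently $\Ex_H[|(B*W^H)(\vz)|^2]$ is a constant $c$ in $\vz$.

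Combining the two observations, $|\Sigma|^n \cdot c = \sum_\vz \Ex_H[|(B*W^H)(\vz)|^2] \le |\Sigma|^n\cdot\negl(\secpar)$, so $c\le \negl(\secpar)$, which is exactly the claim. The main obstacle is the translation-symmetry step: one must justify that $\hashset$ really is invariant under coordinate-wise translation of the oracle input (which holds because $|T_i^{H_i}|$ is unchanged when $H_i(\cdot)$ is replaced by $H_i(\cdot+t_i)$) and carefully track how this translation acts on $W^H$ through the normalization $1/\sqrt{|T^H|}$. Everything else reduces to Parseval, the convolution theorem, and previously established facts.
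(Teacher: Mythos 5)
Your proposal is correct and follows the same route as the paper's proof: both bound the average $\frac{1}{|\Sigma|^n}\sum_\vz \Ex_H[|(B*W^H)(\vz)|^2]$ via Parseval and the convolution theorem (reducing to the already-established bound on $\Ex_H[\sum_{\ve\in\baderrors}|\hat{W}^H(\ve)|^2]$), and both then upgrade the average to a pointwise bound by showing $\Ex_H[|(B*W^H)(\vz)|^2]$ is independent of $\vz$ using the translation permutation $\pi(\ve)=\ve+\vt$ together with Claim~\ref{cla:symmetry}. Your explicit observation that this translation is a product of per-coordinate permutations of $\Sigma$ and hence genuinely preserves $\hashset$ is a worthwhile elaboration of a point the paper treats tersely.
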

\begin{proof}[Proof of \Cref{cla:ex_BW_conv}.]
First, we observe that $\Ex_{H}\left[|(B\ast W^H)(\vz_0)|^2\right]=\Ex_{H}\left[|(B\ast W^H)(\vz_1)|^2\right]$ for any $\vz_0,\vz_1$. 
Indeed, 
if we define a permutation $\pi:\Sigma^n\rightarrow \Sigma^n$ as $\pi(\vz)\defeq \vz+\vz_0-\vz_1$, 
we have 
\begin{align*}
   &\Ex_{H}\left[\left|(B\ast W^H)(\vz_0)\right|^2\right]\\
   =&\Ex_{H}\left[\left|\sum_{\vx\in \Sigma^n}B(\vx)W^H(\vz_0-\vx)\right|^2\right]\\
   =&\Ex_{H}\left[\left|\sum_{\vx\in \Sigma^n}B(\vx)W^{H\circ \pi}(\vz_1-\vx)\right|^2\right]\\
   =&\Ex_{H}\left[\left|\sum_{\vx\in \Sigma^n}B(\vx)W^H(\vz_1-\vx)\right|^2\right]\\
   =&\Ex_{H}\left[\left|(B\ast W^H)(\vz_1)\right|^2\right]
\end{align*}
where the third equality follows from \Cref{cla:symmetry}. 

Then, for any $\vz \in \Sigma^n$, 
we have 
\begin{align*}
   &\Ex_{H}\left[\left|(B\ast W^H)(\vz)\right|^2\right]\\
   =&\frac{1}{|\Sigma|^n}\sum_{\vz\in \Sigma^n}\Ex_{H}\left[\left|(B\ast W^H)(\vz)\right|^2\right]\\
    =&\frac{1}{|\Sigma|^n}\Ex_{H}\left[\sum_{\vz\in \Sigma^n}\left|(B\ast W^H)(\vz)\right|^2\right]\\
     =&\frac{1}{|\Sigma|^n}\Ex_{H}\left[\sum_{\vz\in \Sigma^n}\left||\Sigma|^{n/2}(\hat{B}\cdot \hat{W}^H)(\vz)\right|^2\right]\\
         =&\Ex_{H}\left[\sum_{\vz\in \baderrors}\left| \hat{W}^H(\vz)\right|^2\right]\\
    \leq & \negl(\secpar).     
\end{align*}   
where the third equality follows from the convolution theorem (\Cref{eq:conv_two} in \Cref{lem:convolution}) and Parseval's equality (\Cref{lem:Parseval}) and the final inequality follows from \Cref{eq:hatV_hatW_exp}. 
\end{proof}
Then, we prove \Cref{eq:hatV_hatW_two_exp} as follows: 
\begin{align*}
    &\Ex_{H}\left[\sum_{\vz\in \Sigma^n}\left|\sum_{\substack{\vx\in C^\perp,\ve\in \baderrors\\: \vx+\ve=\vz}}\hat{V}(\vx)\hat{W}^H(\ve)\right|^2\right]\\
    =&\Ex_{H}\left[\sum_{\vz\in \Sigma^n}\left|(V\cdot (B\ast W^H))(\vz)\right|^2\right]\\
    =&\Ex_{H}\left[\sum_{\vz\in C}\frac{1}{|C|}\left|(B\ast W^H))(\vz)\right|^2\right]\\
    =&\frac{1}{|C|}\sum_{\vz\in C}\Ex_{H}\left[\left|(B\ast W^H))(\vz)\right|^2\right]\\
    \leq&\negl(\secpar).
\end{align*}
where the first equality follows from \Cref{cla:hatVhatW_conv_bad}, the second equality follows from the definition of $V$, and the final inequality follows from  \Cref{cla:ex_BW_conv}. 

This completes the proof of \Cref{cla:conditions}.
\end{proof}

\smallskip\noindent\textbf{Soundness.}
\begin{lemma}\label{lem:soundness}
$\Pi$ satisfies $(2^{\secpar^c},2^{-\Omega(\secpar)})$-soundness in the CROM.
\end{lemma}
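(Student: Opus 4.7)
By \Cref{thm:independent_to_dependent}, it suffices to prove $(2^{\secpar^c}, 2^{-\Omega(\secpar)})$-soundness against oracle-independent adversaries. Fix such an adversary $\A$ making at most $Q = 2^{\secpar^c}$ classical queries, and assume without loss of generality that $\A$ queries $H$ at each coordinate $\vx^*_i$ of its output $\vx^* = (\vx^*_1, \ldots, \vx^*_n)$ before producing it; this adds at most $n$ queries, keeping the total at $Q' \leq 2^{\secpar^c}(1+o(1))$. Let $U \subseteq \Sigma$ be the (random) set of queried symbols and, for each $i\in[n]$, define $S_i = \{\sigma \in U : H_i(\sigma) = 1\}$. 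Under the WLOG assumption, the win event ``$\verify^{H}(1^\secpar, \vx^*) = \top$'' is equivalent to ``$\vx^* \in C$ and $\vx^*_i \in S_i$ for every $i\in[n]$''.

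The argument combines two complementary ingredients. First, by list-recoverability (item~\ref{item:list_recovery} of \Cref{lem:good_codes}) with $\ell = Q'$, the inequality $|S_i|\leq|U|\leq\ell$ bounds the number of codewords $\vx \in C$ with $\vx_i \in S_i$ for all $i$ by $L = 2^{\tilde{O}(\secpar^{c'})}$. Second, for any fixed codeword $\vx \in C$, the bits $H_1(\vx_1), \ldots, H_n(\vx_n)$ are mutually independent Bernoulli$(1/2)$ variables over the random oracle $H$ (since $H$ assigns an independent uniform $n$-bit string to each input), so $\Pr_H[H_i(\vx_i) = 1 \;\forall i] = 2^{-n}$.

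For a non-adaptive adversary whose query set $U$ is chosen independently of $H$, these two facts combine via a direct union bound: the set $\{\vx \in C : \vx_i \in U \;\forall i\}$ has size at most $L$ (apply list-recoverability with each $S_i = U$), and each such codeword wins with probability exactly $2^{-n}$, giving $\Pr[\text{win}] \leq L \cdot 2^{-n} = 2^{\tilde{O}(\secpar^{c'}) - \Theta(\secpar)} = 2^{-\Omega(\secpar)}$ since $c' < 1$ and $n = \Theta(\secpar)$.

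The main obstacle is extending this bound to adaptive adversaries, where $U$ itself depends on $H$ and the naive union bound over codewords breaks. The plan is to analyze the execution via lazy sampling, drawing each fresh uniform $n$-bit response only at the moment its input is queried, and to charge each codeword's contribution to the win probability to the ``critical'' bits $H_i(\vx_i)$, which are information-theoretically independent of the remainder of $H$. Since list-recoverability caps the number of codewords consistent with every transcript by $L$ (not just at the end, but after any prefix of queries), a careful decomposition of the adversary's query tree together with the independence of the $n$ critical bits recovers the $L\cdot 2^{-n}$ bound up to a subexponential overhead that is absorbed into the $2^{-\Omega(\secpar)}$ decay. This adaptive bookkeeping is the main technical step of the proof.
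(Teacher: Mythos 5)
Your setup and the non-adaptive warm-up are correct, and you correctly identify the adaptive case as the crux. However, the sketch you give for the adaptive case does not actually close the gap, and the target bound $L\cdot 2^{-n}$ that you aim for is not achievable by any direct refinement of your argument. Here is the problem: for a codeword $\vx$ to ``win,'' all $n$ coordinates must be queried. But by the time the last coordinate of $\vx$ is queried, all $n$ bits $H_1(\vx_1),\dots,H_n(\vx_n)$ have already been revealed and consumed by the adversary's adaptive strategy. There is therefore no point in the execution at which $\vx$ is simultaneously ``committed'' (i.e., identified as one of a bounded list) and has $n$ undetermined bits. If you try to run the union bound at the moment a codeword becomes fully queried, you are conditioning on having already observed the outcome, so each term is $1$, not $2^{-n}$; if you try to run it at the start of the execution, the set of eligible codewords is not yet bounded by $L$ (it is all of $C$, and $|C|\cdot 2^{-n}\geq 2^{\secpar}$). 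Charging contributions ``to the critical bits'' as you propose does not escape this dilemma.

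The ingredient you are missing is precisely the $\zeta$-slack that is baked into item~\ref{item:list_recovery} of \Cref{lem:good_codes}. The paper commits to a codeword \emph{early}, at the moment it first becomes $\lceil(1-\zeta)n\rceil$-queried, i.e., when at least $(1-\zeta)n$ of its coordinates lie in the query sets but at least $\lfloor\zeta n\rfloor$ are still unqueried. List-recoverability with slack $\zeta$ guarantees --- deterministically, for every possible oracle --- that at most $L$ codewords ever reach this state (they are all consistent with the \emph{final} query sets $S_i^{Q'}$ in at least $(1-\zeta)n$ coordinates, and $|S_i^{Q'}|\leq 2^{\secpar^c}=\ell$). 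At the moment a codeword joins this list, the $\lfloor\zeta n\rfloor$ remaining bits are still fresh under lazy sampling, so the conditional probability that it later completes to a valid proof is exactly $2^{-\lfloor\zeta n\rfloor}$. Union bounding over the at most $L$ registered codewords gives $L\cdot 2^{-\lfloor\zeta n\rfloor}=2^{\tilde{O}(\secpar^{c'})-\Omega(\secpar)}=2^{-\Omega(\secpar)}$. You should also note that the paper first switches to an adversary $\A'$ querying each $H_i$ separately (at most $Q'=nQ$ queries total) to make the coordinate-by-coordinate bookkeeping clean, and observes that a codeword's ``$K$-queried'' status is monotone in $K$ over the execution, so the output $\vx^*$ must pass through the $\lceil(1-\zeta)n\rceil$-queried state on its way to being fully queried.
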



\begin{proof}
Let $\A$ be an adversary that makes $Q\leq 2^{\secpar^{c}}$ classical queries to $H$. 
Without loss of generality, we assume that $\A$ queries $\vx^*_i$ to $H$ at some point for all $i\in [n]$ where $\vx^*=(\vx^*_1,...,\vx^*_n)\in \Sigma^n$ is $\A$'s final output. 
Since a query to $H$ can be replaced with queries to each of $H_1,\ldots,H_n$, there is an adversary $\A'$ that makes  $Q$ queries to each of $H_1$,...,$H_n$ and succeeds with the same probability as $\A$. 
We denote $\A'$'s total number of queries by $Q'=nQ$.
We remark that $\A'$ queries $\vx^*_i$ to $H_i$ at some point by our simplifying assumption on $\A$. 

For each $i\in[n]$ and $j\in [Q']$, let $S_i^j \subseteq \Sigma$ be the set of elements that $\A'$ ever queried to $H_i$ by the point when it has just made the $j$-th query counting queries to any of $H_1,...,H_n$ in total. 
After the $j$-th query, we say that a codeword $\vx=(\vx_1,...,\vx_n)\in C$ is \emph{$K$-queried} if there is a subset $I\in [n]$ such that $|I|= K$, $\vx_i\in S_i^j$ for all $i\in I$, and $\vx_i\notin S_i^j$ for all $i\notin I$.
By our assumption, the final output $\vx^*$ must be $n$-queried at the end. 
Since a $K$-queried codeword either becomes $(K+1)$-queried or remains $K$-queried 
by a single query, $\vx^*$ must be $K$-queried at some point of the execution of $\A'$ for all $K=0,1,...,n$.

We consider the number of codewords that ever become $K$-queried for $K=\lceil(1-\zeta)n \rceil$ where $\zeta$ is the constant as in \Cref{item:list_recovery} of \Cref{lem:good_codes}. 
If $\vx=(\vx_1,...,\vx_n)\in C$ is $\lceil(1-\zeta)n \rceil$-queried at some point, the number of $i$ such that $\vx_i\in S_i^{Q'}$ is at least $\lceil(1-\zeta)n \rceil$ since  $S_i^j\subseteq S_i^{Q'}$ for all $i,j$. By the construction of $\A'$, we have $|S_i^{Q'}|= Q\leq 2^{\secpar^c}$. 
On the other hand, $C$ is $(\zeta,\ell,L)$-list recoverable for $\ell=2^{\secpar^c}$ and $L=2^{\tilde{O}(\secpar^{c'})}$ as required in \Cref{item:list_recovery} of \Cref{lem:good_codes}.
Thus, the number of  codewords that ever become $\lceil(1-\zeta)n \rceil$-queried is at most $L=2^{\tilde{O}(\secpar^{c'})}$.   

Let $E_i$ be the event that the $i$-th codeword that 
becomes $\lceil(1-\zeta)n \rceil$-queried is finally output by $\A'$. 
Here, if multiple codewords become $\lceil(1-\zeta)n \rceil$-queried at the same time, we order them according to the lexicographical ordering. 
By the above argument, we have 
\begin{align}\label{eq:win}
    \Pr[\A'\text{~wins}]=\sum_{i\in [L]} \Pr[\A'\text{~wins} \land E_i]
\end{align}
where we say that $\A'$ wins if its output passes the verification. 
Moreover, we show that for each $i\in[L]$, 
\begin{align}\label{eq:win_i}
    \Pr[\A'\text{~wins} \land E_i]=2^{-\Omega(\secpar)}. 
\end{align}
This can be seen as follows.  
Suppose that we simulate oracles $H_1,...,H_n$ for $\A'$ via lazy sampling, that is, instead of uniformly choosing random functions at first, we sample  function values whenever they are queried by $\A'$. 
Let $\vx$ be the $i$-th codeword that becomes $\lceil(1-\zeta)n \rceil$-queried in the execution of $\A'$. Since the function values on the unqueried $\lfloor\zeta n \rfloor$ positions are not sampled yet, $\vx$ can become a valid proof only if all those values happen to be $1$, which occurs with probability $\left(\frac{1}{2}\right)^{\lfloor\zeta n \rfloor}=2^{-\Omega(\secpar)}$ by $\zeta=\Omega(1)$ and $n=\Omega(\secpar)$. This implies \Cref{eq:win_i}. 

By combining \Cref{eq:win,eq:win_i} and $L=2^{\tilde{O}(\secpar^{c'})}$ for $c'<1$, 
we complete the proof of \Cref{lem:soundness}.
\end{proof}

\Cref{thm:PoQ} follows from \Cref{lem:correctness,lem:soundness}.

\paragraph{Achieving worst-case correctness.}
Remark that the correctness proven in \Cref{lem:correctness} only ensures that the proving algorithm succeeds with an overwhelming probability over the random choice of the oracle $H$. 
Below, we show a modified protocol for which we can show that the correctness holds for \emph{any} $H$, while still preserving soundness on random $H$. 

The motivation of achieving worst-case correctness is as follows. 
In the query-complexity literature (e.g., \cite{JACM:BBCMW01,BW02,AA14}), it is more common to think of an oracle as an (exponentially large) ``input" rather than a function. In that context, the (classical, randomized, or quantum) query complexity of a task is defined to be the minimum number of queries that is needed to solve the task with probability at least $2/3$ \textbf{for all} inputs. Viewing our problem from this perspective, it is natural to require correctness to hold \textbf{for all} possible oracles $H$.

\paragraph{Construction.}
Let $\{C_\secpar\}_\secpar$ be a family of codes over an alphabet $\Sigma=\FF_q^m$ that satisfies the requirements of \Cref{lem:good_codes} with arbitrary $1<c<c'<1$. 
Let  $H:[t]\times\Sigma\rightarrow \{0,1\}^{n}$ be a random oracle where $t$ is a positive integer specified later. 
For $j\in[t]$, we define $H^{(j)}:\Sigma\rightarrow \{0,1\}^{n}$ by $H^{(j)}(x):=H(j\concat x)$. Let $\mathcal{F}=\{f_K:\Sigma\rightarrow \{0,1\}^{n}\}_{K\in \mathcal{K}}$ be a family of $2(\secpar n+1)$-wise independent hash functions. 
Then, we construct a proof of quantumness $\widetilde{\Pi}=(\widetilde{\prove},\widetilde{\verify})$ based on $\Pi=(\prove,\verify)$ as follows.
\begin{description}
\item[$\widetilde{\prove}^{H}(1^\secpar)$:]
It chooses $K\sample \mathcal{K}$ and defines a function $\widetilde{H}_K^{(j)}:\Sigma\rightarrow \{0,1\}^{n}$ by $\widetilde{H}_K^{(j)}(x):=H^{(j)}(x)\oplus f_K(x)$ for $j\in [t]$. Then, it runs $\pi^{(j)}\sample \prove^{\widetilde{H}_K^{(j)}}(1^\secpar)$ for $j\in [t]$ and outputs a proof $\widetilde{\pi}:=(K,\{\pi^{(j)}\}_{j\in [t]})$.

\item[$\widetilde{\verify}^{H}(1^\secpar,\widetilde{\pi})$:]
It parses $\widetilde{\pi}:=(K,\{\pi^{(j)}\}_{j\in [t]})$ and
outputs $\top$ if $\verify^{\widetilde{H}_K^{(j)}}(1^\secpar,\pi^{(j)})=\top$ for all $j\in [t]$ 
and $\bot$ otherwise.
\end{description}
\paragraph{Correctness.}
\begin{lemma}\label{lem:correctness_variant}
$\widetilde{\Pi}$ satisfies worst-case correctness, i.e., for any $H$, 
\[
\Pr\left[\widetilde{\verify}^{H}(1^\secpar,\widetilde{\pi})=\bot:
\begin{array}{l}
\widetilde{\pi} \sample \widetilde{\prove}^{H}(1^\secpar)
\end{array}
\right]\leq \negl(\secpar).
\]
\end{lemma}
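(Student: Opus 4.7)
The plan is to reduce worst-case correctness of $\widetilde{\Pi}$ to the random-oracle correctness of $\Pi$ from \Cref{lem:correctness}, using the $2q$-wise independent simulation result \Cref{lem:simulation_QRO}. Fix any oracle $H$; we bound the failure probability over the internal randomness of $\widetilde{\prove}$, in particular the choice of $K \sample \mathcal{K}$. The key structural observation is that $\widetilde{H}_K^{(j)}(x) = H^{(j)}(x) \oplus f_K(x)$ is the pointwise XOR of the fixed function $H^{(j)}$ with $f_K$. Since $f_K$ is drawn from a $2(\secpar n + 1)$-wise independent family and an affine shift of a uniform tuple over $(\{0,1\}^n)^k$ is again uniform, the family $\{\widetilde{H}_K^{(j)}\}_{K \in \mathcal{K}}$ is itself $2(\secpar n + 1)$-wise independent for each fixed $j$ (and each fixed $H$).

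Next I would analyze a single thread $j \in [t]$. Let $\mathcal{B}^{O}$ be the combined algorithm that samples $\pi \sample \prove^{O}(1^\secpar)$ and outputs $1$ iff $\verify^{O}(1^\secpar,\pi) = \bot$. By inspection of $\prove$ and $\verify$, the algorithm $\mathcal{B}$ makes at most $\secpar n + 1$ quantum queries to its oracle. Applying \Cref{lem:simulation_QRO} with this $\mathcal{B}$ and the $2(\secpar n + 1)$-wise independent family $\{\widetilde{H}_K^{(j)}\}_K$ gives
\[
\Pr_{K \sample \mathcal{K}}\!\left[\mathcal{B}^{\widetilde{H}_K^{(j)}}() = 1\right] \;=\; \Pr_{H' \sample \Func(\Sigma,\{0,1\}^n)}\!\left[\mathcal{B}^{H'}() = 1\right] \;\leq\; \negl(\secpar),
\]
where the final inequality is precisely the average-case correctness of $\Pi$ established in \Cref{lem:correctness}. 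Thus, for each fixed $j$, the probability that the $j$-th sub-verification rejects is negligible.

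Finally I would conclude by a union bound over the $t = \poly(\secpar)$ threads: since $\widetilde{\verify}^{H}(1^\secpar, \widetilde{\pi})$ rejects only if some thread's $\verify$ rejects,
\[
\Pr\!\left[\widetilde{\verify}^{H}(1^\secpar, \widetilde{\pi}) = \bot\right] \;\leq\; \sum_{j=1}^{t} \Pr_K\!\left[\mathcal{B}^{\widetilde{H}_K^{(j)}}() = 1\right] \;\leq\; t \cdot \negl(\secpar) \;=\; \negl(\secpar),
\]
uniformly in $H$, which is exactly worst-case correctness. The $t$ sub-oracles all share the same key $K$ and are hence correlated, but this is immaterial for the union bound. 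The only substantive point to verify is that the query complexity of $\mathcal{B}$ is bounded by the chosen wise-independence parameter of $\mathcal{F}$; this is precisely the reason for the construction's choice of $2(\secpar n + 1)$-wise independence, and I anticipate no genuine obstacle beyond this bookkeeping.
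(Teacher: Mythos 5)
Your proof is correct and takes essentially the same route as the paper's. The paper views $\prove^{\widetilde{H}_K^{(j)}}$ and $\verify^{\widetilde{H}_K^{(j)}}$ as algorithms making $\secpar n + 1$ total queries to $f_K$ itself (absorbing the XOR with the fixed $H^{(j)}$ into the algorithm) and then applies \Cref{lem:simulation_QRO} to replace $f_K$ with a uniform function; you instead absorb the shift into the hash family, observing that $\{\widetilde{H}_K^{(j)}\}_K$ is itself $2(\secpar n+1)$-wise independent, and apply \Cref{lem:simulation_QRO} to it directly. These are the same argument phrased from two equivalent angles, and the concluding union bound over $j\in[t]$ is identical.
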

\begin{proof}
For each $j\in [t]$ and fixed $H$, 
by the construction of $\prove$ and the definition of $\widetilde{H}_K^{(j)}$, 
we can view $\prove^{\widetilde{H}_K^{(j)}}$ as an oracle-algorithm that makes $\secpar n$ queries to $f_K$. 
Similarly, we can view $\verify^{\widetilde{H}_K^{(j)}}$ as an oracle-algorithm that makes a single query to $f_K$. 
Since the combination of $\prove^{\widetilde{H}_K^{(j)}}$ and $\verify^{\widetilde{H}_K^{(j)}}$ makes $\secpar n +1$ quantum queires to $f_K$, which is chosen from a family of $2(\secpar n + 1)$-wise independent hash functions, 
by \Cref{lem:simulation_QRO}, 
the probability that $\pi^{(j)}$ generated by   $\prove^{\widetilde{H}_K^{(j)}}$ passes $\verify^{\widetilde{H}_K^{(j)}}$ does not change even if $f_K$ is replaced with a uniformly random function. 
Moreover, if $f_K$ is replaced with a uniformly random function, the correctness of $\Pi$ immediately implies that $\pi^{(j)}$ generated by   $\prove^{\widetilde{H}_K^{(j)}}$ passes  $\verify^{\widetilde{H}_K^{(j)}}$ with an overwhelming probability (for each fixed $H$).
By taking the union bound over $j\in [t]$, 
$\pi^{(j)}$ generated by the  $\prove^{\widetilde{H}_K^{(j)}}$ passes  $\verify^{\widetilde{H}_K^{(j)}}$ for \emph{all} $j\in[t]$ with an overwhelming probability, which means that $\widetilde{\Pi}$ satisfies correctness.
\end{proof}

\smallskip\noindent\textbf{Soundness.}
\begin{lemma}\label{lem:soundness_variant}
$\widetilde{\Pi}$ satisfies $(2^{\secpar^c},|\mathcal{K}|\cdot 2^{-\Omega(t\secpar)})$-soundness in the CROM.
\end{lemma}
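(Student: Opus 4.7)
The plan is to first reduce to oracle-independent adversaries via \Cref{thm:independent_to_dependent}, then union-bound over all possible choices of the key $K\in\mathcal{K}$ that the adversary may output, and finally to show that for each \emph{fixed} $K$ the adversary's winning probability (over random $H$) is at most $2^{-\Omega(t\secpar)}$. For a fixed $K$, the oracles $\widetilde{H}_K^{(j)} = H^{(j)}\oplus f_K$ for $j\in[t]$ are mutually independent, uniformly random functions from $\Sigma$ to $\bit^n$, since $H^{(1)},\dots,H^{(t)}$ are independent and XORing with the fixed function $f_K$ preserves uniformity. So the analysis reduces to soundness of $t$ parallel instances of $\Pi$ with mutually independent oracles sharing a query budget of $Q\leq 2^{\secpar^c}$.

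Given a fixed $K$, I would adapt the argument of \Cref{lem:soundness} to this parallel setting. Every query to $H$ lands in exactly one slot $\widetilde{H}_K^{(j)}$, so the queries made to slot $j$ define a set $S^{(j)}\subseteq\Sigma$ of size at most $Q\leq \ell=2^{\secpar^c}$. Applying \Cref{item:list_recovery} of \Cref{lem:good_codes} to each slot independently, the random set $\mathcal{C}_j$ of codewords that ever become $\lceil(1-\zeta)n\rceil$-queried on slot $j$ satisfies $|\mathcal{C}_j|\leq L = 2^{\widetilde{O}(\secpar^{c'})}$. For the adversary to win we need, for every $j\in[t]$, that $\vx^{*(j)}\in\mathcal{C}_j$ and $\widetilde{H}_K^{(j)}(\vx^{*(j)}_i)=1$ for all $i\in[n]$.

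The central step is a product bound across the $t$ instances, obtained via lazy sampling. For each $j$ and each candidate $\vx\in C$, at the moment $\vx$ first becomes $\lceil(1-\zeta)n\rceil$-queried on slot $j$ there remain $\lfloor\zeta n\rfloor$ positions of $\vx$ still unsampled on $\widetilde{H}_K^{(j)}$; because the $t$ slots are independent random oracles, the eventual values at these unsampled positions are uniform and jointly independent across slots. Hence a fixed tuple $(\vx^{(j)})_{j\in[t]}\in C^t$ of targets is simultaneously valid with probability at most $2^{-t\lfloor\zeta n\rfloor}$ (conditioned on each $\vx^{(j)}$ reaching its critical status), so by linearity of expectation $\Pr[\mathrm{win}\mid K]\leq \Ex[\prod_j|\mathcal{C}_j|]\cdot 2^{-t\lfloor\zeta n\rfloor}\leq L^t\cdot 2^{-t\lfloor\zeta n\rfloor}=(L\cdot 2^{-\lfloor\zeta n\rfloor})^t=2^{-\Omega(t\secpar)}$, using $\zeta=\Omega(1)$, $n=\Theta(\secpar)$, and $c'<1$. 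A final union bound over $K\in\mathcal{K}$ delivers the claimed $|\mathcal{K}|\cdot 2^{-\Omega(t\secpar)}$ bound.

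The main obstacle is making the lazy-sampling argument formally clean, since each $\mathcal{C}_j$ is a random variable that depends on responses from \emph{all} slots through the adversary's adaptive query pattern. I plan to handle this by fixing, for each candidate $\vx$ and slot $j$, the ``critical query'' that first brings $\vx$ to $\lceil(1-\zeta)n\rceil$-queried status on slot $j$; once this critical query has been answered, the remaining $\lfloor\zeta n\rfloor$ values of $\widetilde{H}_K^{(j)}$ at positions of $\vx$ are independent of the adversary's view and, crucially, of all values sampled so far on the other $t-1$ slots. This cross-slot independence---which would fail for a single shared oracle---is precisely what lifts the per-instance bound $L\cdot 2^{-\lfloor\zeta n\rfloor}$ of \Cref{lem:soundness} into the $t$-fold product needed here.
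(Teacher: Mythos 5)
Your proposal follows the paper's approach exactly: the paper also observes that for each fixed $K$, the shifted oracles $\widetilde{H}_K^{(j)}$ are independent uniform random functions, invokes parallel-repetition soundness of $\Pi$, and union-bounds over $K\in\mathcal{K}$; the paper merely sketches the parallel-repetition step as ``easily extended,'' which you correctly fill in via the per-slot list-recovery bound $L$ and the cross-slot lazy-sampling argument yielding $L^t\cdot 2^{-t\lfloor\zeta n\rfloor}$. One small caution: the phrase ``by linearity of expectation'' glosses over the fact that the bound actually needs $\Pr[\text{tuple critical}\wedge\text{valid}]\leq\Pr[\text{tuple critical}]\cdot 2^{-t\lfloor\zeta n\rfloor}$ per tuple (a chain-rule/conditional-uniformity statement over the lazy coins, not independence of $\mathcal{C}_j$ and validity), after which $\sum_{\text{tuples}}\Pr[\text{tuple critical}]=\Ex[\prod_j|\mathcal{C}_j|]\leq L^t$ closes the argument; you gesture at this correctly in the final paragraph.
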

\begin{proof}(sketch.)
We observe that the proof of the soundness of $\Pi$ (\Cref{lem:soundness}) can be easily extended to prove $(2^{\secpar^c},2^{-\Omega(t\secpar)})$-soundness for the $t$-parallel repetition of $\Pi$. A similar soundness holds even if we use $\widetilde{H}_K^{(j)}$ as the oracle for the $i$-th instance for each \emph{fixed} $K$ since a random oracle shifted by $f_K$ behaves as another random oracle.
Thus, by taking the union bound over $K\in \mathcal{K}$, we obtain \Cref{lem:soundness_variant}.
\end{proof}

Since $|\mathcal{K}|=2^{\poly(\secpar)}$ for some polynomial $\poly$ that is independent of $t$, we can set $t=\poly(\secpar)$ so that $|\mathcal{K}|\cdot 2^{-\Omega(t\secpar)}=2^{-\Omega(\secpar)}$.

\section{Counterexamples for Cryptographic Primitives}\label{sec:separation_primitive}
In this section, we give constructions of cryptographic primitives that are secure in the CROM but insecure in the QROM. 
They are easy consequences of our proof of quantumness constructed in \Cref{sec:PoQ}. 

\subsection{Counterexample for One-Way Functions}\label{sec:separation_OWF}
We give a construction of a family of functions that is one-way in the CROM but not one-way in the QROM. 
It is easy to generically construct such a one-way function from proofs of quantumness. Indeed, we prove a stronger claim than that in \Cref{sec:CRH}.
Nonetheless, we give a direct construction with a similar structure to the proof of quantumness presented in \Cref{sec:PoQ}.  
An interesting feature of the direct construction which the generic construction does not have is that it is not even \emph{distributionally one-way} in the QROM as explained in \Cref{rem:dist_OW}. 

\begin{theorem}[Counterexample for one-way functions]\label{thm:separation_OWF}
There exists a family $\{f_\secpar\}_\secpar$ of efficiently computable oracle-aided functions that is one-way in the CROM but not one-way in the QROM. 
\end{theorem}
\begin{proof}

The construction of $f_\secpar$ is very similar to that of the proof of quantumness constructed in \Cref{sec:PoQ}. 
We rely on similar parameter settings as in \Cref{sec:PoQ}, and use similar notations as in \Cref{sec:PoQ}. 

We define $f_\secpar^H:C\rightarrow \bit^n$ as follows:
$$
f_\secpar^{H}(\vx_1,...,\vx_n)=(H_1(\vx_1),...,H_n(\vx_n)).
$$  
where  $H_i:\Sigma\rightarrow \bit$ is the function that outputs the $i$-th bit of the output of $H:\Sigma\rightarrow \bit^n$.

The $\prove$ algorithm in \Cref{sec:PoQ} can be understood as an algorithm to invert $f_\secpar$ for the image $1^n$ in the QROM. 
This can be extended to find a preimage
of any image $y\in \bit^n$ in a straightforward manner:
We only need to modify the definition of $T^H_i$ to the subset consisting of $\ve_i\in \Sigma$ such that $H_i(\ve_i)=y_i$ instead of $H_i(\ve_i)=1$ in the proof of \Cref{lem:correctness}.
The rest of the proof works analogously. 
Thus, $\{f_\secpar\}_\secpar$ is not one-way in the QROM. 

The proof of one-wayness in the CROM is similar to that of soundness of the proof of quantumness in \Cref{sec:PoQ}. 
By a straightforward extension of the proof of \Cref{lem:soundness} where we replace $1^n$ with arbitrary $y\in \bit^n$, we obtain the following claim. \begin{claim}\label{cla:OW_uniform}
For any adversary $\A$ that makes $\poly(\secpar)$ classical queries and $y\in \bit^n$, 
\begin{align*}
    \Pr[y= f_\secpar^{H}(\vx')
    :
    \vx'\sample \A^{H}(1^\secpar,y)]<\negl(\secpar).
\end{align*}
\end{claim}

The above claim does not immediately imply one-wayness since in the one-wayness game, $y$ is chosen by first sampling $\vx \sample C$ and then setting $y=f_\secpar^H(\vx)$ instead of fixing $y$ independently of $H$. 
Fortunately, we can show that the distribution of $y$ is almost independent of $H$ as shown in the following claim.
\begin{claim}\label{cla:LHL_like}
We have 
$$
\Delta((H,y),(H,y'))=\negl(\secpar)
$$
where $H\sample \Func(\Sigma,\bit^n)$,$\vx\sample C$, $y=f_\secpar^H(\vx)$, and $y'\sample \bit^n$. 
\end{claim}

By combining \Cref{cla:OW_uniform,cla:LHL_like}, one-wayness in the CROM immediately follows.

For proving \Cref{cla:LHL_like}, we rely on the following well-known lemma that relates the collision probability and statistical distance from the uniform distribution. 
\begin{definition}
For a random variable $X$ over a finite set $\mathcal{X}$, we define its collision probability as $\Col(X)=\sum_{x\in \mathcal{X}}\Pr[X=x]^2$. 
\end{definition}
\begin{lemma}\label{lem:cp_to_sd}
Let $X$ be a random variable over a finite set $\mathcal{X}$.
For $\epsilon>0$, if $\Col(X)\leq \frac{1}{|\mathcal{X}|}(1+\epsilon)$, then 
$$
\Delta(X,U_{\mathcal{X}})\leq \sqrt{\epsilon}/2
$$
where $U_{\mathcal{X}}$ denotes the uniform distribution over $\mathcal{X}$. 
\end{lemma}
See e.g., \cite[Lemma 4.5]{SODA:MitVad08} for the proof of \Cref{lem:cp_to_sd}. 

Then, we prove \Cref{cla:LHL_like} below.
\begin{proof}[Proof of \Cref{cla:LHL_like}]
By \Cref{lem:cp_to_sd}, it suffices to prove $\Col(H,y)=2^{-(|\Sigma|+1)n}\cdot (1+\negl(\secpar))$ where $H\sample \Func(\Sigma,\bit^n),\vx\sample C,y=f_\secpar^H(\vx)$.
We prove this as follows where $H$ and $H'$ are uniformly sampled from $\Func(\Sigma,\bit^n)$ and $\vx$ and $\vx'$ are uniformly sampled from $C$.  
\begin{align*}
    \Col(H,y)&=\Pr_{H,H',\vx,\vx'}[H=H'~\land~f^{H}_\secpar(\vx)=f^{H'}_\secpar(\vx')]\\
    &=2^{-|\Sigma|n}\cdot \Pr_{H,\vx,\vx'}[f^{H}_\secpar(\vx)=f^{H}_\secpar(\vx')]\\
    &=2^{-|\Sigma|n}\cdot \sum_{j=0}^{n}\Pr_{\vx,\vx'}[\hw(\vx-\vx')=n-j]\cdot 2^{-(n-j)}\\
    &=2^{-|\Sigma|n}\cdot \sum_{j=0}^{n}\Pr_{\vx}[\hw(\vx)=n-j]\cdot 2^{-(n-j)}\\
     &\leq 2^{-(|\Sigma|+1)n}\cdot \left(1+\frac{2^n}{|C_\secpar|}+\sum_{j=1}^{n-1}
    \Pr_{\vx}[\hw(\vx)=n-j]
    \cdot 2^{j}\right)\\
    &\leq 2^{-(|\Sigma|+1)n}\cdot \left(1+\frac{2^n}{|C_\secpar|}+\sum_{j=1}^{n-1}
    \left(\frac{2n}{|\Sigma|}\right)^j
    \right)\\
     &\leq 2^{-(|\Sigma|+1)n}\cdot \left(1+\frac{2^n}{|C_\secpar|}+\sum_{j=1}^{\infty}
    \left(\frac{2n}{|\Sigma|}\right)^j
    \right)\\
    &=2^{-(|\Sigma|+1)n}\cdot \left(1+\frac{2^n}{|C_\secpar|}+\frac{\left(\frac{2n}{|\Sigma|}\right)}{
    1-\left(\frac{2n}{|\Sigma|}\right)}
    \right)\\
    &=2^{-(|\Sigma|+1)n}\cdot(1+\negl(\secpar))
\end{align*}

where we used 
$\Pr_\vx[\hw(\vx)=n]\leq 1$ and
$\Pr_\vx[\hw(\vx)=0]=\frac{1}{|C_\secpar|}$ for the fifth line,  
\Cref{item:hw} 
of 
\Cref{lem:good_codes} for the sixth line, and
$|\Sigma|=2^{\secpar^{\Theta(1)}}$, $n=\Theta(\secpar)$, and $|C_\secpar|\geq 2^{n+\secpar}$ for the final line.
This completes the proof of \Cref{cla:LHL_like}.
\end{proof}
This completes the proof of \Cref{thm:separation_OWF}.
\end{proof}
\begin{remark}[On distributional one-wayness]\label{rem:dist_OW}
It is worth mentioning that $\{f_\secpar\}_\secpar$ is not even \emph{distributionally} one-way in the QROM. 
That is, one can find an almost uniformly distributed preimage of $y$ with quantum oracle access to $H$. This can be seen by observing that the proof of \Cref{lem:correctness} actually shows that the proving algorithm outputs an almost uniformly distributed valid proof. This corresponds to finding an almost uniformly distributed preimage of $y$ for the above $f_\secpar$.
\end{remark}

\subsection{Counterexample for Collision-Resistant Hash Functions.}\label{sec:CRH}
We give a construction of a family of compressing functions that is collision-resistant in the CROM but not even one-way in the QROM. It is a generic construction based on proofs of quantumness.
\begin{theorem}[Counterexample for collision-resistant functions]\label{thm:separation_CRH}
There exists a family $\{f_\secpar\}_\secpar$ of efficiently computable oracle-aided compressing keyless (resp. keyed) functions that is collision-resistant against in the CROM (resp. AI-CROM) but not even one-way against oracle-independent adversaries in the QROM. 
\end{theorem}
\begin{proof}
Since the keyed version immediately follows from the keyless version by \Cref{thm:uniform_to_non-uniform_CR}, we prove the keyless version below. 

Let $(\prove,\verify)$ be a keyless proof of quantumness that satisfies soundness in the CROM as given in \Cref{thm:PoQ}.
Let $\pilength$ be its maximum proof length.

We assume that the proof of quantumness uses a random oracle $H:\bit^{\secpar+\pilength} \rightarrow \bit^\secpar$ without loss of generality. 
We construct $f_\secpar^H:\bit^{\secpar+\pilength}\rightarrow \bit^{\secpar}$ as follows:
\begin{align*}
    f_\secpar^H(x,\pi)\defeq 
    \begin{cases}
    x&\text{if~}\verify^H(1^\secpar,\pi)=\top\\
    H(x,\pi)&\text{otherwise}
    \end{cases}
\end{align*}
where the input is parsed as $x\in \bit^\secpar$ and $\pi\in \bit^{\pilength}$. 
Collision-resistance of $\{f_\secpar\}_{\secpar}$ in the CROM is clear from the soundness of the proof of quantumness. Indeed, an adversary with a classical access to $H$ can output $(x,\pi)$ such that  $\verify(1^\secpar,\pi)=\top$ only with a negligible probability. Assuming that this does not happen, an adversary has to find a collision of $H$, which can be done only with probability at most $\frac{Q(Q+1)}{2}\cdot 2^{-\secpar}=\negl(\secpar)$ where $Q=\poly(\secpar)$ is the number of queries to $H$. 
On the other hand, the correctness of the proof of quantumness gives a trivial way to invert $f_{\secpar}^H$ on any target $y\in \bit^\secpar$ with a quantum access to $H$: one can just run $\pi\sample\prove^{H}(1^\secpar)$ and then output $(y,\pi)$.  We have
$f_{\secpar}^H(y,\pi)=y$ except for a negligible probability by the correctness of the proof of quantumness. This means that $\{f_\secpar\}_{\secpar}$ is not one-way 
 in the QROM. 
\end{proof}

\if0
\begin{proof}
We assume that the proof of quantumness uses a random oracle $H:\bit^{2\secpar} \rightarrow \bit^\secpar$ without loss of generality. 
We construct $f_\secpar^H:\bit^{2\secpar+\pilength}\rightarrow \bit^{\secpar+\pilength}$ as follows:
\begin{align*}
    f_\secpar^H(x,\pi)\defeq 
    \begin{cases}
    0^{\secpar+\pilength}&\text{if~}\verify^H(1^\secpar,\pi)=\top\\
    (H(x),\pi)&\text{otherwise}
    \end{cases}
\end{align*}
Collision-resistance of $\{f_\secpar\}_{\secpar\in \mathbb{N}}$ against uniform oracle-dependent adversaries in the CROM is clear from the soundness of the proof of quantumness. Indeed, an adversary with a classical access to $H$ can output $(x,\pi)$ such that  $\verify(1^\secpar,\pi)=\top$ only with a negligible probability. Assuming that this does not happen, an adversary has to find a collision of $H$, which can be done only with probability at most $\frac{Q(Q+1)}{2}\cdot 2^{-\secpar}=\negl(\secpar)$ where $Q=\poly(\secpar)$ is the number of queries to $H$. 
On the other hand, the correctness of the proof of quantumness gives a trivial way to find a collision of $f_{\secpar}$ with a quantum access to $H$: one can just run $\pi\sample\prove^{H}(1^\secpar)$ and then output $(0^{2\secpar},\pi)$ and $(1^{2\secpar},\pi)$ as a collision. We have
$f_{\secpar}(0^{2\secpar},\pi)=f_{\secpar}(1^{2\secpar},\pi)=0^{\secpar+\pilength}$ except for negligible probability by the correctness of the proof of quantumness. This means that $\{f_\secpar\}_{\secpar\in \mathbb{N}}$ is not collision-resistant against oracle-independent adversaries in the QROM. 
\end{proof}
\fi

\if0
\begin{theorem}[Separation for collision-resistant functions]\label{thm:separation_CRH}
There exists a family $\{f_\secpar\}$ of efficiently computable oracle-aided compressing keyless functions that is collision-resistant against uniform oracle-dependent adversaries in the CROM but not collision-resistant against oracle-independent adversaries in the QROM. 
\end{theorem}

\begin{definition}[Family of oracle-aided functions.]
A family $\{f_\secpar:\bit^{\inlength} \rightarrow \bit^{\outlength}\}_{\secpar\in\mathbb{N}}$ of efficiently computable oracle-aided functions relative to oracles $H:\bit^n \rightarrow \bit^m$ 
is a family of functions $f_{\secpar}$ that is implemented by a polynomial-time (deterministic) classical machine with an oracle access to $H$.
We denote by $f_{\secpar}^H$ to mean $f_\secpar$ relative to a specific oracle $H$.
\end{definition}

\begin{definition}[One-way functions with random oracles]
We say that a family $\{f_\secpar:\bit^{\inlength} \rightarrow \bit^{\outlength}\}_{\secpar\in\mathbb{N}}$ of efficiently computable oracle-aided functions relative to oracles $H:\bit^n \rightarrow \bit^m$  is one-way in the CROM (resp. QROM) if for all unbounded-time adversaries $\A$ that makes $\poly(\secpar)$ classical (resp. quantum) queries to $H$, we have 
\begin{align*}
    \Pr[y= f_\secpar^{H}(x')
    :x\sample \bit^{\inlength}, y= f_\secpar^{H}(x), 
    x'\sample \A^{H}(1^\secpar,y)]=\negl(\secpar)
\end{align*}
where $H\sample \func(\bit^n,\bit^m)$.
\end{definition}

\begin{definition}[Collision-resistant hash functions with random oracles]
We say that a family $\{f_\secpar:\bit^{\inlength} \rightarrow \bit^{\outlength}\}_{\secpar\in\mathbb{N}}$ of efficiently computable oracle-aided functions relative to oracles $H:\bit^n \rightarrow \bit^m$  is collision-resistant in the CROM (resp. QROM) if for all unbounded-time adversaries $\A$ that makes $\poly(\secpar)$ classical (resp. quantum) queries to $H$, we have 
\begin{align*}
    \Pr[f_\secpar^{H}(x_0)=f_\secpar^{H}(x_1)~\land~x_0\neq x_1  :(x_0,x_1)\sample  
    (x_0,x_1)\sample \A^{H}(1^\secpar,y)]=\negl(\secpar)
\end{align*}
where $H\sample \func(\bit^n,\bit^m)$.
\end{definition}

We prove the following theorem.
\begin{theorem}[Counterexample for hash functions]\label{thm:separation_hash}
There exists a family $\{f_\secpar\}$ of length-decreasing efficiently computable oracle-aided functions relative to oracles $H$ that is collision-resistant in the CROM but not one-way in the QROM.
\end{theorem}
\begin{proof}
The construction of $f_\secpar$ is very similar to that of proof of quantumness constructed in \Cref{sec:PoQ}. 
We rely on a similar parameter settings as in \Cref{sec:PoQ}, and use similar notations as in \Cref{sec:PoQ}. 
In the following, we omit $\secpar$ from the subscripts. 

Let $\encode:\FF_q^k \ra \Sigma^n$ be the encoding algorithm of $C$. 
$f$ is a function from $\FF_q^k$ to $\Sigma^n$ that takes $\vx\in \FF_q^k$ as input, computes $\vy=\encode(\vx)$, parse $\vy=(\vy_1,\ldots,\vy_n)$, and outputs $(H_1(\vy_1),\ldots,H_n(\vy_n))$. 

The $\prove$ algorithm in \Cref{sec:PoQ} can be understood as an algorithm to invert $f$ for the image $1^n$ in the QROM. 
This can be extended to invert any image in a straightforward manner. 
Thus, $f$ is not one-way in the QROM. 

We can show that $f$ is collision-resistant in the CROM by a similar argument to the proof of soundness of the proof of quantumness in \Cref{sec:PoQ}. 
Let $\A$ be an adversary against the collision-resistance in the CROM that makes $\poly(\secpar)$ queries.\footnote{The same proof works as long as the number of queries is at most $2^{\secpar^c}$.} 
Without loss of generality, for its final outputs $(\vx,\vx')$, we assume that $\vy=\encode(\vx)$ and $\vy'=\encode(\vx')$ are queried to $H$ at some point. 
Similarly to the proof of \Cref{lem:soundness} we define $K$-queried codewords for $K=0,1,...,n$. 
Then, by a similar argument to that in the proof of \Cref{lem:soundness}, the number of  codewords that ever become $\lceil(1-\zeta)n \rceil$-queried is at most $L=2^{\tilde{O}(\secpar^{c'})}$.  
By our assumption, two of them are the final output. For each pair of them, the probability that they give a collision is at most  $\left(\frac{1}{2}\right)^{\lfloor\zeta n \rfloor}=2^{-\Omega(\secpar)}$ by $\zeta=\Omega(1)$ and $n=\Omega(\secpar)$. 
Since there are at most $\frac{L(L-1)}{2}$ possible choices of those pairs, the probability that $\A$ outputs a collision is at most $\frac{L(L-1)}{2}\cdot \left(\frac{1}{2}\right)^{\lfloor\zeta n \rfloor}=2^{-\Omega(\secpar)}$ by $c'<1$.  
\end{proof}
\fi

\subsection{Counterexamples for Public Key Primitives}
In \cite{EC:YamZha21}, they give counterexamples for public key encryption (PKE) and digital signatures. Since their constructions are generic based on proofs of quantumness, we can plug our proofs of quantumness given in \Cref{sec:PoQ} into their constructions to obtain the following theorems.
\begin{theorem}
If there exists  a PKE scheme that is IND-CPA secure in the CROM, then there exists a PKE scheme that is IND-CCA secure in the CROM but not IND-CPA secure in the QROM.
\end{theorem}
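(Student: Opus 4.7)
The plan is to invoke the generic transformation of Yamakawa--Zhandry~\cite{EC:YamZha21}, instantiating their proof of quantumness component with the unconditional construction of Theorem~\ref{thm:PoQ} (or rather its keyed variant from Corollary~\ref{cor:PoQ_non-uniform}, since CCA security is typically phrased against non-uniform adversaries). Their construction takes as ingredients (i) a CROM-secure IND-CPA PKE scheme $\Pi_0$ and (ii) a publicly verifiable proof of quantumness $(\prove,\verify)$ in the QROM, and produces a new PKE scheme $\Pi$ with the two desired properties. Since we now have such a proof of quantumness unconditionally relative to a random oracle, plugging it in yields the theorem.

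The high-level recipe for $\Pi$ is as follows. Key generation runs $\Pi_0$'s key generation together with sampling of any PoQ key. Encryption first applies the CROM-CPA scheme to the message, then wraps the resulting ciphertext with a mechanism that allows an alternative decryption pathway whenever the decryptor (or, in the CCA game, an adversary) presents a valid proof of quantumness $\pi$ with $\verify^H(k,\pi)=\top$. Concretely, one uses a Fujisaki--Okamoto-style CCA transform in the CROM so that malformed ciphertexts are rejected, and augments the ciphertext space to include a ``bypass'' decryption mode triggered by a valid $\pi$: the bypass simply outputs the underlying plaintext (or a suitable function of it that lets the adversary win the CPA game).

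The argument then proceeds in two parts. For quantum CPA insecurity, the adversary, upon receiving a challenge ciphertext, invokes $\prove^H$ to obtain a valid $\pi$, feeds it through the bypass path to recover the plaintext, and wins; correctness of the proof of quantumness (Lemma~\ref{lem:correctness}, or its worst-case variant Lemma~\ref{lem:correctness_variant}) guarantees this succeeds with overwhelming probability. For CCA security in the CROM, one reduces to CPA security of $\Pi_0$ and soundness of the proof of quantumness: classical decryption queries essentially never trigger the bypass, since by Lemma~\ref{lem:soundness} no polynomial-query classical adversary can produce a valid $\pi$ except with negligible probability, and then standard FO-style arguments in the CROM upgrade CPA to CCA security of the wrapper, with a union bound over the polynomially many decryption queries.

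The main obstacle is faithfully importing the CCA transform into the random oracle model while keeping the bypass mode compatible with it: one must ensure that the FO-style re-encryption check interacts correctly with ciphertexts whose ``PoQ slot'' is adversarially chosen, and that a decryption query containing a forged $\pi$ is negligibly unlikely even when the adversary can adaptively choose ciphertexts based on earlier responses. This is precisely the technical content addressed in the analogous proof of~\cite{EC:YamZha21}, so the remaining work is to verify that their reduction is black-box in the proof of quantumness (which it is, since only correctness and soundness are used) and that our PoQ from Theorem~\ref{thm:PoQ} satisfies the exact syntactic and soundness properties their reduction requires.
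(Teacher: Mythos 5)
Your proposal takes the same route the paper does: the paper's entire argument for this theorem is the one-sentence observation that the separation constructions of~\cite{EC:YamZha21} use the proof of quantumness only as a black box (through its correctness and soundness), so plugging in the unconditional keyed proof of quantumness from \Cref{cor:PoQ_non-uniform} in place of their LWE-based one immediately yields the result. Your added sketch of the ``bypass'' mechanism is not what the paper supplies (it supplies nothing beyond the citation) and is somewhat speculative --- in particular a decryption-side bypass would break CCA but not obviously CPA, where the adversary has no decryption oracle --- but this does not affect the correctness of the proof since you explicitly defer the construction details to~\cite{EC:YamZha21} and correctly identify the black-box nature of the reduction as the crux.
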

\begin{theorem}
There exists a digital signature scheme that is EUF-CMA secure in the CROM but not EUF-NMA secure in the QROM.
\end{theorem}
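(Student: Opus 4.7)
The plan is to instantiate the generic transformation from \cite{EC:YamZha21} using our proof of quantumness $(\prove,\verify)$ from \Cref{thm:PoQ} (or the keyed non-uniform version from \Cref{cor:PoQ_non-uniform}, depending on the flavor of CROM security targeted). The starting point is that EUF-CMA secure signatures exist unconditionally in the CROM: the random oracle is itself a one-way function, and any one-way function yields EUF-CMA signatures via standard hash-based constructions (Lamport signatures combined with Merkle trees, or stateless variants thereof). Let $\Sigma_0=(\sigkeygen_0,\sigsign_0,\sigverify_0)$ denote such a base scheme, and use domain separation (e.g., prefix tags) so that $\Sigma_0$ and $(\prove,\verify)$ address disjoint portions of the random oracle $H$.

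Next, I would define the modified scheme $\Sigma=(\sigkeygen,\sigsign,\sigverify)$ as follows. $\sigkeygen$ runs $(\vk_0,\sigk_0)\sample\sigkeygen_0$, samples a fresh key $k$ for the proof of quantumness, and outputs $\vk=(\vk_0,k)$ and $\sigk=\sigk_0$. The signer always returns $\sigma=(0,\sigsign_0(\sigk_0,m))$. Verification parses $\sigma=(b,\sigma')$: if $b=0$ it runs $\sigverify_0(\vk_0,m,\sigma')$, and if $b=1$ it runs $\verify(k,\sigma')$, ignoring $m$ entirely. In other words, a signature is valid for $m$ iff it is either a legitimate $\Sigma_0$-signature of $m$ or a valid proof of quantumness under $k$. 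A quantum adversary trivially breaks EUF-NMA in the QROM: it runs $\pi\sample\prove^H(k)$ and outputs $(m^*,(1,\pi))$ for any message $m^*$; by correctness of the proof of quantumness this forgery verifies, without a single signing query.

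For EUF-CMA security in the CROM, I would case-split on a purported forgery $(m^*,(b^*,\sigma'^*))$. If $b^*=0$, then $\sigma'^*$ is directly a valid forgery under $\Sigma_0$ with the same CMA queries, contradicting the CROM EUF-CMA security of $\Sigma_0$. If $b^*=1$, then $\sigma'^*$ is a valid proof of quantumness under $k$, which would contradict the soundness of $(\prove,\verify)$ against classical oracle-dependent adversaries. The main subtlety, and what I expect to be the only non-routine step, is ruling out that the classical signing oracle helps the adversary produce a fresh proof of quantumness in the second case. This is handled by the domain separation between $\Sigma_0$'s and $(\prove,\verify)$'s random oracle accesses: every signing response can be simulated from the $\Sigma_0$-portion of $H$ alone and is therefore independent of the portion of $H$ that $\verify(k,\cdot)$ queries, so any such forger can be converted via a straightforward reduction into a stand-alone adversary breaking the soundness guaranteed by \Cref{thm:PoQ}. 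The theorem then follows.
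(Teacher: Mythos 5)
Your proposal is correct and takes essentially the same approach as the paper, which simply invokes the generic signature-separation template of \cite{EC:YamZha21} with the proof of quantumness of \Cref{thm:PoQ} (keyed via \Cref{cor:PoQ_non-uniform} for the non-uniform setting) plugged in. The augmented scheme you describe---accept either a base $\Sigma_0$-signature or a valid proof of quantumness on any message, with domain separation between the two oracle uses---together with the two-case reduction of a CROM forger either to a $\Sigma_0$-forger or to a classical PoQ-soundness breaker is exactly the intended argument.
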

See \cite{EC:YamZha21} for the formal definitions of PKE and digital signatures and their security notions. 
Note that \cite{EC:YamZha21} proved similar theorems relative to additional artificial classical oracles and weaker variants of them assuming the LWE assumption. 
We significantly improve them by removing the necessity of additional oracles or complexity assumptions.


\subsection{A Remark on Pseudorandom Generators}
One might think that we can also construct pseudorandom generators (PRGs) that are secure in the CROM but insecure in the QROM because 
\Cref{thm:separation_OWF} gives one-way functions (OWFs) that are secure in the CROM but insecure in the QROM and
there is a black-box construction of PRGs from OWFs~\cite{HILL99}.  
However, we remark that this does not work. 
The reason is that PRGs constructed from OWFs may be secure in the QROM even if the building block OWF is insecure in the QROM. 
For example, there is no obvious attack against the PRG of~\cite{HILL99} even with an inverter for the building block OWF.  

Indeed, we believe that we can show that \emph{any} black-box construction of PRGs from OWFs may remain secure even if the building block OWF is insecure.
We sketch the intuition below. 
Let $f:\calX\rightarrow \calX$ be a OWF. We augment the domain to $\calX\times \calR$ where $\calR$ is an exponentially large space by defining 
\begin{align*}
    f'(x,r)\defeq f(x).
\end{align*}
Then, it is clear that $f'$ is also a OWF. 
Suppose that we construct a PRG $G$ by making black-box use of $f'$. Since $f'$ is a secure OWF, $G^{f'}$ is a secure PRG. 
For each $r^* \in \calR$, we define $f'_{r^*}$ as follows:
\begin{align*}
    f'_{r^*}(x,r)\defeq 
    \begin{cases}
    f(x) & \text{if~}r\neq r^*\\
    x &\text{otherwise}
    \end{cases}.
\end{align*} 
Then, $f'_{r^*}$ clearly does not satisfy the one-wayness: for inverting  $y$, one can just output $(y,r^*)$.
On the other hand, when we run $G$ with respect to $f'_{r^*}$ instead of $f'$ for a randomly chosen $r^*$, there would be a negligibly small chance of calling the second branch of $f'_{r^*}$ if the number of $G$'s queries is polynomial.
This means that $G$ remains secure even though the building block function $f'_{r^*}$ is insecure as a OWF. 

We observe that the (im)possibility of separating CROM and QROM for PRGs is closely related to the Aaronson-Ambainis conjecture~\cite{AA14} (\Cref{conj:AA}). 
Very roughly speaking, the conjecture claims that any single-bit output algorithm in the QROM can be simulated in the CROM with a polynomial blowup on the number of queries. Since a PRG distinguisher's output is a single-bit, it is reasonable to expect that we can prove the equivalence of QROM security and CROM security for PRGs under the Aaronson-Ambainis conjecture. Unfortunately, this does not work as it is because a distinguisher takes a PRG value as its input, which may be correlated with the random oracle, whereas the Aaronson-Ambainis conjecture only captures the case where no side information of the random oracle is given. 
Nonetheless, we conjecture that QROM security and CROM security for PRGs (against polynomial-query unbounded-time adversaries) are equivalent.
It is a fascinating direction for future work to reduce it to the Aaronson-Ambainis conjecture or its reasonable variant.\footnote{Interestingly, a follow-up work by Katz and Sela~\cite{KS24} \emph{unconditionally} proves our conjecture without relying on the Aaronson-Ambainis conjecture.}

\if0
\begin{theorem}\label{separation_OWF_CRH}
There exists a family $\{f_\secpar:\bit^{\inlength} \rightarrow \bit^{\outlength}\}_{\secpar\in\mathbb{N}}$ of efficiently computable oracle-aided functions relative to oracles $H:\bit^n \rightarrow \bit^m$ with $\inlength\geq \outlength+\omega(\log \secpar)$ that is collision-resistant in the CROM but not one-way in the QROM.
\end{theorem}
\begin{remark}\label{rem:CR_to_OW}
Remark that collision-resistance implies one-wayness if $\inlength\geq \outlength+\omega(\log \secpar)$.
Thus, the above theorem implies both one-way functions and collision-resistant hash functions that are secure in the CROM but insecure in the QROM.
\end{remark}

\begin{proof}[Proof of \Cref{separation_OWF_CRH}]
Let $(\prove,\verify)$ be a proof of quantumness in the ROM that is shown to exist in \Cref{thm:PoQ}, and let $\pilength$ be its maximum proof length.
We construct $f_\secpar:\bit^{1+\secpar+\pilength}\rightarrow \bit^{\secpar}$ as follows:
\begin{align*}
    f_\secpar(b,x,\pi)\defeq 
    \begin{cases}
    x\\
    0
    \end{cases}
\end{align*}
\end{proof}
\fi
\section{Proofs of Randomness}\label{sec:PoR}
In this section, we construct proofs of randomness assuming the Aaronson-Ambainis conjecture~\cite{AA14}. 

\if0
\begin{conjecture}[Aaronson-Ambainis conjecture {\cite[Conjecture 6]{AA14}}]\label{conj:AA}
Let $p:\mathbb{R}^N\rightarrow \mathbb{R}$ be a
polynomial of degree $d$. Suppose that $0\leq p(X)\leq 1$ for all $X\in \bit^N$, and 
$$\underset{X\sample \bit^N}{\mathbb{E}}\left[\left(p(X)-\mathbb{E}\left[p(X)\right]\right)^2\right]\geq \epsilon.$$
Then there exists an $i$ such that $\mathrm{Inf}_i[p]\geq (\epsilon/d)^{O(1)}$.
\takashi{$\mathrm{Inf}$ should be defined if we keep this statement. Alternatively, we can think of \Cref{conj:AA} as AA conjecture.}
\end{conjecture}
\fi
Roughly speaking, the Aaronson-Ambainis conjecture claims that for any algorithm $\A$ with a \emph{quantum} access to a random oracle, there is an algorithm $\B$ that  approximates the probability that $\A$ outputs a particular output with a \emph{classical} access to the random oracle, and the number of queries of $\A$ and $\B$ are polynomially related. 
A formal claim is stated below.
\begin{conjecture}[Aaronson-Ambainis conjecture~{\cite[Theorem 22]{AA14}}]\label{conj:AA}
Let $\epsilon,\delta>0$ be reals. Given any quantum algorithm $\A$ that makes $Q$ quantum queries to a random oracle $H:\bit^n\rightarrow \bit^m$, there exists a deterministic classical algorithm $\B$ that makes $\poly(Q,m,\epsilon^{-1},\delta^{-1})$ classical queries and satisfies
\begin{align*}
    \Pr_{H\sample \Func(\bit^n,\bit^m)}[\left|\Pr[\A^H()\rightarrow 1]-\B^{H}()\right|\leq \epsilon]\geq 1-\delta.
\end{align*}
\end{conjecture}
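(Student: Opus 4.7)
The plan is to follow the strategy of Aaronson and Ambainis~\cite{AA14}, combining a polynomial representation of quantum query algorithms with a structural conjecture about bounded low-degree multilinear polynomials on the Boolean hypercube, and then using the structural result to build a classical adaptive query algorithm that approximates $\Pr[\A^H()\to 1]$.

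First, I would reduce to the polynomial setting. Let $N=m\cdot 2^n$, viewing the oracle $H$ as an $N$-bit string. Using the standard Beals--Buhrman--Cleve--Mosca--de Wolf framework, after $Q$ quantum queries each amplitude of $\A$'s final state is a multilinear polynomial of degree at most $Q$ in the oracle bits, so the acceptance probability $p(H):=\Pr[\A^H()\to 1]$ is a multilinear polynomial of total degree $d\le 2Q$ taking values in $[0,1]$. The task is then to approximate $p(H)$ with a classical algorithm that reads $\poly(d,m,\epsilon^{-1},\delta^{-1})$ bits of $H$.

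Next, I would invoke the Aaronson--Ambainis structural conjecture: for every bounded multilinear $p:\{0,1\}^N\to[0,1]$ of degree $d$ with $\Var[p]\ge \epsilon$ over the uniform distribution, there exists a coordinate $i$ whose influence $\mathrm{Inf}_i[p]$ is at least $(\epsilon/d)^{O(1)}$. Assuming this, the classical algorithm $\B$ proceeds as a greedy decision tree. Maintain a running estimate $\tilde p$ equal to the conditional expectation of $p$ given the oracle bits queried so far. If the conditional variance is below $\Theta(\epsilon^2\delta)$, stop and output $\tilde p$. Otherwise, the conjecture guarantees an influential coordinate; query it, branch on the result, and recurse on each side.

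For the query complexity analysis, I would use a variance-decrease potential: querying a coordinate of influence $\eta$ decreases the expected conditional variance by $\eta$, so starting from variance $\le 1/4$ and taking influences of size $(\epsilon/d)^{O(1)}$, the expected total number of queries along any root-to-leaf path is $\poly(d,\epsilon^{-1})$. A Markov argument controls the probability that the resulting tree is deep, giving a bound of $\poly(d,m,\epsilon^{-1},\delta^{-1})$ queries on all but a $\delta$-fraction of oracles, which gives exactly the conclusion of~\Cref{conj:AA}.

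The main obstacle is, unavoidably, the structural conjecture itself: showing that every bounded degree-$d$ polynomial on the hypercube with nontrivial variance has a variable with influence polynomial in $1/d$. This is exactly what prevents an unconditional proof. Aaronson and Ambainis verify the statement in several special cases (e.g., block-multilinear or symmetric polynomials) and reduce the general case to a clean statement in Fourier analysis on $\{0,1\}^N$, but a full proof is open; without it, the best one can say is that \Cref{conj:AA} is \emph{conditional} on this purely classical combinatorial/analytic statement, and the role of my proof plan is to make that reduction explicit rather than to close it.
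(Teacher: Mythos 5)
This statement is a \emph{conjecture}, and the paper does not prove it: it is stated as an assumption (citing \cite[Theorem 22]{AA14}) and used as a hypothesis in \Cref{thm:proof_of_randomness,thm:proof_of_min,thm:poq_entropy}. Your plan correctly recapitulates what Aaronson and Ambainis themselves do --- reduce via the BBCMW polynomial method to the structural claim that every bounded degree-$d$ multilinear polynomial on $\bit^N$ with non-negligible variance has a coordinate of influence $(\epsilon/d)^{O(1)}$, then run a greedy decision tree with a variance-decrease potential argument --- and you are explicit that the structural step is open. That is the honest and correct position: there is no proof to supply here, only the reduction you describe, and the paper makes exactly the same choice by leaving \Cref{conj:AA} as an unproved hypothesis. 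The remark following the conjecture in the paper even notes that it is this underlying Fourier-analytic statement that is usually called the ``Aaronson--Ambainis conjecture'' in the literature, while the query-complexity form stated here is what is directly used downstream; your write-up identifies both layers correctly.
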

\begin{remark}
We remark that the way of stating the conjecture is slightly different from that in \cite[Theorem 22]{AA14}, but they are equivalent. The difference is that \cite{AA14} considers oracle access to Boolean inputs whereas we consider an oracle access to functions. They are equivalent by considering a function as a bit string concatenating outputs on all inputs. We remark that a straightforward rephrasing would result in an oracle with $1$-bit outputs, but their conjecture is equivalent in the setting with $m$-bit output oracles since an $m$-bit output oracle can be seen as a concatenation of $m$ $1$-bit output oracles. We note that the number of $\B$'s queries in the above conjecture depends on $m$ unlike theirs due to this difference.

We also remark that Aaronson and Ambainis \cite{AA14} reduce the above conjecture to another seemingly unrelated conjecture in Fourier analysis. In the literature,
the latter conjecture is often referred to as Aaronson-Ambainis conjecture. On the other hand, we call \Cref{conj:AA} Aaronson-Ambainis conjecture since this is more relevant to our work. 
\end{remark}

The main theorem we prove in this section is the following. 
\begin{theorem}\label{thm:proof_of_randomness}
If \Cref{conj:AA} is true, there exists keyless (resp. keyed) proofs of randomness in the QROM (resp. AI-QROM).
\end{theorem}

By \Cref{thm:randomness_uniform_to_non-uniform,thm:min-entropy_to_randomness}, it suffices to prove the following theorem for proving \Cref{thm:proof_of_randomness}. 
\begin{theorem}\label{thm:proof_of_min}
If \Cref{conj:AA} is true, there exists keyless proofs of min-entropy that has min-entropy in the QROM.
\end{theorem}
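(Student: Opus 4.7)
The plan is to reuse the proof of quantumness $(\prove_0, \verify_0)$ of Theorem~\ref{thm:PoQ}, whose classical soundness against subexponentially many queries comes from the $(2^{\lambda^c}, 2^{-\Omega(\lambda)})$-soundness of Lemma~\ref{lem:soundness}, and to apply Conjecture~\ref{conj:AA} to argue that any quantum prover that convinces the verifier non-negligibly must spread its accepting output distribution across many strings. On input $(1^\lambda, h)$ where $h = h(\lambda)$ is a polynomial, I instantiate $(\prove_0, \verify_0)$ at an internal security parameter $\lambda' = \max(\lambda, h^{2/c})$, chosen so that $2^{{\lambda'}^c} \gg 2^h \cdot \poly(\lambda)$. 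The prover $\prove^H(1^\lambda, h)$ runs $\prove_0^H(1^{\lambda'})$ and outputs its proof $\pi$; the verifier $\verify^H(1^\lambda, h, \pi; r)$ ignores $r$, runs $\verify_0^H(1^{\lambda'}, \pi)$, and outputs $\pi$ on acceptance and $\bot$ otherwise. Correctness is inherited from Lemma~\ref{lem:correctness} (or Lemma~\ref{lem:correctness_variant} if one wants pointwise-in-$H$ correctness).

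For the min-entropy property, I proceed by contradiction: assume some $\poly(\lambda)$-query quantum oracle-independent $\A$, an inverse polynomial $\delta$, and a non-negligible $\epsilon$ are such that with probability at least $\epsilon$ over $H$, (i) $\A$ accepts with probability at least $\delta$, and (ii) the conditional distribution $\A^H_\top$ has $H_\infty \leq h$. Then on these ``bad'' oracles there exists an accepting string $\pi^* = \pi^*(H)$ with $\Pr[\A^H = \pi^* \wedge \verify_0^H(\pi^*) = \top] \geq \delta \cdot 2^{-h}$. I build a classical $\B^H$ that recovers such a $\pi^*$, thereby contradicting the soundness of $(\prove_0, \verify_0)$. $\B$ performs a bit-by-bit tree search over proof prefixes: for each candidate prefix $\rho$, define the quantum algorithm $\A_\rho$ that runs $\A$, runs $\verify_0^H$ on its output, and outputs $1$ iff the output has prefix $\rho$ and verification accepts. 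Each $\A_\rho$ makes $\poly(\lambda)$ quantum queries, so Conjecture~\ref{conj:AA} yields a classical algorithm approximating $\Pr[\A_\rho^H \to 1]$ to additive accuracy $\delta \cdot 2^{-h-3}$ using $\poly(\lambda, 2^h, \delta^{-1})$ classical queries, with failure probability $\delta_{\mathsf{AA}}$ over $H$. $\B$ keeps at each level the prefixes whose estimate exceeds $\delta \cdot 2^{-h-1}$; because accepting-prefix probabilities at a fixed level sum to at most $1$, this set has size at most $2^{h+1}/\delta$. After all $L = |\pi|$ levels, $\B$ runs $\verify_0^H$ classically on each surviving length-$L$ candidate and returns any accepting one. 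Since $\pi^*$ is heavy at every prefix along its path, it survives the search (modulo AA approximation error) and witnesses $\B$'s success.

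Counting, the tree has $L \cdot 2^{h+O(1)}/\delta = 2^{h+O(\log\lambda)}$ nodes; setting $\delta_{\mathsf{AA}}$ per call to an inverse polynomial in this count and union-bounding makes every AA call accurate except with negligible probability over $H$, at only a polynomial blow-up in per-call queries. The total classical queries $\B$ makes to $H$ is therefore $2^{h + O(\log\lambda)}$, well below $2^{{\lambda'}^c}$ by our choice of $\lambda'$, so $\B$ lies within the subexponential-query regime of Lemma~\ref{lem:soundness}. Hence $\B$ finds an accepting $\pi$ with probability at least $\epsilon/2$, which is non-negligible, contradicting Lemma~\ref{lem:soundness}. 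The main obstacle I expect is making Conjecture~\ref{conj:AA} behave uniformly across the whole search tree: the conjecture only guarantees accurate approximation with high probability over $H$, so the per-call failure parameter must be small enough to union-bound over a potentially subexponential number of nodes while the per-call query count stays under the subexponential hardness threshold; balancing these is what forces the choice $\lambda' = h^{2/c}$. A secondary subtlety is that $\A^H_\top$ conditions on acceptance, so the quantities driving the tree search are joint ``prefix-$\rho$-and-accept'' probabilities, which is why $\A_\rho$ folds in a copy of $\verify_0$ rather than reasoning about prefix probabilities of the unconditioned output.
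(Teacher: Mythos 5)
Your high-level strategy matches the paper's: use complexity leveraging to instantiate the proof of quantumness at an inflated security parameter, then use Conjecture~\ref{conj:AA} to turn a low-entropy quantum prover into a classical attacker contradicting Lemma~\ref{lem:soundness}. The complexity-leveraging bookkeeping ($\lambda' \approx h^{2/c}$) is essentially what the paper does in the proof of Theorem~\ref{thm:proof_of_min}.

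However, there is a genuine gap in the prefix-tree search, and it is precisely the issue you flagged as your ``main obstacle'' but did not resolve. Conjecture~\ref{conj:AA} guarantees, for each \emph{fixed} algorithm $\A_\rho$, that the classical estimate is accurate on a $1-\delta_{\mathsf{AA}}$ fraction of oracles $H$. Your tree search is \emph{adaptive}: which prefixes $\rho$ get estimated depends on $H$ via the estimates at earlier levels. A union bound over the set of \emph{visited} nodes (your $\sim 2^{h+O(\log\lambda)}$) is not valid, because that set is an $H$-dependent random variable and the ``visits $\rho$'' and ``estimate at $\rho$ is bad'' events are in general correlated, being computed from overlapping oracle positions. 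The valid union bound is over all $2^{L+1}$ possible prefixes, where $L=\ell_\pi(\lambda') = \poly(\lambda')$ is the proof length; and $L \gg h$ after leveraging. Making that union bound go through forces $\delta_{\mathsf{AA}} \ll 2^{-L} = 2^{-\poly(\lambda')}$, which inflates the per-call classical query count to $2^{\poly(\lambda')}$, far beyond the $2^{(\lambda')^c}$ threshold of Lemma~\ref{lem:soundness}. Choosing $\lambda'$ larger does not help: $L$ grows along with $\lambda'$. A related secondary problem is that the bound ``at most $2^{h+1}/\delta$ prefixes survive per level'' presupposes accurate estimates; if estimates can be wrong on an $H$-dependent set, false positives can inflate the surviving frontier and compound down the tree, so you would additionally need a cap and an argument that the cap is not triggered---which again requires accuracy everywhere visited.

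The paper avoids this adaptivity issue entirely. Its Theorem~\ref{thm:poq_entropy} first \emph{derandomizes} the adversary: $\A_i$ re-runs $\A$ many times to estimate its output distribution, then outputs the lexicographically smallest accepting proof whose estimated probability clears a \emph{randomly chosen} threshold (one of $M = \Theta(1/\epsilon_\A)$ thresholds; the random choice guarantees, by a counting argument, that for most $(H,i)$ no probability falls within the estimation window of the threshold). For such $(H,i)$, $\A_i$ is output-deterministic with probability $> 4/5$. Conjecture~\ref{conj:AA} is then applied only to the $\ell_\pi$ \emph{fixed} single-bit algorithms $\A_{i,j}$ for the one randomly-chosen $i$, at \emph{constant} tolerance $\epsilon = 1/5$, with $\delta_{\mathsf{AA}} = \delta_\A/(4\ell_\pi)$. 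The union bound is over the polynomially many $j$'s only, and there is no adaptive choice of which algorithms to feed to the conjecture. This derandomization step (sampling + random threshold, an idea the paper credits to~\cite{TCC:ChiChuYam20}) is the technical ingredient your proof is missing; without it I do not see how to make the union-bound cost compatible with the soundness budget.
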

In the following, we prove \Cref{thm:proof_of_min}.

\paragraph{From proofs of quantumness to proofs of min-entropy.}
Our proof of quantumness constructed in \Cref{sec:PoQ} has a large entropy in proofs.  
We can easily show that this is inherent assuming Aaronson-Ambainis conjecture.
This is because if the proving algorithm is almost deterministic, it can be simulated by a polynomial-query classical algorithm, which breaks soundness. 
The following theorem gives a generalization of the above argument. 
\begin{theorem}\label{thm:poq_entropy}
If \Cref{conj:AA} is true, the following holds. 
Let $(\prove,\verify)$ be a keyless proof of quantumness relative to a random oracle $H:\bit^n\rightarrow \bit^m$ that satisfies $(Q_\poq(\secpar),\epsilon_\poq(\secpar))$-soundness. 
Let $\A$ be an adversary that makes $Q_\A(\secpar)$ quantum queries.
Let $\epsilon_\A(\secpar),\delta_\A(\secpar)>0$ be reals. 
There exists a polynomial $p$ such that if we have
\[
Q_\poq(\secpar)\geq p(\secpar,Q_\A(\secpar),\epsilon_\A(\secpar)^{-1},\delta_\A(\secpar)^{-1})
\]
and 
\[
\epsilon_\poq(\secpar)\leq \delta_\A(\secpar)/4,\footnote{In fact, it suffices to require $\epsilon_\poq(\secpar)\leq c\delta_\A(\secpar)$ for any constant $c<1$.}
\] 
for all $\secpar\in \mathbb{N}$, 
then we have
$$
\Pr_{H\sample \Func(\bit^n,\bit^m)}\left[
\max_{\pi^*\text{~s.t.~}\verify^H(1^\secpar,\pi^*)=\top}\Pr[\A^H(1^\secpar)\rightarrow \pi^*]\le \epsilon_\A(\secpar)
\right]\geq 1-\delta_\A(\secpar).
$$
\end{theorem}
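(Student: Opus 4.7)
The plan is to prove the contrapositive: if with probability at least $\delta_\A$ over $H$ there exists some accepting $\pi^*$ with $\Pr[\A^H(1^\secpar) \to \pi^*] > \epsilon_\A$, then we construct an oracle-independent classical adversary $\B$ that breaks $(Q_\poq, \epsilon_\poq)$-soundness, contradicting the hypothesis on $Q_\poq$ and $\epsilon_\poq$. Here $|\pi| = \poly(\secpar)$ since $\prove$ is QPT.

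The core idea is a prefix-search over candidate proofs, using \Cref{conj:AA} as an oracle to estimate acceptance probabilities. For each prefix $s \in \bit^{\leq |\pi|}$, let $P_H(s) = \Pr[\A^H(1^\secpar)\text{ outputs a string with prefix } s]$. Applying \Cref{conj:AA} to the quantum algorithm $\A_s$, which runs $\A$ and outputs $1$ iff the first $|s|$ bits of its output equal $s$, yields a classical estimator $\B_s$ that, with probability at least $1 - \delta'$ over $H$, approximates $P_H(s)$ within additive error $\epsilon_\A/8$ using $\poly(\secpar, Q_\A, \epsilon_\A^{-1}, \delta'^{-1})$ classical queries to $H$. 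The adversary $\B$ proceeds level by level: set $L_0 = \{\text{empty string}\}$, and at level $i$ include $sb \in L_i$ (for $s \in L_{i-1}$ and $b \in \{0,1\}$) whenever the $\B_{sb}$ estimate is at least $3\epsilon_\A/4$. Finally, $\B$ runs $\verify^H(1^\secpar, \pi)$ on each $\pi \in L_{|\pi|}$ and outputs the first accepting one (or $\bot$).

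To bound the computational cost, note that under the event that every estimate is $\epsilon_\A/8$-accurate, any $s' \in L_i$ must satisfy $P_H(s') \geq 3\epsilon_\A/4 - \epsilon_\A/8 = 5\epsilon_\A/8$, and since the $P_H(s')$ for prefixes at a fixed level are summable to at most $1$, we have $|L_i| \leq 8/(5\epsilon_\A)$. Thus $\B$ invokes $O(|\pi|/\epsilon_\A)$ many AA-estimators and $O(1/\epsilon_\A)$ verifier calls. Setting $\delta' := \delta_\A \epsilon_\A / (C|\pi|)$ for a suitable absolute constant $C$ and taking a union bound, every estimate is simultaneously accurate with probability $\geq 1 - \delta_\A/4$ over $H$. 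Under this event, for any $\pi^*$ with $P_H(\pi^*) > \epsilon_\A$, each prefix of $\pi^*$ has $P_H(\cdot) > \epsilon_\A > 3\epsilon_\A/4 + \epsilon_\A/8$, so the estimate for each prefix exceeds $3\epsilon_\A/4$, and $\pi^* \in L_{|\pi|}$.

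Combining the two events by union bound, if the negation of the theorem's conclusion holds, then with probability at least $\delta_\A - \delta_\A/4 = 3\delta_\A/4 > \epsilon_\poq$ over $H$, $\B$ outputs an accepting proof, contradicting $(Q_\poq,\epsilon_\poq)$-soundness. The total number of classical queries made by $\B$ is bounded by a fixed polynomial $p(\secpar, Q_\A, \epsilon_\A^{-1}, \delta_\A^{-1})$, which is the polynomial asserted in the theorem. The main technical subtlety is that \Cref{conj:AA} gives only an \emph{average-case} approximation guarantee over $H$, not a pointwise one; this is exactly why we must take a union bound over all $O(|\pi|/\epsilon_\A)$ estimator invocations with parameter $\delta'$ shrunk accordingly, and why the final success probability of $\B$ is lower bounded by $3\delta_\A/4$ rather than $\delta_\A$, forcing the hypothesis $\epsilon_\poq \leq \delta_\A/4$.
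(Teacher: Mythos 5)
There is a genuine gap in the union bound over prefix estimators. You apply \Cref{conj:AA} to the $2^{\ell_\pi+1}$ Boolean algorithms $\{\A_s\}_s$ (one per prefix $s$), obtaining classical estimators $\B_s$ each accurate on a $1-\delta'$ fraction of oracles $H$, and then run an \emph{adaptive} prefix search that decides which $\B_s$ to invoke based on the outputs of earlier estimators. The sentence ``taking a union bound, every estimate is simultaneously accurate with probability $\geq 1-\delta_\A/4$'' is not justified, because the set of prefixes whose estimators your $\B$ actually invokes is itself a function of $H$. \Cref{conj:AA} gives, for each \emph{fixed} $s$, a good set $G_s$ of oracles with $\Pr_H[H\in G_s]\geq 1-\delta'$; it says nothing about correlations between $\{H\notin G_s\}$ and the event ``$\B^H$ invokes $\B_s$'' (which depends on earlier estimates, hence on $H$, and can in the worst case be positively correlated with $\B_s$ being wrong). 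A sound union bound would have to range over all $2^{\ell_\pi+1}$ possible prefixes, forcing $\delta'\leq\delta_\A/2^{\ell_\pi+1}$ and making the query complexity of the $\B_s$ exponential. Capping $|L_i|$ at $O(1/\epsilon_\A)$ does not help either: the bound $|L_i|\leq 8/(5\epsilon_\A)$ is derived assuming all invoked estimates are accurate (which is circular), and even with a hard cap, a handful of early overestimates can fill the capped levels with spurious prefixes, evicting the true prefix of $\pi^*$ without any single union-bounded bad event ever being ``charged.''

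The paper's proof avoids adaptivity by a different decomposition. It first builds, \emph{without} \Cref{conj:AA}, a quantum algorithm $\A_i$ that runs $\A$ polynomially many times via $\Approx$ to estimate $\Pr[\A^H\to\pi]$ for all $\pi$, and outputs the lexicographically smallest accepting $\pi$ whose estimate clears a \emph{randomly shifted} threshold $\frac{\epsilon_\A}{2}(1+\frac{2i-1}{2M})$ with $i\sample[M]$, $M=\Theta(1/\epsilon_\A)$. A counting argument over thresholds (\Cref{cla:A_i_success_prob}) shows that for a $\delta_\A/2$ fraction of pairs $(H,i)$, the output of $\A_i^H$ concentrates ($>4/5$ mass) on a single string $\pi_i^H$, since at most one threshold can fall inside the approximation window of any fixed $\pi$. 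Only \emph{then} is \Cref{conj:AA} applied, to the fixed polynomial family $\{\A_{i,j}\}_{j\in[\ell_\pi]}$ outputting the $j$-th bit of $\A_i$'s output; for each $i$, the union bound ranges over the \emph{fixed} set $[\ell_\pi]$, and $i$ is chosen by a fresh coin independent of $H$. It is this randomized-threshold determinization step (attributed to \cite{TCC:ChiChuYam20}) that removes the adaptivity your prefix search introduces and makes the union bound legitimate.
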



We defer the proof of \Cref{thm:poq_entropy} to the end of this section. 
By plugging the proofs of quantumness in \Cref{sec:PoQ} into \Cref{thm:poq_entropy}, we obtain proofs of min-entropy, which proves \Cref{thm:proof_of_min}.  
\begin{proof}[Proof of \Cref{thm:proof_of_min}]
For any polynomial $h(\secpar)$, there exists a constant $C$ such that $Q_\poq(\secpar)=2^{C (h(\secpar)+\secpar)}$ and $\epsilon_\poq(\secpar)=2^{-\secpar-2}$ satisfy the requirements of \Cref{thm:poq_entropy} for $Q_\A(\secpar)=\poly(\secpar)$, $\epsilon_\A(\secpar)=2^{-(h(\secpar)+\secpar)}$, and $\delta_\A(\secpar)=2^{-\secpar}$. 
As shown in \Cref{lem:soundness}, our proof of quantumness constructed in \Cref{sec:PoQ}, which we denote by $(\prove_\poq,\verify_\poq)$, satisfies subexponential security. 
Thus, by standard complexity leveraging, there is a polynomial $q(\secpar)$ such that if we replace the security parameter with $q(\secpar)$ in $(\prove_\poq,\verify_\poq)$, then it satisfies $(2^{C (h(\secpar)+\secpar)}, 2^{-\secpar-2})$-soundness. By \Cref{thm:poq_entropy}, for any adversary $\A$ that makes $\poly(\secpar)$ quantum queries, we have 
\begin{align}\label{eq:min-entropy_lower_bound}
\Pr_{H\sample \Func(\bit^n,\bit^m)}\left[
\max_{\pi^*\text{~s.t.~}\verify^H_\poq(1^{q(\secpar)},\pi^*)=\top}\Pr[\A^H(1^\secpar)\rightarrow \pi^*]\le 2^{-(h(\secpar)+\secpar)}
\right]\geq 1-2^{-\secpar}.
\end{align}

Then, we construct proofs of min-entropy $(\prove,\verify)$ as follows.
\begin{description}
\item[$\prove^H(1^\secpar,1^{h(\secpar)}):=\prove^H_\poq(1^{q(\secpar)})$]
\item[$\verify^H(1^\secpar,1^{h(\secpar)},\pi)$:]
If $\verify^H_\poq(1^{q(\secpar)},\pi)=\bot$, it outputs $\bot$. Otherwise, it outputs $\pi$. 
\end{description}
Suppose that $(\prove,\verify)$  does not have min-entropy in the QROM. 
Then, there exist an adversary $\B$ that makes $\poly(\secpar)$ quantum queries and a polynomial $h(\secpar)$ such that we have  
\begin{align} \label{eq:min-entropy_broken}
\Pr[\verify^H(1^{\secpar},h(\secpar),\B^H(1^\secpar,1^{h(\secpar)}))\neq\bot]\geq1/\poly(\secpar)\wedge H_\infty\left(\B^H_{\top}(1^\secpar,1^{h(\secpar)})\right)\leq h(\lambda)
\end{align}
for a non-negligible fraction of $H$.
It is easy to see that \Cref{eq:min-entropy_broken} implies 
\begin{align*}
\max_{\pi^*\text{~s.t.~}\verify^H_\poq(1^{q(\secpar)},\pi^*)=\top}\Pr[\B^H(1^\secpar,1^{h(\secpar)})\rightarrow \pi^*]\ge  2^{-h(\secpar)}/\poly(\secpar).
\end{align*}
Since this holds for a non-negligible fraction of $H$, 
if we consider $\A^H(1^\secpar):=\B^H(1^\secpar,1^{h(\secpar)})$, 
this contradicts \Cref{eq:min-entropy_lower_bound}.
Therefore, $(\prove,\verify)$  has min-entropy in the QROM.
\end{proof}

\if0
\paragraph{From proofs of min-entropy to proofs of randomness.}
\Cref{thm:proof_of_randomness} follows from \Cref{thm:proof_of_min} since proofs of min-entropy gives proofs of randomness as shown in \Cref{sec:ROM}. 

\begin{proof}[Proof of \Cref{thm:proof_of_randomness}.]
Assume that \Cref{conj:AA} is true. 
By \Cref{thm:proof_of_min}, there is a keyless proof of min-entropy against oracle-independent adversaries.
By \Cref{thm:independent_to_dependent,thm:randomness_uniform_to_non-uniform}, it implies the existence of a keyless (resp. keyed) proof of min-entropy against uniform (resp. non-uniform) oracle-dependent adversaries. 
By \Cref{thm:min-entropy_to_randomness}, it implies the existence of a keyless (resp. keyed) proof of randomness against uniform (resp. non-uniform) oracle-dependent adversaries.
\end{proof}
\fi

\if0
Making use of the fact that the construction of proof of quantumness in \Cref{sec:PoQ} satisfies subexponential security, we obtain the following corollary.
\begin{corollary}\label{cor:poq_entropy}
If \Cref{conj:AA} is true, then the following holds. For any constant $c$, there exists a proof of quantumness protocol $(\prove,\verify)$ such that for any oracle-independent adversary $\A$ that makes $2^{\secpar^c}$ quantum queries, we have 
$$
\Pr_{H\sample \Func(\bit^n,\bit^m)}\left[
\max_{\pi^*\text{~s.t.~}\verify^H(1^\secpar,\pi^*)=\top}\Pr[\A^H(1^\secpar)\rightarrow \pi^*]\le 2^{-\secpar^c}
\right]\geq 1-2^{-\secpar^c}.
$$
\end{corollary}
\begin{proof}
By \Cref{lem:poq_entropy}, there is a constant $C$ such that if the proof of quantumness satisfies $(2^{C\secpar^{c}}, 2^{-C\secpar^{c}})$-soundness, then the inequality in \Cref{cor:poq_entropy} follows. 
As shown in \Cref{lem:soundness}, our proof of quantumness constructed in \Cref{sec:PoQ} satisfies subexponential security. 
Thus, by standard complexity leveraging, we can obtain proof of quantumness with
$(2^{C\secpar^{c}},2^{-C\secpar^{c}})$-soundness.
\end{proof}

\paragraph{Certifiable min-entropy.}
Intuitively, \Cref{cor:poq_entropy} means that a valid proof generated by an algorithm with less than $2^{\secpar^c}$ quantum queries must have a min-entropy larger than $\secpar^c$. 
Thus, one can think of it as a non-interactive and publicly verifiable certifiable min-entropy~\cite{EC:Zhandry19a,STOC:AGKZ20}. 
Before our work, non-interactive and publicly verifiable certifiable min-entropy is only known relative to an artificial classical oracle that does not have a natural real-world instantiation unlike random oracles ~\cite{STOC:AGKZ20}. 
\begin{remark}
We remark that we do not consider an adversary with an advice that depends on the oracle. 
For achieving security against adversaries with an advice that depends on the oracle, we need to add a setup for salting.
We believe that we can prove security of the salted scheme against adversaries with quantum advice using the result of ~\cite{FOCS:CGLQ20}. 
See also \Cref{rem:non-uniform}.   
\end{remark}
\fi

\paragraph{Intuition for the proof of \Cref{thm:poq_entropy}.} 
In the following, we often omit dependence on $\secpar$ and simply write e.g., $\epsilon_\A$ to mean $\epsilon_\A(\secpar)$ for brevity. 

Towards a contradiction, we assume that 
\begin{align*}
\Pr_{H\sample \Func(\bit^n,\bit^m)}\left[
\max_{\pi^*\text{~s.t.~}\verify^H(1^\secpar,\pi^*)=\top}\Pr[\A^H(1^\secpar)\rightarrow \pi^*]> \epsilon_\A
\right]> \delta_\A.
\end{align*}
We have to construct a classical adversary that breaks the soundness of the proof of quantumness. 
If $\epsilon_\A\approx 1$, it is easy: We consider an algorithm $\A_j$ that outputs the $j$-th bit of $\A$'s output for $j\in [\ell_\pi]$ where $\ell_\pi$ is the length of a proof in the proof of quantumness. For $\delta_\A$-fraction of $H$, $\A_j$'s output is almost deterministic for all $j$. Then, we can classically simulate $\A_j$ for all $j$ by invoking \Cref{conj:AA} for $\epsilon\ll 1$ and $\delta \ll \delta_\A/\ell_{\pi}$. This breaks the soundness of the proof of quantumness.

When $\epsilon_\A\ll 1$, such a simple bit-by-bit simulation attack does not work. The reason is that mixing up bits of multiple valid proofs does not result in a valid proof in general.  
To deal with such a case, we attempt to convert $\A$ into an almost deterministic attacker. If this is done, the same idea as the case of $\epsilon_\A\approx 1$ works. For making  $\A$ almost deterministic, our first idea is to consider an modified adversary $\A'$ that outputs the smallest valid proof $\pi$ in the lexicographical order such that $\A$ outputs $\pi$ with probability at least $\epsilon_\A$. If we can efficiently construct such $\A'$, then this idea works. However, the problem is that $\A'$ cannot exactly compute the probabilities that $\A$ outputs each $\pi$ with a limited number of queries. What $\A'$ can do is to run $\A$ many times to approximate the probabilities up to a $1/\poly$ error.\footnote{$\poly$ means a polynomial in the number of repetition of $\A$ run by $\A'$.} Now, a problem occurs if there are multiple $\pi$ such that the probability that $\A$ outputs $\pi$ is within $\epsilon_\A \pm 1/\poly$. 

To deal with this issue, we rely on an idea to randomly decide the threshold.\footnote{This idea is inspired by \cite{TCC:ChiChuYam20}.} That is, $\A'$ outputs the lexigographically smallest valid proof $\pi$ such that the approximated probability that $\A$ outputs $\pi$ is at least $t$ for some randomly chosen threshold $t\in (\epsilon_\A/2,\epsilon_\A)$. 
If we choose $t$ from a sufficiently large set and set the approximation error to be sufficiently small, we can show that it is impossible that there are multiple $\pi$ such that the probability that $\A$ outputs $\pi$ is within $t \pm 1/\poly$ for a large fraction of $t$ by a simple counting argument. This resolves the above problem. 

\paragraph{Proof of \Cref{thm:poq_entropy}.}
In the rest of this section, we give a formal proof of \Cref{thm:poq_entropy}.
We first show the following simple lemma. 
\begin{lemma}\label{lem:probability_approximation}
Let $\A$ be a (possibly quantum) algorithm that outputs an $\ell$-bit string $z$. 
For any $\epsilon,\delta>0$, there is an algorithm $\Approx(\A,\epsilon,\delta)$ that runs $\A$ $O(\ell\log(\delta^{-1}) \epsilon^{-2})$ times and 
outputs a tuple $\{P_z\}_{z\in \bit^\ell}$ such that 
\begin{align*}
    \Pr\left[\forall z\in \bit^\ell~\left|P_z-\Pr[\A()\rightarrow z]\right|\leq \epsilon\right]\geq 1-\delta
\end{align*}
where $\{P_z\}_{z\in \bit^\ell}\sample \Approx(\A,\epsilon,\delta)$. 
We say that $\Approx(\A,\epsilon,\delta)$ succeeds if the event in the above probability occurs. 
\end{lemma}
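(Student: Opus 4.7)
The plan is to run $\A$ many independent times and use the empirical frequencies as the estimates $P_z$, then apply Hoeffding's inequality together with a union bound over all $2^\ell$ possible outputs. Concretely, $\Approx(\A,\epsilon,\delta)$ will execute $\A$ independently $N$ times (where $N$ is chosen below), record the outputs $z^{(1)},\dots,z^{(N)}\in\bit^\ell$, and set $P_z \defeq \frac{1}{N}|\{i\in[N]: z^{(i)}=z\}|$ for every $z\in\bit^\ell$. Note that at most $N$ of the $P_z$ are nonzero, so the algorithm can represent its output succinctly (for example as the list of observed strings with their counts) without ever enumerating all of $\bit^\ell$.

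For the analysis, fix $z\in\bit^\ell$ and let $p_z \defeq \Pr[\A()\rightarrow z]$. The indicator variables $X_i \defeq \mathbf{1}[z^{(i)}=z]$ are i.i.d.\ Bernoulli with mean $p_z$, and $P_z = \frac{1}{N}\sum_{i=1}^N X_i$. By Hoeffding's inequality,
\[
\Pr\bigl[\,|P_z - p_z| > \epsilon\,\bigr] \leq 2\exp(-2N\epsilon^2).
\]
Taking a union bound over all $2^\ell$ choices of $z\in\bit^\ell$ yields
\[
\Pr\bigl[\,\exists z\in\bit^\ell : |P_z - p_z| > \epsilon\,\bigr] \leq 2^{\ell+1}\exp(-2N\epsilon^2).
\]
Choosing $N = \lceil (\ell+1)\ln 2 \cdot \ln(2/\delta) / (2\epsilon^2) \rceil = O(\ell\log(\delta^{-1})\epsilon^{-2})$ makes the right-hand side at most $\delta$, which establishes the claimed guarantee. (Any $N \geq (\ell\ln 2 + \ln(2/\delta))/(2\epsilon^2)$ in fact suffices, and this is absorbed by the stated $O(\ell\log(\delta^{-1})\epsilon^{-2})$ bound whenever $\delta \leq 1/2$, which is the regime of interest.)

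There is essentially no main obstacle here; the lemma is a standard concentration-plus-union-bound argument. The only point worth being slightly careful about is that the union bound is over the full $2^\ell$-sized output space rather than just the observed outputs, which is what forces the $\ell$ factor into $N$; Hoeffding's bound handles all $z$ uniformly (including those with $p_z = 0$, for which $P_z = 0$ holds deterministically and the bound is trivial), so no case analysis on the magnitude of $p_z$ is needed.
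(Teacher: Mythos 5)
Your proposal is correct and takes essentially the same approach as the paper: run $\A$ independently $N$ times, take empirical frequencies $P_z = K_z/N$, apply a per-$z$ concentration bound, and union-bound over all $2^\ell$ strings. The only (cosmetic) difference is that you invoke Hoeffding's two-sided additive inequality directly, whereas the paper cites its stated (one-sided, multiplicative) Chernoff bound; your choice is actually the cleaner instrument for the uniform additive-$\epsilon$ guarantee, especially when $p_z$ is small, but both are standard and the structure of the argument is identical.
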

\begin{proof}
$\Approx(\A,\epsilon,\delta)$ works as follows.
It runs $\A()$ $N$ times where $N$ is an integer specified later.
For each $z$, let $K_z$ be the number of executions where $\A$ outputs $z$. 
Then it outputs $\{P_z\defeq \frac{K_z}{N}\}_{z\in \bit^\ell}$.

If we set $N\geq  C\ell\log(\delta^{-1})\epsilon^{-2}$ for a sufficiently large constant $C$, 
by the Chernoff bound (\Cref{lem:Chernoff}), 
for each $z$, we have 
\begin{align*}
    \Pr\left[ \left|P_z-\Pr[\A()\rightarrow z]\right|\leq \epsilon\right]\geq 1-\frac{\delta}{2^\ell}.
\end{align*}
By the union bound, we obtain \Cref{lem:probability_approximation}.
\end{proof}
Then, we prove \Cref{thm:poq_entropy}. 
\begin{proof}[Proof of \Cref{thm:poq_entropy}.]
Towards a contradiction, we assume that 
\begin{align}
\label{eq:assumption}
\Pr_{H\sample \Func(\bit^n,\bit^m)}\left[
\max_{\pi^*\text{~s.t.~}\verify^H(1^\secpar,\pi^*)=\top}\Pr[\A^H(1^\secpar)\rightarrow \pi^*]> \epsilon_\A
\right]> \delta_\A.
\end{align}
It suffices to prove that there exists a classical adversary $\B$ that makes $p(Q_\A,m,\epsilon_\A^{-1},\delta_\A^{-1})$ quantum queries and satisfies  
$$
\Pr_{H\sample \Func(\bit^n,\bit^m)}[\verify^H(1^\secpar,\pi)=\top:\pi \sample  \B^{H}(1^\secpar)]\geq \delta_\A/4
$$
for some polynomial $p$. 
Let $M\defeq \lceil \frac{4}{\epsilon_\A}\rceil$. 
For $i\in [M]$, we consider a quantum adversary $\A_i$ that works as follows.
\begin{description}
\item[$\A_i^H(1^\secpar)$:]
It runs $\{P_\pi\}_{\pi\in \bit^{\ell_\pi}}\sample \Approx(\A,\frac{\epsilon_\A}{4M},\frac{1}{5})$ where $\ell_\pi$ is the length of a proof.
Then it outputs the smallest $\pi$ in the lexicographical order that satisfies 
$$
\verify^H(1^\secpar,\pi)=\top
$$
and 
$$
P_\pi>\frac{\epsilon_\A}{2}\left(1+\frac{2i-1}{2M}\right).
$$
\end{description}
The number of queries by $\A_i$ is $Q_{\A_i}=\poly(\secpar,Q_\A, \epsilon_\A^{-1})$ since $\ell_\pi=\poly(\secpar)$.
For each $H$, let $\pi_i^H$ be the most likely output of $\A^H_{i}(1^\secpar)$.\footnote{If there is a tie, we choose the smallest one in the lexicographical order.}
We prove the following claim. 
\begin{claim}\label{cla:A_i_success_prob}
For at least $\left(\frac{\delta_A}{2}\right)$-fraction of $H\in \Func(\bit^n,\bit^m)$ and $i\in [M]$, it holds that
$$
\Pr[\A_i^H(1^\secpar)\rightarrow \pi_i^H]> 4/5.
$$
\end{claim}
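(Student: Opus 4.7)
The plan is to prove \Cref{cla:A_i_success_prob} by a counting argument that exploits the precise matching between the threshold spacing $\epsilon_\A/(2M)$ and twice the approximation radius $\epsilon_\A/(4M)$ used in the definition of $\A_i$. Call an oracle $H$ \emph{good} if
\[
\max_{\pi^*: \verify^H(1^\secpar,\pi^*)=\top}\Pr[\A^H(1^\secpar)\rightarrow \pi^*] > \epsilon_\A,
\]
so by the contradiction hypothesis~\eqref{eq:assumption} the good oracles occupy more than $\delta_\A$ mass. I will show that for every good $H$, strictly more than $M/2$ indices $i\in[M]$ satisfy $\Pr[\A_i^H(1^\secpar)\to \pi_i^H] > 4/5$; averaging then gives the claimed $\delta_\A/2$ fraction over pairs $(H,i)$.

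The next step is a simple pigeonhole on the heavy outputs. Let $S_H:=\{\pi : \Pr[\A^H(1^\secpar)\to\pi] > \epsilon_\A/2\}$. Since these probabilities sum to at most $1$, we have $|S_H| < 2/\epsilon_\A \le M/2$. Declare index $i$ \emph{bad for $H$} if some $\pi$ satisfies $|\Pr[\A^H(1^\secpar)\to\pi]-t_i|\le \epsilon_\A/(4M)$, where $t_i=\tfrac{\epsilon_\A}{2}\bigl(1+\tfrac{2i-1}{2M}\bigr)$. Because $t_1-\epsilon_\A/(4M)=\epsilon_\A/2$, any $\pi$ witnessing badness of some index lies in $S_H$, and because consecutive thresholds are separated by exactly $\epsilon_\A/(2M) = 2\cdot\epsilon_\A/(4M)$, each such $\pi$ can witness badness of at most one index. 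Hence the number of bad indices is at most $|S_H|<M/2$, so strictly more than $M/2$ indices are \emph{good for $H$}.

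Then I would argue that for good $H$ and $i$ good for $H$, the output of $\A_i^H(1^\secpar)$ is essentially deterministic. Whenever $\Approx(\A^H,\tfrac{\epsilon_\A}{4M},\tfrac{1}{5})$ succeeds, which by \Cref{lem:probability_approximation} happens with probability at least $4/5$, the approximations $P_\pi$ and the true probabilities $\Pr[\A^H\to\pi]$ lie on the same side of $t_i$ for \emph{every} $\pi$ (since no $\pi$ falls inside the uncertainty window around $t_i$). Thus the set $\{\pi : P_\pi>t_i\}$ is the fixed set $\{\pi : \Pr[\A^H\to\pi]>t_i\}$ and $\A_i^H(1^\secpar)$ deterministically outputs the lexicographically smallest valid $\pi^\star$ in it; such a $\pi^\star$ exists because the maximizing $\pi^*$ for the good $H$ satisfies $\Pr[\A^H\to\pi^*]>\epsilon_\A\ge t_i+\epsilon_\A/(4M)$ and is valid. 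Consequently $\pi_i^H = \pi^\star$ and $\Pr[\A_i^H\to \pi_i^H]\ge 4/5$, and the strict inequality $>4/5$ is obtained by invoking $\Approx$ with a failure parameter slightly smaller than $1/5$.

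The final step is the averaging: the fraction of $(H,i)\in \Func(\bit^n,\bit^m)\times[M]$ for which the conclusion of the claim holds is at least $\Pr_H[H\text{ good}]\cdot\tfrac{\#\{i \text{ good for } H\}}{M} > \delta_\A\cdot\tfrac{1}{2}=\delta_\A/2$. The only delicate point I anticipate is the boundary case in the counting step, when $\Pr[\A^H\to\pi]$ lies exactly on the midpoint between adjacent $t_i$ and could be charged to two indices; I would handle this by using strict inequalities consistently (or perturbing the thresholds infinitesimally), which is harmless for the combinatorics. Everything else is routine bookkeeping in the parameters $\epsilon_\A,\delta_\A,M$.
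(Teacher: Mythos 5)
Your proof is correct and takes essentially the same approach as the paper: both use the counting argument that each heavy output $\pi$ (with probability exceeding $\epsilon_\A/2$) can invalidate at most one threshold index $i$, so at least half of $[M]$ is good for any good $H$, and both observe that $\A_i^H$ is deterministic conditioned on $\Approx$ succeeding. You were more careful than the paper on the boundary case: the paper concludes $\Pr[\A_i^H\to\pi_i^H]>4/5$ from $\Approx$ succeeding "with probability at least $4/5$," which is technically a $\geq$ not a $>$; your remark about choosing the failure parameter slightly below $1/5$ cleanly fixes this.
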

\begin{proof}[Proof of \Cref{cla:A_i_success_prob}]
By \Cref{eq:assumption}, at least $\delta_\A$-fraction of $H$ satisfies   
\begin{align}\label{eq:max_prob}
\max_{\pi^*\text{~s.t.~}\verify^H(1^\secpar,\pi^*)=\top}\Pr[\A^H(1^\secpar)\rightarrow \pi^*]> \epsilon_\A.
\end{align}
Fix such $H$. 
Then, for at least $\frac{1}{2}$-fraction of $i\in[M]$, 
there does not exist $\pi$ satisfying 
\begin{align}\label{eq:prob_in_interval}
\left|
\Pr[\A^H(1^\secpar)\rightarrow \pi]-\frac{\epsilon_\A}{2}\left(1+\frac{2i-1}{2M}\right)
\right|
<
\frac{\epsilon_\A}{4M}.
\end{align}
This can be seen by a simple counting argument.
First, we remark that if $\pi$ satisfies \Cref{eq:prob_in_interval} for some $i\in [M]$, then we have 
$\Pr[\A^H(1^\secpar)\rightarrow \pi]>\epsilon_\A/2$. 
Therefore, the number of such $\pi$ is at most $2/\epsilon_\A$. 
Second, we remark that each $\pi$ can satisfy  \Cref{eq:prob_in_interval} for at most one $i$. 
Therefore, the fraction of $i\in [M]$ such that there is $\pi$ that satisfies \Cref{eq:prob_in_interval} is at most $2/(\epsilon_\A M)\leq 1/2$. 

Therefore, for at least $\left(\frac{\delta_\A}{2}\right)$-fraction of $H$ and $i$, \Cref{eq:max_prob} holds and there does not exist $\pi$ satisfying \Cref{eq:prob_in_interval}. 
For such $H$ and $i$, if $\Approx(\A,\frac{\epsilon_\A}{4M},\frac{1}{5})$ succeeds, which occurs with probability at least $\frac{4}{5}$,  then $\A_i$ outputs the smallest $\pi$ in the lexicographical order that satisfies 
$$
\verify^H(1^\secpar,\pi)=\top
$$
and 
$$
\Pr[\A^H(1^\secpar)\rightarrow \pi]>\frac{\epsilon_\A}{2}\left(1+\frac{2i-1}{2M}\right).
$$
Since the above $\pi$ is output with probability larger than $4/5$, this is the most likely output $\pi_i^H$. 
Thus, for at least $\left(\frac{\delta_\A}{2}\right)$-fraction of $H$ and $i$, $\A_i^H$ returns $\pi_i^H$ with probability larger than $4/5$.
This completes the proof of  \Cref{cla:A_i_success_prob}.
\end{proof}

For $j\in [\ell_\pi]$, let $\A_{i,j}$ be the algorithm that runs $\A_{i}$ and outputs the $j$-th bit of the output of $\A_i$.
Since $\A_{i,j}$ makes the same number of queries as $\A_i$, its number of queries is 
$Q_{\A_{i,j}}=Q_{\A_i}=\poly(\secpar,Q_\A,\epsilon_\A^{-1})$. 
We apply \Cref{conj:AA} to $\A_{i,j}$ where 
$\epsilon:=1/5$ and 
$\delta:=\frac{\delta_\A}{4\ell_\pi}$.
Then, \Cref{conj:AA} ensures that there exists a deterministic classical algorithm $\B_{i,j}$ that makes $\poly(Q_{\A_{i.j}},m,\epsilon^{-1},\delta^{-1})=\poly(\secpar,Q_\A, \epsilon_\A^{-1},\delta_\A^{-1})$ classical queries and satisfies
\begin{align*}
    \Pr_{H\sample \Func(\bit^n,\bit^m)}\left[\left|\Pr[\A^H_{i,j}(1^\secpar)\rightarrow 1]-\B^{H}_{i,j}(1^\secpar)\right|\leq 1/5\right]\geq 1-\frac{\delta_\A}{4\ell_\pi}.
\end{align*}
By the union bound, we have 
\begin{align}\label{eq:B_ij}
    \Pr_{H\sample \Func(\bit^n,\bit^m)}\left[
    \forall j\in[\ell_\pi]~
    \left|\Pr[\A^H_{i,j}(1^\secpar)\rightarrow 1]-\B^{H}_{i,j}(1^\secpar)\right|\leq 1/5\right]\geq 1-\frac{\delta_\A}{4}.
\end{align}

Now, we  give the classical adversary $\B$. 
\begin{description}
\item[$\B^H(1^\secpar)$:]
It randomly chooses $i\sample [M]$. For $j=1,2,...,\ell_\pi$, it runs $\B^H_{i,j}(1^\secpar)$ and sets $\pi_j\defeq 1$ if the output is larger than $1/2$ and $\pi_j\defeq 0$ otherwise. 
Then it outputs $\pi\defeq \pi_1\concat \pi_2\concat...\concat \pi_{\ell_\pi}$.
\end{description}
By the construction, we can see that $\B$ makes $\poly(\secpar,Q_\A,\epsilon_{\A}^{-1},\delta_{\A}^{-1})$ queries. 
By combining \Cref{cla:A_i_success_prob} and \Cref{eq:B_ij}, for at least $\left(\frac{\delta_A}{4}\right)$-fraction of $H\in \Func(\bit^n,\bit^m)$ and $i\in [M]$, 
for all $j\in[\ell_\pi]$, 
if the $j$-th bit of $\pi^H_i$ is $1$, we have
$$
\B^{H}_{i,j}(1^\secpar)>3/5
$$
and otherwise
$$
\B^{H}_{i,j}(1^\secpar)<2/5.
$$
Thus, for such $H$ and $i$, $\B^{H}(1^\secpar)$ outputs $\pi^H_i$. 
Since we have $\verify^H(1^\secpar,\pi^H_i)=\top$ for all $i\in [M]$, we have 
$$
\Pr_{H\sample \Func(\bit^n,\bit^m)}[\verify^H(1^\secpar,\pi)=\top:\pi \sample  \B^{H}(1^\secpar)]\geq \frac{\delta_\A}{4}.
$$
This contradicts the soundness of the proof of quantumness in the CROM. 
This completes the proof of \Cref{thm:poq_entropy}.
\end{proof}

\if0
\begin{theorem}
If \Cref{conj:AA} is true the following holds. For a proof of quantumness protocol $(\prove,\verify)$, suppose that for a non-negligible we have
\begin{align*}
    \Pr_{H\sample \Func(\bit^n,\bit^m)}[\leq \epsilon]\geq 1-\delta.
\end{align*}
there does not exist a proof of quantumness in the ROM such that $H_{\infty}(\pi)=\omega(\log \secpar)$ where $\pi \sample \prove^{H}(1^\secpar)$ for an overwhelming fraction of $H$. 
\end{theorem}
\begin{lemma}
If \Cref{conj:AA} is true the following holds.
Let $\A$ be a quantum algorithm that makes $Q$ quantum queries to a random oracle $H\bit^n\rightarrow \bit^m$.
Suppose that it holds that for an 
Then, there exists a deterministic classical
\end{lemma}
\begin{proof}
\takashi{Under construction. This seems non-trivial since AA conjecture ensures the closeness of the approximated probability on $1-\delta$ fraction for each fixed $x$. It may be possible that each $x$ has different sets of good $H$, which makes the proof difficult. A possible idea is binary-search.}
Suppose that the above claim does not hold. Then, there exists a polynomial $p$ such that $\max_{\pi^*}\Pr_{\pi\sample \prove^{H}(1^\secpar)}[\pi=\pi^*]$ for $1/p(\secpar)$ fraction of $H$ for infinitely many $\secpar$.

Assume that \Cref{conj:AA} is true. 

For each $H$ and $\pi$, we define 
$$p^{H}_{\pi}\defeq \Pr[\prove^{H}(1^\secpar)=\pi].$$
For $\pi$, $\epsilon$, and $\delta$,  
let $\B_{\pi,\delta,\epsilon}$ be the deterministic algorithm that approximates $p^{H}_{\pi}$ to within an additive constant $\epsilon$ on a $1-\delta$ fraction of $H$ assumed to exist by \cref{conj:AA}. Note that  $\B_{\pi,\epsilon,\delta}$ makes $\poly(\secpar)$ classical queries as long as $\delta$ and $\epsilon$ are inverse-polynomial.
For each $H$, $\pi$, $\epsilon$, and $\delta$, we define
$$\tilde{p}^{H}_{\pi,\epsilon,\delta}\defeq \B^{H}_{\pi,\epsilon,\delta}.$$
By \cref{conj:AA}, we have 
$$|p^{H}_{\pi}-\tilde{p}^{H}_{\pi,\epsilon,\delta}|\leq \epsilon
$$
$1-\delta$ fraction of $H$.

Fix such $\secpar$ and $H$. 
Then, assuming that Aaronson-Ambainis conjecture is true, we can construct $\B$ that breaks soundness of the proof of quantumness as follows:
\begin{enumerate}
    \item For all possible $\pi$, $\B$ approximates the probability that $\prove^{H}(1^\secpar)$ outputs $\pi$ by using \Cref{conj:AA} where $\epsilon=1/2p(\secpar)$ and $\delta=1/2p(\secpar)$. This procedure runs in polynomial time in $\secpar$ by \Cref{conj:AA}. Let $\pi_1,\pi_2,...,\pi_N$ be those for which the approximated probability is at least $1/2p(\secpar)$. 
\end{enumerate}
\end{proof}
\fi

\if0
\subsection{Equivalence of Classical and Quantum Security for PRG}\label{sec:PRG}
\begin{definition}[Pseudorandom generators with random oracles]
We say that a family $\{G_\secpar:\bit^{\inlength} \rightarrow \bit^{\outlength}\}_{\secpar\in\mathbb{N}}$ of efficiently computable oracle-aided functions relative to oracles $H:\bit^n \rightarrow \bit^m$  is a secure pseudorandom generator (PRG) in the CROM (resp. QROM) if for all unbounded-time adversaries $\A$ that makes $\poly(\secpar)$ classical (resp. quantum) queries to $H$, we have 
\begin{align*}
\left|\Pr\left[\A^{H}(t)=1: s\sample\bit^{\inlength}, t\defeq G_\secpar^H(s), \right]-\Pr\left[\A^{H}(t)=1: t\sample\bit^{\outlength} \right]\right|=\negl(\secpar)
\end{align*}
where $H\sample \Func(\bit^n,\bit^m)$.
\end{definition}
\begin{theorem}\label{thm:PRG}
Assuming that \Cref{conj:AA} is true, 
if   $\{G_\secpar:\bit^{\inlength} \rightarrow \bit^{\outlength}\}_{\secpar\in\mathbb{N}}$ is a secure PRG in the CROM, then it is also a secure PRG in the QROM. 
\end{theorem}
\takashi{I haven't succeeded in proving this so far.}

One may think that the above theorem can be easily proven by using \Cref{conj:AA} to classically approximate the probability that a quantum distinguisher returns $1$. However, this does not directly work since when the distinguisher's input is a PRG value, it depends on the random oracle and thus the conditional distribution of the oracle conditioned on the input is not uniform. 
Fortunately, we can show that~\Cref{conj:AA} also implies a lemma that is useful in such a situation. 
\begin{lemma}\label{conj:AA_two}
If \Cref{conj:AA} is true, the following holds. 
Let $\gen$ be an algorithm (without any input) that makes $Q_{\gen}$ quantum queries to a random oracle $H:\bit^n\rightarrow \bit^m$ and outputs a classical string $x$.  
Let $\gamma>0$ be a real. Given any quantum algorithm $\A$ that is given an input $x$ generated by $\gen$ and makes $Q_\A$ quantum queries to the random oracle $H$, there exists a classical algorithm $\mathcal{C}$ that takes $x$ as input, makes $\poly(Q_\gen,Q_\A,m,\gamma^{-1})$ classical queries, and satisfies that
\begin{align}\label{eq:consequence_AA}
   \left|
   \Pr_{
   \substack{
   H\sample \Func(\bit^n,\bit^m)\\
   x\sample \gen^{H}()
   }
   }\left[\A^{H}(x)\rightarrow 1\right]
   -
   \Pr_{\substack{
   H\sample \Func(\bit^n,\bit^m)\\
   x\sample \gen^{H}()
   }}\left[\mathcal{C}^{H}(x)\rightarrow 1\right]\right|\leq \gamma
\end{align}
\end{lemma}
\begin{proof}[Proof of \Cref{conj:AA}]
In the following, whenever probabilities are taken over $H$ or $x$, they are chosen as $H\sample \Func(\bit^n,\bit^m)$ and $x\sample \gen^H()$, respectively. 
It holds that 
\begin{align}
    &\text{(LHS~of~\Cref{eq:consequence_AA})} \notag \\
    =&
    \left|
    \sum_{x^*}
    \Pr_{H,x}[x=x^*]
    \left(
       \Pr_{H,x}\left[\A^{H}(x)\rightarrow 1\mid x=x^*\right]
   -
   \Pr_{H,x}\left[\mathcal{C}^{H}(x)\rightarrow 1\mid x=x^*\right]
   \right)
    \right| \notag \\
    \leq&
    \underset{H,x}{\mathbb{E}}
     \left[
        \left|
       \Pr\left[\A^{H}(x)\rightarrow 1\right]
   -
   \Pr\left[\mathcal{C}^{H}(x)\rightarrow 1\right]
    \right|
        \right] \label{eq:upper_bound_LHS} \\
=&
 \underset{H}{\mathbb{E}}
     \left[
        \left|
       \Pr\left[\A'^{H}(x)\rightarrow 1\right]
   -
   \Pr_{
   x
   }\left[\mathcal{C}^{H}(x)\rightarrow 1\right]
    \right|
        \right]
\end{align}

Let $\A'$ be an algorithm that chooses $x\sample \gen^H()$, runs $\A^H(x)$, and outputs whatever $\A$ outputs. We apply \Cref{conj:AA} to $\A'$ where $\delta\defeq \gamma/2$ and $\epsilon\defeq \gamma/2$. Let $\B$ be the algorithm that satisfies the requirements in \Cref{conj:AA}.
Let $\mathcal{C}$ be an algorithm that takes $x$

It suffices to prove the following holds for any $x^*$:
\begin{align}\label{eq:approximate_conditional_probability}
   \Pr_{H\sample \Func(\bit^n,\bit^m)}
   \left[
   \left|
   \Pr_{x\sample \gen^{H}()}\left[\A^{H}(x)\rightarrow 1\mid x=x^*\right]
   -
   \B^{H}(x^*)
   \right|>\epsilon\right] <\delta
\end{align}
Indeed, if we average \Cref{eq:approximate_conditional_probability} over $x^*$ according to the distribution of $\gen^H()$, then we obtain \Cref{thm:consequence_AA}.
\takashi{Wait, this argument sounds dangerous.}
In the rest of the proof, we prove \Cref{eq:approximate_conditional_probability} assuming \Cref{conj:AA}.

For any $H\in \Func(\bit^n,\bit^m)$, we interpret it as an $N$-bit string for $N=m\cdot 2^{n}$ by simply concatenating the function values in the lexicographical order of the input.
We denote the corresponding bit string by $X_H$. 
We remark that each query to $H$ can be replaced with $m$ queries to $X$. 
Then, by \Cref{lem:probability_polynomial}
for any fixed $H$ and $x^*$, 

$$\Pr_{x\sample \gen^{H}()}\left[\A^{H}(x)\rightarrow 1\land x=x^*\right]$$
is a real multilinear polynomial in $X_H$ of degree at most $2m(Q_\gen+Q_\A)$. (Consider $\A'$ that takes $x\sample \gen^{H}()$ by itself and returns $1$ if and only if $\A^{H}(x)\rightarrow 1$  and $x=x^*$.)
Since we have 
$$\Pr_{x\sample \gen^{H}()}\left[\A^{H}(x)\rightarrow 1\mid x=x^*\right]=\frac{\Pr_{x\sample \gen^{H}()}\left[\A^{H}(x)\rightarrow 1\land x=x^*\right]}{\Pr_{x\sample \gen^{H}()}[x=x*]}$$
and $\Pr_{x\sample \gen^{H}()}[x=x*]$ is just a real scalar, 
$$\Pr_{x\sample \gen^{H}()}\left[\A^{H}(x)\rightarrow 1\mid x=x^*\right]$$
is a real multilinear polynomial in $X_H$ of degree at most $2(Q_\gen+Q_\A)$.
Moreover, since it is a probability, it takes values in $[0,1]$ on all inputs $X_H$. 
Then, \Cref{lem:implicit_AA14} implies the existence of $\B_{x^*}$ that makes $\poly(Q,m,\epsilon^{-1},\delta^{-1})$ classical queries to $X$, and outputs a real in $[0,1]$, and it holds that 
\begin{align*}
   \Pr_{H\sample \Func(\bit^n,\bit^m)}
   \left[
   \left|
   \Pr_{x\sample \gen^{H}()}\left[\A^{H}(x)\rightarrow 1\mid x=x^*\right]
   -
   \B^{X_H}_{x^*}
   \right|>\epsilon\right] <\delta
\end{align*}
Noting that each query to $X_H$ can be replaced with one query to $H$, the existence of $\B$ that satisfies \Cref{eq:approximate_conditional_probability} immediately follows. 
\end{proof}

\begin{proof}[Proof of \Cref{thm:PRG}]
Suppose that $\{G_\secpar:\bit^{\inlength} \rightarrow \bit^{\outlength}\}_{\secpar\in\mathbb{N}}$ is not a secure PRG in the QROM.
Then, there is an adversary $\A$ that makes $Q=\poly(\secpar)$ quantum queries to $H$ and a polynomial $p$ such that 
\begin{align}\label{eq:advantage_real}
\left|\underset{H,t_0}{\mathbb{E}}\left[P^H(t_0)\right]-\underset{H,t_1}{\mathbb{E}}\left[P^H(t_1)\right]\right|\geq 1/p(\secpar)
\end{align}
for infinitely many $\secpar$ 
where 
\begin{align*}
P^H(t) \defeq \Pr\left[\A^{H}(t)=1\right]
\end{align*}
and the expectations are taken over randomness of $H\sample \Func(\bit^n,\bit^m)$, $s\sample\bit^{\inlength}$, $t_0\defeq G_\secpar^H(x)$, and $t_1\sample\bit^{\outlength}$.  
In the following, whenever we consider probabilities or expectations over $H$, $t_0$, or $t_1$, they are taken from the above distributions, respectively. 
In the following, We focus on $\secpar$ for which \Cref{eq:advantage_real} holds. 

We apply \Cref{conj:AA} for $\A$ where $\gen^{H}$ is an algorithm that takes $s\sample \bit^{\inlength}$ and outputs $G_\secpar(s)$, $\epsilon\defeq 1/(8p(\secpar))$ and $\delta=1/(8p(\secpar))$.
Let $\B$ be the algorithm as in \Cref{conj:AA}. 
Remark that the number of $\B$'s queries is $\poly(Q,m,\epsilon^{-1},\delta^{-1})=\poly(\secpar)$. 
By \Cref{conj:AA}, for $b\in \bit$, we have 
\begin{align*}
   \Pr_{H}
   \left[
   \left|
   \Pr_{t_b}\left[\A^{H}(t_b)\rightarrow 1\right]-\underset{t_b}{\mathbb{E}}\left[\B^{H}(t_b)\right]\right|>\epsilon\right] <\delta.
\end{align*}
Noting that $\B$'s output is in $[0,1]$, the above implies that
\begin{align}\label{eq:difference_real_and_approximation}
   \left|
   \underset{H,t_b}{\mathbb{E}}\left[P^H(t_b)\right]-\underset{H,t_b}{\mathbb{E}}\left[\tilde{P}^H(t_b)\right]\right| <\epsilon+\delta
\end{align}
 for $b\in \bit$ 
where we define 
\begin{align*}
    \tilde{P}^H(t)\defeq \B^H(t).
\end{align*}

Then, we have
\begin{align}
&\left|\underset{H,t_0}{\mathbb{E}}\left[\tilde{P}^H(t_0)\right]-\underset{H,t_1}{\mathbb{E}}\left[\tilde{P}^H(t_1)\right]\right|\notag\\
\geq 
&\left|\underset{H,t_0}{\mathbb{E}}\left[P^H(t_0)\right]-\underset{H,t_1}{\mathbb{E}}\left[P^H(t_1)\right]\right|
-
\left|\underset{H,t_0}{\mathbb{E}}\left[P^H(t_0)\right]-\underset{H,t_0}{\mathbb{E}}\left[\tilde{P}^H(t_0)\right]\right|
-\left|
\underset{H,t_1}{\mathbb{E}}\left[P^H(t_1)\right]-\underset{H,t_1}{\mathbb{E}}\left[\tilde{P}^H(t_1)\right]\right|\notag\\
\geq &1/p(\secpar)-2(\epsilon+\delta)\notag\\
=&1/(2p(\secpar))\label{eq:difference_approximations}
\end{align}
where we used \Cref{eq:advantage_real,eq:difference_real_and_approximation} in the second inequality. 

Let $\B'$ be an adversary against the pseudorandomness in the CROM that works as follows. Given an input $y^*$, $\B$ computes $\tilde{P}^H(t^*)\defeq \B^H(t^*)$ and then outputs $1$ with probability $\tilde{P}^H(t^*)$. 
Then, \Cref{eq:difference_approximations} can be rewritten as follows:
\begin{align*}
    \left|\Pr_{H,t_0}\left[{\B'}^H(t_0)\rightarrow 1\right]-\Pr_{H,t_1}\left[{\B'}^H(t_1)\rightarrow 1\right]\right|\geq 1/(2p(\secpar)).
\end{align*}
Since this holds for infinitely many $\secpar$, $\B'$ breaks pseudorandomness in the CROM. This completes the proof of \Cref{thm_PRG}.
\end{proof}

\paragraph{Relation to the separation for one-wayness.}
One might think that \Cref{thm:PRG} sounds contradictory to \Cref{thm:separation_OWF}.
In \Cref{thm:separation_OWF}, we show the existence of OWFs that are secure in the CROM but insecure in the QROM. Then, if we build a PRG by plugging such a OWF to a black-box construction of PRGs from OWFs like~\cite{HILL99}, one might expect that the resulting PRG is secure in the CROM but insecure in the QROM, which contradicts \Cref{thm:PRG} unless \Cref{conj:AA} is false.  
However, we remark that there is no contradiction even if we assume \Cref{conj:AA} is true. 
The reason is that PRGs constructed from OWFs may be secure in the QROM even if the building block OWF is insecure in the QROM. 
For example, there is no obvious attack against the PRG~\cite{HILL99} even with an inverter for the building block OWF.  

Indeed, we believe that we can show that \emph{any} black-box construction of PRGs from OWFs may remain secure even if the building block OWF is insecure.
We sketch the intuition below. 
Let $f:\calX\rightarrow \calX$ be a OWF. We augment the domain to $\calX\times \calR$ where $\calR$ is an exponentially large space by defining 
\begin{align*}
    f'(x,r)\defeq f(x).
\end{align*}
Then, it is clear that $f'$ is also a OWF. 
Suppose that we construct a PRG $G$ by making black-box use of $f'$. Since $f'$ is a secure OWF, $G^{f'}$ is a secure PRG. 
For each $r^* \in \calR$, we define $f'_{r^*}$ as follows:
\begin{align*}
    f'_{r^*}(x,r)\defeq 
    \begin{cases}
    f(x) & \text{if~}r\neq r^*\\
    x &\text{otherwise}
    \end{cases}.
\end{align*} 
Then, $f'_{r^*}$ is clearly insecure as OWFs: for inverting  $y$, one can just output $(y,r^*)$.
On the other hand, when we run $G$ with respect to $f'_{r^*}$ instead of $f'$ for a randomly chosen $r^*$, there would be a negligibly small chance of calling the second branch of $f'_{r^*}$ if the number of $G$'s queries is polynomial.
This means that $G$ may be secure even if the building block function $f'_{r^*}$ is insecure as OWFs. 
\fi
\section{Proof of Theorem \ref{thm:randomness_uniform_to_non-uniform}}\label{sec:proof_randomness_uniform_to_non-uniform}
In this section, we prove \Cref{thm:randomness_uniform_to_non-uniform}. For the reader's convenience, we restate the theorem below. 
\begin{theorem}[Restatement of \Cref{thm:randomness_uniform_to_non-uniform}] 
If $(\prove_0,\verify_0)$ is a proof of min-entropy (resp. proof of randomness) in the QROM, then $(\prove,\verify)$ is a proof of min-entropy (resp. proof of randomness) in the AI-QROM, where $\prove^H(1^\secpar,k_0||k_1,1^h)=\prove_0^{H(k_1||\cdot)}(1^\secpar,k_0,1^{h+1})$ and $\verify^H(1^\secpar,k_0||k_1,1^h,\pi)=\verify_0^{H(k_1||\cdot)}(1^\secpar,k_0,1^{h+1},\pi)$ and where $k_1\in\bit^\secpar$.
\end{theorem}
\begin{proof}We prove the case of proof of min-entropy, the case of proofs of randomness being essentially identical. Consider a non-uniform oracle-dependent adversary $\A$ for the min-entropy of $(\prove,\verify)$, with advice function $a(H)$ of polynomial output length.

To get an intuition for our proof, consider two possible advice strings $a(H)$. The first is where $a(H)$ is computed by choosing an arbitrary $k_1^*$, and setting $a(H)$ to be some function of $H(k_1^*||\cdot)$, the portion of the truth table that uses the prefix $k_1^*$. The second is where $a(H)$ is, say, $H(0||x)\oplus H(1||x)\oplus H(2||x),\cdots$ for some $x$, which depends on $H$ evaluated at all possible prefixes.

In the first case, $a(H)$ is only useful if $k_1=k_1^*$, which occurs with exponentially-small probability. If $k_1\neq k_1^*$, then since $\verify_0^{H(k_1||\cdot)}$ only queries $H$ on inputs that are independent of $a(H)$, security follows from the underlying security of $(\prove_0,\verify_0)$ in the ordinary QROM.

The second case is slightly trickier, since now $a(H)$ depends on all possible prefixes. Here, however, we can come up with a simple fix: choose a uniform $k_1^*$, and \emph{re-sample} $H$ on all inputs of the form $k_1^*||\cdot$. Let the resulting oracle be $H'$. Because $k_1^*$ is random and independent of the adversary's view, it is straightforward to show that this change negligibly impacts the adversary's output distribution. Now, however, $a(H)$ is actually independent of $H'$, since the re-sampled parts eliminate any dependency.

Our proof will follow similar lines, but work more generally. We re-sample a large-but-not-too-large number of prefixes, and show that this does not change the adversaries output distribution by much. Intuitively, if $a(H)$ depended globally on many prefixes (as in our second example), then by re-sampling a few prefixes we make $a(H)$ close to independent of $H'$. On the other hand, if $a(H)$ depends on just a few prefixes, it is anyway exponentially unlikely that $k_1$ will be among the prefixes. The result in either case is that $H(k_1||\cdot)$ will be close to independent of $a(H)$, which allows us to base security on the underlying security of $(\prove_0,\verify_0)$ in the ordinary QROM.

The above argument would work for ``typical'' cryptographic games. One wrinkle, however, with applying it to proofs of min-entropy is that a negligible change in the adversary's output distribution can result in a non-negligible change in the entropy. It is therefore insufficient to argue simply that the adversary's output distribution is negligibly close. We utilize a careful argument to show that, indeed, entropy is preserved in our reduction. The intuition is that instead of an additive error, we show that the probability of each outcome incurs only a small multiplicative change moving from $H$ to $H'$. Such a small multiplicative change will indeed preserve entropy. We now give the proof.

\medskip

Suppose $\A$ breaks min-entropy. This means there is a polynomial $h$, an inverse polynomial $\delta$ and a non-negligible $\epsilon$ such that the following simultaneously hold with probability at least $\epsilon$ over the choice of $H,k_0,k_1$:
\begin{align}
    \Pr[\verify^{H}(1^\secpar,k_0||k_1,1^h,\A^{H}(1^\secpar,a(H),k_0||k_1,1^h))\neq\bot]&\geq\delta(\lambda)\label{nonuniform1}\\
    H_\infty\left(\A^{H}_{\top}(1^\secpar,a(H),k_0||k_1,1^h)\right)&\leq h\label{nonuniform2}
\end{align}

We now implement the re-sampling process outlined above. Choose a second random oracle $J$. Moreover, choose a random set of salts $S\subseteq\bit^\secpar$. $S$ will be chosen as follows. First choose a size $\ell\in[2^{\secpar}]$ according to a distribution $D$, which will be specified later. Then choose $S$ to be a uniform random subset of size $\ell$. Define $H'$ as
\[H'(s,x)=\begin{cases}J(s,x)&\text{if }s\in S\\H(s,x)&\text{otherwise}\end{cases}\]

We now specify two different distributions $D_1,D_2$ for $\ell$, which induce distributions $E_1,E_2$ over $H'$. Let $k,d,n$ be non-negative integers with $dn\leq 2^\lambda$. We will think of $d,n$ as being super-polynomial, and $k$ as being polynomial. Define the matrix $\matA\in\mathbb{Z}^{(k+1)\times n}$ as follows:
\[\matA=\begin{pmatrix}
    1&1&1&1&\cdots&1\\
    0&1&2&3&\cdots&n\\
    0&1&4&9&\cdots&n^2\\
    \vdots&\vdots&\vdots&\vdots&\ddots&\vdots\\
    0&1&2^k&3^k&\cdots&n^k
\end{pmatrix}\]
Let $\vecx$ be the $n$-dimensional vector $\vecx=(1\;\;-1\;\;0\;\;0\;\;\cdots\;\;0)$, and let $\vecy$ be the orthogonal projection of $\vecx$ onto the space orthogonal to the rows $\matA$. Let $\vecy^+$ be the vector obtained from $\vecy$ by replacing all the negative entries with 0 and keeping all the positive entries. Let $\vecy^-$ be the vector obtained from $\vecy$ by replacing all the positive entries with 0, and negating all the negative entries (thereby making them positive). That is, 
\begin{align*}
    \vecy^+_i&=\max(\vecy_i,0)\\
    \vecy^-_i&=\max(-\vecy_i,0)
\end{align*}
This means $\vecy^+,\vecy^-$ have only non-negative entries, and $\vecy=\vecy^+-\vecy^-$. We will 0-index the coordinates of $\vecy,\vecy^+,\vecy^-$, so that the first entry has position $i=0$, the second has position $i=1$, etc.

Now define $D_1$ as the distribution which samples $i\cdot d$ with probability proportional to $\vecy^+_i$ (namely, with probability $\vecy^+_i/|\vecy^+|_1$ where $|\cdot|_1$ represents the 1-norm), and $D_2$ as the distribution which samples $i\cdot d$ with probability proportional to $\vecy^-_i$  (namely, with probability $\vecy^-_i/|\vecy^-|_1$). We call $E_1,E_2$ the distributions over $H'$ that result from sampling $\ell$ from $D_1,D_2$, respectively.

The intuition for these distributions is that $\vecy^+$ will be very close to $(1\;\;0\;\;\cdots\;\;0)$ while $\vecy^-$ will be very close to $(0\;\;1\;\;0\;\;\cdots\;\;0)$. This means $D_1$ will place the bulk of its weight on 0, meaning $|S|=0$ with high probability, in which case $H'=H$. The small probability that $H\neq H'$ means that the probability of any output of $\A$ could only have changed by a small multiplicative amount, meaning the min-entropy stays low (we want the entropy to stay low since we are ultimately going to use the adversary to break $(\prove_0,\verify_0)$). On the other hand, $D_2$ places \emph{all} of its weight on values at least $d$, meaning $|S|\geq d$. In this case, we will show that for a random choice of $s\notin S$, the truth table of $H(s,\cdot)$ is essentially independent of $a(H)$ given $H'$. This allows us to show that if $\A$ breaks min-entropy under the distribution $D_2$, then we can turn $\A$ into an adversary $\B$ for $H(s,\cdot)$ in the setting where $\B$ is given no auxiliary input. This would contradict the assumed security of $(\prove_0,\verify_0)$. The proof is then completed by showing that, since $\matA\cdot(\vecy^+-\vecy^-)=0$, the output distributions under $D_1$ and $D_2$ are identical. We now prove the above facts.

\paragraph{Part 1: Small entropy difference for $E_1$.} We now show that in the case $H'$ is sampled from $E_1$ (that is, $\ell$ sampled from $D_1$), that the resulting distribution is very close to $H$. More precisely:

\begin{lemma}\label{lem:smallentropychange} Fix $H,k_0,k_1$, which in turn fixes $a(H)$. Let $z$ be any possible output of $\A$. Then \[\Pr_{H'\gets E_1}[z\gets\A^{H'}(1^\secpar,a(H),k_0||k_1,1^h)]\geq \left(1-O(k^3/n^{1/2})\right)\Pr[z\gets\A^{H}(1^\secpar,a(H),k_0||k_1,1^h)]\]
\end{lemma}
This means that the most likely outcome $z$ is only negligibly effected by moving from $H$ to $H'$, when $\ell$ is sampled from $D_1$. Hence the min-entropy of the output distribution of $\A$ can only increase by a negligible amount.

Since $H'=H$ when $\ell=0$, Lemma~\ref{lem:smallentropychange} is an immediate consequence of the following lemma:
\begin{lemma}$\Pr[0\gets D_1]\geq 1-O(k^3/n^{1/2})$
\end{lemma} 
\begin{proof}Let $\vecz$ be the projection of $\vecx=(1\;\;-1\;\;0\;\;0\;\;\cdots\;\;0)$ onto the row-span of $\matA$, meaning $\vecz+\vecy=\vecx$ and $\vecy,\vecz$ are orthogonal. Hence $2=|\vecx|^2=|\vecz|^2+|\vecy|^2$. Our goal will be to bound $|\vecz|$ to being negligible. This will imply that $\vecy$ is very close to $\vecx$, and hence $\vecy^+$ is very close to $(1\;\;0\;\;0\;\;\cdots\;\;0)$. This in turn means most of the mass of $D_1$ is on 0, as desired.

Consider the matrix $\matB=\matA\cdot\matA^T$. Then $\matB_{i,i'}=\sum_{j=0}^n j^{i+i'}$ (where we 0-index $i,i'$). This sum very closely approximates $n^{i+i'+1}/(i+i'+1)$. To keep the following analysis simpler, we will take $\matB_{i,i'}=n^{i+i'+1}/(i+i'+1)$; the error caused by this will be small and therefore will be absorbed into the big-O. We can then write $\matB=n\cdot \matD\cdot\matB'\cdot\matD$
where
\[\matD=\begin{pmatrix}1&&&\\&n&&\\&&n^2&\\&&&\ddots\end{pmatrix}\;\;\;\;\matB'=\begin{pmatrix}1&\frac{1}{2}&\frac{1}{3}&\cdots&\frac{1}{k+1}\\
\frac{1}{2}&\frac{1}{3}&\frac{1}{4}&\cdots&\frac{1}{k+2}\\
\frac{1}{3}&\frac{1}{4}&\frac{1}{5}&\cdots&\frac{1}{k+3}\\
\vdots&\vdots&\vdots&\ddots&\vdots\\
\frac{1}{k+1}&\frac{1}{k+2}&\frac{1}{k+3}&\cdots&\frac{1}{2k+1}\end{pmatrix}\]

Observe that the matrix representing the orthogonal projection onto the row-span of $\matA$ is $\matA^T\cdot\matB^{-1}\cdot\matA$. Therefore, we have that 
\begin{align*}|\vecz|^2 &= \vecz^T\cdot\vecz = \vecx^T\cdot \matA^T\cdot\matB^{-1}\cdot\matA\cdot\vecx=(0\;\;-1\;\;-1\;\;\cdots\;\;-1)\cdot\matB^{-1}\cdot\begin{pmatrix}0\\-1\\-1\\\vdots\\-1\end{pmatrix}\\&=\frac{1}{n}\cdot \left(0\;\;\frac{1}{n}\;\;\frac{1}{n^2}\;\;\cdots\;\;\frac{1}{n^k}\right)\cdot(\matB')^{-1}\cdot\begin{pmatrix}0\\1/n\\1/n^2\\\vdots\\1/n^k\end{pmatrix}\end{align*}

We therefore must compute $(\matB')^{-1}$. Fortunately, the inverse is known. $\matB$ is known as the Hilbert matrix, and it's inverse is given by:

\begin{lemma}[\cite{choi83}]\label{claim:Binv}$(\matB')^{-1}_{i,i'}=(-1)^{i+i'}(i+i'+1)\binom{i+i'}{i}^2\binom{k+i}{i+i'+1}\binom{k+i'}{i+i'+1}$, where again $i,i'$ are 0-indexed.
\end{lemma}
With Lemma~\ref{claim:Binv}, we have that $|\vecz|^2=\sum_{j=2}^{2k}\frac{(-1)^j(j+1)}{n^{j+1}}\sum_i \binom{j}{i}\binom{k+i}{j+1}\binom{k+j-i}{j+1}$. We can lower-bound the sum over $i$ by 0 and upper bound it by $\sum_i\binom{j}{i}(2k)^{2(j+1)}=2^j\cdot (2k)^{2(j+1)}\leq (4k)^{2j+2}$. Thus,
\[|\vecz|^2\leq\sum_{j'=1}^k (2j'+1) \left(\frac{16k^2}{n}\right)^{2j'+1}\leq \sum_{j'=1}^\infty (2j'+1) \left(\frac{16k^2}{n}\right)^{2j'+1}=\frac{(3-\alpha^2)\alpha^3}{(1-\alpha^2)^2}\leq 12\left(\frac{16k^2}{n}\right)^3\]
where above we set $j'=j/2$ for even $j$ (the odd $j$ terms being bounded by 0), $\alpha=16k^2/n$, and we assume $\alpha\leq 1/2$.

Thus we have that $|\vecz|=O(k^3/n^{3/2})$, which in turn implies that $|\vecz|_1\leq n|\vecz|=O(k^3/n^{1/2})$. Since we have $\vecy=(1\;\;-1\;\;0\;\;\cdots\;\;0)-\vecz$, and $\vecy^+$ contains all the non-negative entries of $\vecy$, we therefore have that $\vecy^+_1\geq 1-O(k^3/n^{1/2})$, and all the remaining entries of $\vecy^+$ sum to less that $O(k^3/n^{1/2})$. Thus $|\vecy^+|_1= 1\pm O(k^2/n^{1/2})$, and so $\vecy^+_1/|\vecy^+|\geq 1-O(k^3/n^{1/2})$. Thus the distribution $D_1$ will output zero with probability at least $1-O(k^3/n^{1/2})$, as desired.\end{proof}

\paragraph{Part 2: Equivalent Output Distributions.} We next show that the output distributions are equivalent under $E_1$ and $E_2$:
\begin{lemma}\label{lem:equivoutput} Fix $H,k_0,k_1$, which in turn fixes $a(H)$. Let $z$ be any possible output of $\A$. Let $q$ be the number of queries made by $\A$, and assume $k\geq 4q$. Then $\Pr_{H'\gets E_1}[z\gets\A^{H'}(1^\secpar,a(H),k_0||k_1,1^h)]=\Pr_{H'\gets E_2}[z\gets\A^{H'}(1^\secpar,a(H),k_0||k_1,1^h)]$. In other words, output distributions of $\A^{H'}(1^\secpar,a(H),k_0||k_1,1^h)$ is identical whether $H'$ is sampled from $E_1$ or $E_2$.
\end{lemma}
\begin{proof}Our proof will use the polynomial method~\cite{JACM:BBCMW01}. Specifically, we will make use of the following formulation, shown in~\cite{C:Zhandry12}:
\begin{lemma}\label{lem:lincomb}Let $\A$ be a quantum algorithm making $q'$ quantum queries to an oracle $H:\mathcal{X}\rightarrow\mathcal{Y}$. If we draw $H$ from some distribution $D$, then for every $z$, the quantity $\Pr_{H\gets D}[z\gets \A^H()]$ is a linear combination of the quantities $\Pr_{H\gets D}[H(x_i)=r_i\forall i\in[2q']]$ for all possible settings of the $x_i$ and $r_i$. The coefficients in the linear combination are independent of the distribution $D$.
\end{lemma}
In the case $\mathcal{Y}=\{0,1\}$, by inclusion-exclusion, we can in turn write the quantities $\Pr_{H\gets D}[H(x_i)=r_i\forall i\in\{1,\cdots 2q'\}]$ as linear combinations of the quantities $\Pr_{H\gets D}[H(x_i)=1\forall i\in[k]]$ for all possible $k\leq 2q'$. 

We abuse notation, and let $S$ also denote the membership oracle for $S$, namely $S(s)=1$ if and only if $s\in S$. Now consider the distributions $D_1,D_2$, which induce distributions over $S$ that we will call $S_1$ and $S_2$, respectively. These in turn induce distributions $E_1,E_2$ over $H'$. Consider the algorithm that simulates $\A^{H'}$ by making queries to $S$, where $S$ is drawn from either $S_1$ or $S_2$, meaning that $H'$ is drawn from either $E_1$ or $E_2$. This simulation must make two queries to $S$ for each query $\A$ makes to $H'$: one to compute whether $s\in S$, and then one to un-compute the value at the end of the query. Thus, if $\A$ makes $q$ queries, the total number of queries the simulation makes to $S$ is $q'=2q$. Observe also that $S$ is independent of $H,k_0,k_1,a(H)$. Thus, after fixing $H,k_0,k_1,a(H)$, we can apply Lemma~\ref{lem:lincomb} to the simulation of $\A$, and see that the probability $\A$ outputs any given value $z$ is a linear combination of $\Pr_S[S(s_i)=1\forall i\in[k']]$ for $k'\leq 4q\leq k$, where the coefficients of the linear combination are independent of the distribution over $S$. It suffices, therefore, to prove that for all $k'\leq k$ and for all $s_1,\cdots,s_{k'}$, that 
\[\Pr_{S\gets S_1}[S(s_i)=1\forall i\in[k']]=\Pr_{S\gets S_2}[S(s_i)=1\forall i\in[k']]\]
Toward that end, we observe that, for any $s_1,\cdots,s_{k'}$, the event $S(s_i)=1\forall i\in[k']$ means that each $s_i\in S$. For a given size $\ell$ of $S$, there are $\binom{2^\lambda-k'}{\ell-k'}$ ways to choose such an $S$. Since for both $S_1,S_2$ we have that $S$ is uniform once we choose $\ell$, this means that for a given $\ell$, 
\[\Pr_S[S(s_i)=1\forall i\in[k']]=\binom{2^\lambda-k'}{\ell-k'}/\binom{2^\lambda}{\ell}=\frac{(2^\lambda-k')!\ell!}{(2^\lambda)!(\ell-k')!}=\frac{(2^\lambda-k')!}{(2^\lambda)!}\ell(\ell-1)\cdots(\ell-k'+1),\] which is a polynomial in $\ell$ of degree at most $k'\leq k$. 

This in turn means the probability of any outcome $z$, once we have fixed $z$, is a polynomial $p_z$ in $\ell$ of degree at most $k$. Averaging over all $\ell$, the probability of outcome $z$ is $\sum_\ell \Pr[\ell]p_z(\ell)$. We must therefore show that $\sum_\ell \Pr[\ell\gets D_1]p_z(\ell)=\sum_\ell \Pr[\ell\gets D_2]p_z(\ell)$, for which is suffices to show that $\sum_\ell (\Pr[\ell\gets D_1]-\Pr[\ell\gets D_2])\ell^j=0$ for all $j\in[0,k]$. Recall that $\ell$ is always a multiple of $d$, so this is equivalent to showing $\sum_i (\Pr[i\cdot d\gets D_1]-\Pr[i\cdot d\gets D_2])(i\cdot d)^j=0$ 

We now observe that $\vecy$ is in the kernel of $\matA$, meaning the sum of its components is 0. As such, we must have that $|\vecy^+|_1=|\vecy^-|_1=:R$. Therefore, when we re-normalize $\vecy^+$ and $\vecy^-$ to get the distributions $D_1,D_2$, the re-normalization is the same in both cases: dividing by $R$. Thus $\vecy_i/R=\vecy^+_i/R-\vecy^-_i/R=\Pr[i\cdot d\gets D_1]-\Pr[i\cdot d\gets D_2]$, meaning
$\sum_i (\Pr[i\cdot d\gets D_1]-\Pr[i\cdot d\gets D_2])(i\cdot d)^j=\frac{d^j}{R} (\matA\cdot \vecy)_j=0$, as desired.
\end{proof}

\paragraph{Part 3: Statistical independence for $E_2$.} Here, we show that when $H'$ is sampled from $E_2$, but when the adversary is still provided the advice $a(H)$, then for most choices of the salt $k_1$, $H(k_1||\cdot)$ is statistically close to uniform even given $a(H)$ and $H'$.

Let $H(s||\cdot)$ denote the slice of the truth table of $H$ corresponding to salt $s$. Let $\overline{H}(s||\cdot)$ denote the remaining truth table not included in $H(s||\cdot)$.

\begin{lemma}\label{lem:statrandom} Consider sampling a uniform $H$, and then sampling $H'\gets E_2$ and letting $k_1\gets\{0,1\}^\lambda\setminus S$. Then the distributions $(a(H),k_1,H(k_1||\cdot),\overline{H'}(k_1||\cdot)$ and $(a(H),k_1,R,\overline{H'}(k_1||\cdot)$ are $\sqrt{|a(H)|/2d}$-close in statistical distance.
\end{lemma}
\begin{proof}
In order to prove Lemma~\ref{lem:statrandom}, we will need the following technical lemma:

\begin{lemma}\label{lem:stats} Let $D$ be a distribution and $X_1,\dots,X_g,Y$ be iid random variables sampled from $D$. Let $F$ be a function with co-domain of size $2^r$. Then 
\[\Delta(\;(\calI,X_\calI,F(X_1,\dots,X_g))\;,\;(\calI,Y,F(X_1,\dots,X_g))\;)\leq \sqrt{r/2g}\]
Above, $\calI$ is uniform in $[g]$, and $\Delta$ denotes statistical distance.
\end{lemma}
\begin{proof}Let $I(X;Y)$ denote the mutual information between random variables $X$ and $Y$. Then
	\[r\geq I(\;F(X_1,\dots,X_g)\;;\;X_1,\dots,X_t\;)\geq \sum_{i=1}^g I(\;F(X_1,\dots,X_g)\;;\; X_i\;)\]
where the second inequality is due to the independence of the $X_i$. Let $\delta_i$ be the statistical distance between the distributions $(F(X_1,\dots,X_g),X_i)$ and $(F(X_1,\dots,X_g),Y)$. Let $\delta$ be the statistical distance between $(\calI,X_\calI,F(X_1,\dots,X_g))$ and $(\calI,Y,F(X_1,\dots,X_g))$; our goal is to bound $\delta$. $I(\;F(X_1,\dots,X_g)\;;\; X_i\;)$ is just the KL divergence between $(F(X_1,\dots,X_g),X_i)$ and $(F(X_1,\dots,X_g),Y)$. By Pinsker's inequality, we therefore have that $\delta_i\leq \sqrt{I(\;F(X_1,\dots,X_t)\;;\; X_i\;)/2}$. This implies
\[r\geq 2\sum_{i=1}^g \delta_i^2\]
On the other hand, $\delta=(\sum_i \delta_i)/g$. Jensen's inequality then gives that
\[\delta\leq \sqrt{\sum_i \delta_i^2/g}\leq \sqrt{r/2g}\qedhere\]	
\end{proof}

We now apply Lemma~\ref{lem:stats} to our setting. Consider sampling a random $S$ of size $\ell$ where $\ell$ is sampled from $D_2$. $D_2$ only has support on $\ell$ of size at least $d$. Now consider sampling a random $k_1\notin S$. It is equivalent to sample a random set $S'$ of size $\ell+1$, and then let $k_1$ be uniform in $S'$, and $S=S'\setminus\{k_1\}$.

Therefore let $g=\ell+1$, and let $X_1,\cdots,X_g$ denote the slices $H(s||\cdot)$ of the truth table of $H$, for $s\in S\cup\{k_1\}$. Now fix $H(s||\cdot)$ for $s\notin S\cup\{k_1\}$; call this partial truth table $H_{\sf part}$. Let $F$ be the function from $X_1,\cdots,X_g$ which computes $a(H)$ ($H$ being fully specified by $H_{\sf part}$ together with $X_1,\cdots X_g$). Lemma~\ref{lem:stats} now says that the tuples $(k_1,H(k_1||\cdot),a(H))$ and $(k_1,R,a(H))$ are $\sqrt{|a(H)|/2d}$-close given $H_{\sf part}$, where $R$ is a independent uniform truth table. To complete the proof of Lemma~\ref{lem:statrandom}, we simply observe that $\overline{H'}(k_1||\cdots)$ consists of $H_{\sf part}$ together with $H'(s||\cdots)$ for $s\in S$. But recall that for $s\in S$, we set $H'(s||\cdots)=J(s||\cdots)$ where $J$ is an independent random oracle, meaning all information about $H(s||\cdots)$ is erased from $H'$. Therefore, even conditioned on $\overline{H'}(k_1||\cdots)$, the tuples $(k_1,H(k_1||\cdot),a(H))$ and $(k_1,R,a(H))$ remain statistically close. Averaging over all choices of $\overline{H'}$ gives the lemma. \end{proof}

\paragraph{Part 4: Putting it all together.} We now put everything together, obtaining an adversary for $\prove_0,\verify_0$. To create our adversary $\B$, we do the following:
\begin{itemize}
    \item Choose a random $H$ and compute $a(H)$.
    \item Choose a random set $S$ from $D_2$. Choose a random $J$ and compute $H'$.
    \item Choose a random $k_1\in \{0,1\}^\lambda\setminus S$.
\end{itemize}
We will fix $H,a(H),S,k_1,H'$ in the description of $\B$; alternatively we could imagine $\B$ choosing the $H,a(H),S,k_1,H'$ which maximize its success probability.

$\B^{H_0}(1^\secpar,k_0,1^h)$ runs $\A^{H''}(1^\secpar,a(H),k_0||k_1,1^h)$ and outputs whatever $\A$ outputs, where $H_0$ is the random oracle $\B$ is given, and $H''$ is the oracle:
\[H''(s,x)=\begin{cases}H'(s,x)&\text{ if }s\neq k_1\\H_0(x)&\text{ if }s=k_1\end{cases}\]

\begin{lemma}With non-negligible probability over the choice of $H,a(H),S,k_1,H'$ as sampled above, there is a non-negligible $\delta'$ such that the following is true:
\begin{align}
    \Pr[\verify_0^{H_0}(1^\secpar,k_0,1^h,\B^{H_0}(1^\secpar,k_0,1^h))\neq\bot]&\geq\delta'(\lambda)\label{nonuniform1}\\
    H_\infty\left(\B^{H_0}_{\top}(1^\secpar,k_0,1^h)\right)&\leq h+1\label{nonuniform2}
\end{align}
where the probabilities above are taken over the choice of uniform $H_0,k_0$. In particular, there exists such a choice of $H,a(H),S,k_1,H'$ which makes $\B$ break the security of $\prove_0,\verify_0$.
\end{lemma}
\noindent This lemma therefore completes the proof of Theorem~\ref{thm:randomness_uniform_to_non-uniform}.
\begin{proof}We first consider setting $H_0$ to be $H'(k_1||\cdot)$. In this case, $H''=H'$ so $\B$ runs $\A$ on $H'$, and by Lemmas~\ref{lem:smallentropychange} and ~\ref{lem:equivoutput}, the entropy of the output of $\A$ and hence $\B$ is less than $h+1$ with non-negligible probability over the choice of $H,a(H),S,k_1,H'$. 

Now we actually set $\B$'s oracle to $H_0$. By Lemma~\ref{lem:statrandom}, $H_0$ and $H'(k_1||\cdot)$ are statistically close, even given $a(H),S,k_1,\overline{H'}(k_1||\cdot)$. Since the min-entropy of $\B$ is a property of the oracle it sees (and $k_0$), even after changing to $H_0$, the probability $\B$'s entropy is less than $h+1$ is only negligibly affected, and is hence still non-negligible.
\end{proof}
This completes the proof of Theorem~\ref{thm:randomness_uniform_to_non-uniform}.\end{proof}

\bibliographystyle{alpha}
\bibliography{abbrev3,crypto,reference}
\appendix 
\end{document}